\numberwithin{equation}{section}
\newtheorem{theorem}{Theorem}[section]
\newtheorem{corollary}[theorem]{Corollary}
\newtheorem{lemma}[theorem]{Lemma}
\newtheorem{proposition}[theorem]{Proposition}
\newtheorem{remark}[theorem]{Remark}
\newcommand{\N}{\mathbb{N}}
\newcommand{\R}{\mathbb{R}}
\newcommand{\calL}{\mathcal{L}}
\newcommand{\calO}{\mathcal{O}}
\newcommand{\st}{\mid}
\newcommand{\Exp}{\mathrm{E}}
\begin{document}

\title{Higher energy state approximations in the `Many Interacting Worlds' model}

\begin{abstract}
  In the `Many Interacting Worlds' (MIW) discrete Hamiltonian system approximation of 
  Schr\"odinger's wave equation, introduced in \cite{hall_2014}, convergence of ground states to the Normal ground state of the quantum harmonic oscillator, via Stein's method, in Wasserstein-$1$ distance with rate $\calO(\sqrt{\log N}/N)$ has been shown \cite{mckeague_levin_2016}, \cite{chen_thanh_2020}, \cite{mckeague_swan_2021}.  In this context, we construct approximate higher energy states of the MIW system, and show their convergence with the same rate in Wasserstein-$1$ distance to higher energy states of the quantum harmonic oscillator.  
  In terms of techniques, we apply the `differential equation' approach to Stein's method, which allows to handle behavior near zeros of the higher energy states.       
\end{abstract}

\subjclass[2020]{60F05, 81Q65}

\keywords{many interacting worlds, Schr\"odinger's equation, quantum harmonic oscillator, ground state, Maxwellian, higher energy states, hermite polynomials}

\author{Alex Loomis}
\address{Department of Mathematics\\
  University of Arizona\\
  621 N. Santa Rita Ave.\\
Tucson, AZ 85750, USA}
\email{{\tt atloomis@math.arizona.edu}}

\author{Sunder Sethuraman}
\address{Department of Mathematics\\
  University of Arizona\\
  621 N. Santa Rita Ave.\\
Tucson, AZ 85750, USA}
\email{{\tt sethuram@math.arizona.edu}}

\maketitle


\section{Introduction}
Consider the recent `Many Interacting Worlds' (MIW) discrete Hamiltonian system approximation \cite{hall_2014} of the Schr\"odinger equation
\begin{align}
  \label{Sch}
  i\hbar\partial_{t} \psi(t,x) = -\frac{\hbar^2}{2m} \partial_{xx}\psi(t,x) + x^2 \psi(t,x),
\end{align}
governing the wave function in a one dimensional quantum mechanical system.
Namely, following de Broglie-Bohm's interpretation \cite{bohm_1951} of \eqref{Sch}, write 
$\psi = \sqrt{P}e^{iS/\hbar}$ in terms of a probability density $P$ of the location of a particle and its momentum $\partial_{x} S$, where
\begin{align}
  \label{Sch_sys}
  \partial_t P + \partial_x \big(P(\partial_x S)\big) &= 0, \\
  \partial_t S + \frac{1}{2m} (\partial_x S)^2 + x^2 - \frac{\hbar^2}{4m}
  \Big({\partial_{xx} P}/{P} - \frac 12 \big({\partial_x P}/{P}\big)^2 \Big)
                                                      &= 0. \nonumber
\end{align}
As discussed in \cite{bohm_1951}, one may interpret \eqref{Sch_sys} via a Hamiltonian system where the average energy
\begin{align}
  \label{Bohm H}
  \bar H= \int P(x) \left( \frac{1}{2m} (\partial_x S)^2 + x^2
  + \frac{\hbar^2}{8m} \left(\frac{\partial_x P}{P}\right)^2 \right)\ dx.
\end{align}

In the MIW approach \cite{hall_2014}, a discretized Hamiltonian
\begin{align*}
  \bar H_{\text{MIW}}
  = \sum_{n=1}^N \frac{1}{2m} p_n^2
  + \sum_{n=1}^N x_n^2
  + \sum_{n=1}^N \frac{\hbar^2}{8m} \left(\frac{\partial_x P(x_n)}{P(x_n)}\right)^2
\end{align*}
is formulated where $(x_n)_{n=1}^N$, with boundary conditions $x_0=-\infty$ and $x_{N+1}=\infty$, are the locations of $N$ `world' particles, and
$(p_n)_{n=1}^N$ are their momenta.  
Therefore,
the average energy $\frac{1}{N}\bar H_{\text{MIW}}$ formally approximates $\bar H$.
We remark that
$U(x_n) = \big({\partial_x P(x_n)}/{P(x_n)}\big)^2$
represents a discretized form of Bohm's `quantum mechanical' potential
in reference to a `classical' $V(x_n) = x_n^2$ potential.

In \cite{hall_2014}, the sequence $(x_n)_{n=1}^N$ is taken
according to the mass $1/(N+1)$ quantiles of $P$,
that is when $\int_{x_n}^{x_{n+1}}P(u)\ du = 1/(N+1)$ for $0\leq n\leq N$.
For large $N$, assuming that $P$ is smooth,
the gaps $x_{n+1}-x_n$ should vanish, and so
  $P(x_n)
  \approx \frac{1}{(N+1)(x_{n+1} - x_{n})}$.

Moreover, the derivative
$\partial_x P(x_n)\approx \frac{P(x_{n+1}) - P(x_n)}{x_{n+1} - x_n}$.
Therefore,
\begin{align}
  \label{f'/f}
  \frac{\partial_x P(x_n)}{P(x_n)}
 &\approx (N+1)\partial_x P(x_n)(x_{n+1} - x_n) \nonumber\\
 &\approx (N+1)(P(x_{n+1})-P(x_n)) \approx \frac{1}{x_{n+1} - x_n} - \frac{1}{x_n - x_{n-1}}.
\end{align}
Note that interestingly the scale parameter $N$ cancels here.

Putting this approximation into $U(t,x_n)$,
and choosing units so that each `world' particle has unit mass,
and so that $\hbar^2 = 8m$,
we will work with the discrete Hamiltonian 
\begin{align}\label{eq:ur-hamiltonian}
  H_{\text{MIW}}
  = \sum_{n=1}^N \frac 12 p_n^2
  + \sum_{n=1}^N x_n^2
  + \sum_{n=1}^N \Big(\frac{1}{x_{n+1} - x_n} - \frac{1}{x_n - x_{n-1}}\Big)^2,
\end{align}
through which the
time evolution of the MIW system $(x_n=x_n(t))_{n=1}^N$, $(p_n = p_n(t))_{n=1}^N$, with $x_0=-\infty$, $x_{N+1}=\infty$ and $N\geq 2$,
is given by
\begin{align}
  \label{time_evo}
  p_n= \frac{d x_n}{dt} = \frac{\partial}{\partial p_n}{H_{\text{MIW}}}
  \quad\text{and}\quad
  \frac{d p_n}{dt} = -\frac{\partial}{\partial x_n}{H_{\text{MIW}}}.
\end{align}
We will say that a stationary configuration $(x_n)_{n=1}^N$ is one in which
$p_n=\frac{dx_n}{dt} = \frac{dp_n}{dt} = 0$
for all $1 \leq n \leq N$.

Figure \ref{fig:arbitrary} shows the
numerical simulation of the nonstationary dynamics of a five particle system
starting at the given initial positions,
and $x_0 = -\infty$
and $x_{6} = \infty$.
\begin{figure}
    \begin{center}
        \includegraphics[scale = 0.25]{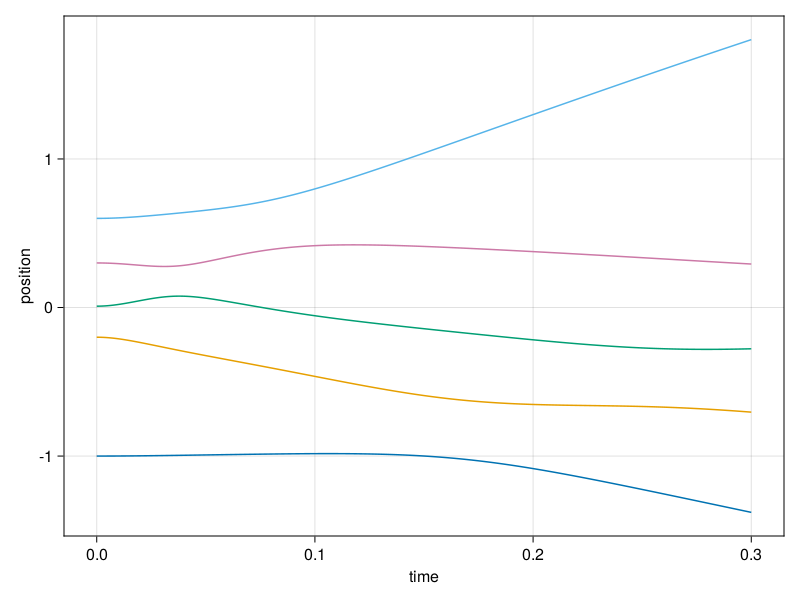}
        \caption{A simulation of MIW $N=5$ dynamics.}
        \label{fig:arbitrary}
      \end{center}
   \end{figure}

We comment that other discretizations of $U$,
involving more involved Taylor expansions,
have been considered in \cite{ghadimi_2018}.
We also mention higher dimensional MIW dynamics
have been considered in works \cite{herrmann_2018}, \cite{mckeague_swan_2021}, \cite{sturniolo_2018}.
However, in the following, we will focus on
the one dimensional formulation given above as it will allow some mathematical analysis.

\subsection{General aims}
Perhaps the main question at this point is whether indeed
the discrete evolution with respect to Hamiltonian \eqref{eq:ur-hamiltonian}
approximates the continuous one in terms of \eqref{Bohm H}.
In this generality, this is an open question.

However, previous work has considered certain `ground state' approximations.
A ground state configuration $(x_n)_{n=1}^N$ is an increasing sequence on which
$H_{\text{MIW}}$ is minimized.
In such a state,  the system is stationary:
$p_n=\partial_t x_n = 0$ for $1\leq n\leq N$.
In this time-independent setting, we observe
\begin{align}
  \label{time-indep-H}
  H_{\text{MIW}}=H =  \sum_{n=1}^N x_n^2
  + \sum_{n=1}^N \left(\frac{1}{x_{n+1} - x_n} - \frac{1}{x_n - x_{n-1}}\right)^2.
\end{align}
Following \cite{hall_2014}, consider
$A = \sum_{n=2}^N (x_{n-1}-x_n) \frac{1}{x_n - x_{n-1}} = -(N-1)$.
Summation by parts, noting $x_0=-\infty$, $x_{N+1} = \infty$ and $N\geq 2$, together with the Cauchy-Schwartz inequality, yields
\begin{align}\label{eq:ineq}
  \frac{A^2}{\sum_{n=1}^N x_n^2}
  \leq \sum_{n=1}^N {\Big(\frac{1}{x_{n+1} - x_n} - \frac{1}{x_n - x_{n-1}}\Big)}^2,
\end{align}
with equality if and only if
there is a constant $c$ such that, for all $1\leq n\leq N$,
\begin{align}\label{cond:eq}
  -x_n = c \Big(\frac{1}{x_{n+1} - x_n} - \frac{1}{x_n - x_{n-1}}\Big).
\end{align}

Hence,
$H
\geq  V+ A^2/V$,
where $V = \sum_{n=1}^N x_n^2$.  Optimizing over $V$, we get that $H\geq 2|A| = 2(N-1)$,
with equality exactly when
\eqref{cond:eq} holds
and  $V = N-1$.
Moreover, in this case,
by summing the squares of both sides of \eqref{cond:eq}, we have
$V
= c^2 \sum_{n=1}^N \big(\frac{1}{x_{n+1} - x_n} - \frac{1}{x_n - x_{n-1}}\big)^2
= c^2 \frac{A^2}{V}$,
and so $c^2 = 1$, as $V=|A|=N-1$.
Since the sign of $c$ determines only
whether the order of $(x_n)_{n=1}^N$ is reversed,
one may choose $c = 1$.

In particular, the corresponding sequence
$(x_n)_{n=1}^N$ satisfying
\begin{align*}
  \frac{1}{x_{n+1} - x_n} - \frac{1}{x_n - x_{n-1}} = -x_n,
\end{align*}with boundary conditions
$\frac{1}{x_2-x_1} = -x_1$ and $\frac{1}{x_N-x_{N-1}} = x_N$,
as $x_0=-\infty$ and $x_{N+1}=\infty$, is the unique increasing minimum of $H$, and is therefore a ground state.  Although the sequence values $x_n = x_n^{(N)}$ depend on $N$, to simplify notation, we will suppress the superscript in the following.

A question was posed in \cite{hall_2014} whether the empirical distribution of these ground states $(x_n)_{n=1}^N$ converged as $N\uparrow\infty$ to the standard Normal distribution, that is the ground state of the time-homogeneous Schr\"odinger's equation, namely that of the quantum harmonic oscillator.  This was recently resolved in the affirmative in \cite{mckeague_levin_2016, chen_thanh_2020}; see also \cite{mckeague_swan_2021}.

Indeed, we were inspired by the
convergence shown there in Wasserstein-$1$ distance, with upper and lower rates of order $\frac{\sqrt{log N}}{N}$, in
\cite{mckeague_levin_2016,chen_thanh_2020}, via a form of Stein's method (cf. surveys \cite{Chatterjee}, \cite{Ross}).

\subsection{Informal statement of results}
In this context, the goal of this paper is to study certain approximations of higher energy critical sequences
of $H$ in \eqref{time-indep-H}, 
and to investigate their convergence in Wasserstein-$1$ distance to higher energy states
of the continuous quantum harmonic oscillator system.
After the Normal ground state, the next continuous energy state of the quantum harmonic oscillator
is the Maxwellian distribution.
Higher continuous energy states, described later, can be defined
in terms of the Hermite polynomials.

We comment that in \cite{mckeague_2019},
empirical distributions of sequences, seen as ground states of related systems but
with different quantum potentials $U$,
were shown to converge in Wasserstein-$1$ distance to the Maxwellian; see also \cite{chen_wang_2023} which considers ground state sequences for systems with `Coulomb' potential $V$ and an associated quantum potential $U$.  However, as remarked in \cite{mckeague_2019}, to show validity of the MIW model,
one would like to show such convergences with respect to `critical' sequences
corresponding to critical points of $H$
to the Maxwellian and other higher energy continuum states.  
A difficulty is that it does not seem so facile to analyze the exact critical points of $H$, where $\nabla H = 0$,
other than that corresponding to the global minimum.  Our work in this respect will be to consider certain `approximations' of these critical points.

Informally, our results include the following:

\begin{itemize}
  \item[(a)] In Theorem \ref{thm:exist-unique},
    we show existence and uniqueness of specific potential approximations
    of critical points, which we will call MIW sequences.  
  \item[(b)] The `no-gap' property and spanning order of the MIW sequence,
    that is that $x_{n+1}-x_n$ vanishes
    and $x_1, x_N = \calO(\sqrt{\log N})$ as $N\uparrow\infty$,
    is given in Theorem \ref{thm:gaps}.
  \item[(c)] We show in Theorem \ref{thm:converge},
    via a form of Stein's method,
    that the empirical distributions of these sequences
    converge in Wasserstein-$1$ distance
    with rate $O(\sqrt{\log N}/N)$ to the Maxwellian
    and the other higher energy continuum state distributions.  
  \item[(d)]Finally, in Theorem \ref{thm:grad}, we show that the gradient of $H$ on these sequences vanishes in a sense pointwise as $N\uparrow\infty$, and therefore these sequences may be viewed as approximate `critical' points of $H$. 
\end{itemize}

\subsection{MIW sequences and higher energy functions}\label{sec:hdw}

The sequences we consider are of the following type.
Let $f$ be a smooth, nonnegative function on $\R$.
We will say an MIW sequence of $f$ is a
a strictly increasing sequence $(x_n)_{n=1}^N$
on $\R$, away from the zeroes of $f$, satisfying the relation
\begin{align}\label{eq:hdw-seq}
  \frac{1}{x_{n+1} - x_n} - \frac{1}{x_n - x_{n-1}}
  = \frac{f'(x_n)}{f(x_n)}
\end{align}
for $1 < n < N$.
The rationale for such a formulation is that \eqref{eq:hdw-seq}
is a type of `quantile' approximation, similar to \eqref{f'/f},
where $P'(x_n)/P(x_n)$ is replaced by $f'(x_n)/f(x_n)$.
Note that we have chosen $(x_n)_{n=1}^N$ as increasing,
although an alternate definition where the sequence is decreasing
could also be used as in \cite{mckeague_levin_2016}.

We will say that an MIW sequence $(x_n)_{n=1}^N$, contained in an possibly infinite interval $(a,b)$, satisfies the `left' and `right' boundary conditions at $x_0=a$ and $x_{N+1}=b$ if
$\frac{1}{x_2 - x_1} - \frac{1}{x_1 - a} = \frac{f'(x_1)}{f(x_1)}$ and
$\frac{1}{b - x_N} - \frac{1}{x_N - x_{N-1}} = \frac{f'(x_N)}{f(x_N)}$,
respectively hold and are well defined.
Note in the case $N=1$ that the sequence $(x_1)$ is well-defined
exactly when one or both of $a,b$ is finite.
With a right boundary condition of $b=\infty$,
by summing \eqref{eq:hdw-seq} over $n$ and taking the reciprocal, the MIW sequence satisfies
\begin{align}
\label{MIW-relation1}
x_{n+1} = x_{n} -\left(\sum_{k=n+1}^N \frac{f'(x_k)}{f(x_k)}\right)^{-1},
\end{align}
for $1\leq n<N$.  In this generality, it is not clear that such MIW sequences exist or are unique, even when $f$ is strictly positive!  See Appendix \ref{non-exist-sect} for some counter-examples.

However, in this article, we will be concerned with stable states $f$ of the quantum harmonic oscillator, for which we show MIW sequences may be constructed.
For $\ell\geq 0$, we define the functions
$f(x) = c p_\ell(x)^2 e^{-\frac 12 x^2}$
as higher energy functions, where $c \in \R^+$ is a normalization to make $f$ a probability density.  Here, $p_\ell$ is the $\ell$th Hermite polynomial,
$p_\ell(x) = (-1)^\ell e^{\frac 12 x^2} \frac{d^\ell}{dx^\ell} e^{-\frac 12 x^2}$.
Alternatively,
the Hermite polynomials may be specified in terms of the differential equation
$p_\ell''(x) = xp_\ell'(x) - \ell p_\ell(x)$.

We say the order of $f(x)$ is the order of $p_\ell(x)$, namely $\ell\geq 0$.
Note that $p_0 = 1$,
in which case $f(x) = \frac{1}{\sqrt{2\pi}} e^{-\frac 12 x^2}$
is the standard normal density.
As well, $p_1 = x^2$,
in which case $f(x) = \sqrt{\frac{8}{\pi}} x^2 e^{-\frac 12 x^2}$
is the Maxwellian distribution density.
We observe that the $\ell$th order higher energy function $f$ has exactly $\ell$ roots (Lemma \ref{lem:dbl-zro}), and is symmetric and log-concave (Lemma \ref{lem:higher-order}).

\subsection{Proof ideas for the main theorems}
\label{proof-sect}

Our method of constructing MIW sequences with respect to
$\ell$th order higher energy function $f$
in the proof of Theorem \ref{thm:exist-unique}
will be to create subsequences that lie entirely within
the $\ell+1$ regions of strict positivity of $f$.
These subsequences, with specified numbers of points in each region,
will be created so that their boundary conditions
allow them to be strung together to form the desired MIW sequence.

The strategy of this construction follows partly
the path taken in \cite{mckeague_levin_2016} with respect to the Normal density
(the order zero strictly positive higher energy function),
although there are new difficulties and differences,
since higher energy functions of order $\ell\geq 1$ have $\ell$ roots to be bridged.
We will define functions $(\chi_n(x))_{n\geq 1}$ inductively
on appropriate domains in the $\ell +1$ regions of strict positivity of $f$, where
$\chi_1(x) = x$, and
\[
  \frac{1}{\chi_{n+1}(x) - \chi_{n}(x)} - \frac{1}{\chi_n(x) - \chi_{n-1}(x)}
  = \frac{f'(\chi_n(x))}{f(\chi_n(x))},
\]
so that by adjusting the value $x$, the sequence $(\chi_n(x))_{n=1}^N$ will be an MIW sequence with appropriate left and right boundary conditions $\chi_0(x)$ and $\chi_{N+1}(x)$.
We derive uniqueness of the MIW sequence, with prescribed numbers of points in each region of strict positivity of $f$, via this construction, as sequences with different initial values, which determine the other points,
must satisfy different boundary conditions.

An ingredient for the proofs of the Wasserstein-$1$ convergence and stability in Theorems \ref{thm:converge} and \ref{thm:grad}, as well as intrinsic interest, will be to describe in Theorem \ref{thm:gaps} that the constructed MIW sequences have a 
`no gap' property in that the maximal gap between points $x_n$ and $x_{n+1}$ for $1\leq n\leq N-1$ vanishes as $N\uparrow\infty$.  This will be a consequence of the MIW recursion that $(x_n)_{n=1}^N$ satisfies, and that gaps near the roots and at extremities, which we show are the largest, vanish as the number of points in the strictly positive intervals diverge as $N\uparrow\infty$.   We will also specify $O\big(\sqrt{\log N}\big)$ asymptotics of $x_1, x_N$ as $N\uparrow\infty$, generalizing estimates in \cite{chen_thanh_2020} in the case of Normal density.

Proofs of convergence in Wasserstein-$1$ distance, with bounding rate of $\calO\big(\sqrt{\log N}/N\big)$, of the empirical measures of the MIW sequences constructed in Theorem \ref{thm:converge} to distributions with higher energy densities $f$
follow the general Stein's differential equation approach \cite{Chatterjee-Shao}, \cite{chen_2011}, \cite{stein_1986} as opposed to methods involving zero-bias distributions (cf. \cite{goldstein}) as in 
\cite{mckeague_levin_2016, chen_thanh_2020} with respect to the Normal density, and \cite{mckeague_2019} with respect to the Maxwellian density. See also \cite{mckeague_swan_2021}[Example 3.7] where a form of the differential Stein equation approach, fashioned to estimate directly between the empirical measure and the higher energy distribution, was used to show similar convergence with respect to the Normal density.  We mention  \cite{mckeague_swan_2021}[Theorem 3.4] contains abstract Wasserstein-$1$ distance bounds in this vein
between empirical measures of symmetric monotone sequences and distributions with density proportional to $b(x)e^{-\frac{1}{2}x^2}$ where $b$ is nonnegative with $b(x)>0$ for $x\neq 0$, although these do not apply in our general context.

A main difficulty to surmount is
that the higher energy densities of order $\ell\geq 1$ have roots where singularities arise in the method.
In this respect, we employ an `intermediate' continuous distribution as in \cite{mckeague_levin_2016, chen_thanh_2020} to bound the distance between it and the empirical measure.  We then apply a suitable form of the differential equation approach to bound the distance between the `intermediate' and higher energy distributions.  This requires a detailed analysis of Stein method bounding terms, near zeros of $f$ and at $\pm \infty$, that we provide.

Our tack, as in the existence and uniqueness arguments, is to separate $\R$ into intervals of strict positivity, bounded by the zeros of $f$,
and to bound Wasserstein-$1$ distances between empirical distributions and distributions with density proportional to $f$ restricted to each interval.  We obtain bounds in terms of of order $(x_N-x_1)/N$ where $x_1, x_N$ are the first and last points in the interval.
When put together, we obtain the desired rate of convergence in Wasserstein-$1$ distance of the whole sequence on $\R$.

Finally, by explicit computation of the gradient of $H$ on MIW sequences,
we will show in the proof of Theorem \ref{thm:grad} that,
for every point $t$ away from zeros of $f$,
the partial derivative of $H$ with respect to the closest point $x_n$
in the MIW sequence constructed
converges to $\big(t^2 - (f'(t)/f(t))^2 - 2[f'(t)/f(t)]'\big)'$ as $N \to \infty$.
However, this limit vanishes exactly when
$f$ solves $t^2 - (f'(t)/f(t))^2 - 2[f'(t)/f(t)]' = E$,
where $E$ is a constant,
that is the time-independent Schr\"odinger equation
or quantum harmonic oscillator system, $-4\psi''(t) + t^2\psi(t) = E\psi(t)$ with $\psi(t) = \sqrt{f(t)}$, in our units.
This happens exactly when $f$ is a higher energy density.
Therefore in this sense MIW sequences are approximately stable in the limit.

\subsection{Plan of the article}
After stating results and related remarks in Section \ref{results-sect},
we give the proofs of Theorems \ref{thm:exist-unique}, \ref{thm:gaps}, \ref{thm:converge}, and \ref{thm:grad}
in Sections \ref{exist-sect}, \ref{gap_sect}, \ref{converge-sect}, and \ref{grad-sect},
referring at times to an appendix with technical lemmas.

\section{Results}\label{results-sect}
Let $f$ be a higher energy function of order $\ell\geq 0$,
and when $\ell\geq 1$ let $r_1 < \cdots < r_\ell$ be its roots.
Choose $N_k \geq 1$ for $0 \leq k \leq \ell$ when $\ell\geq 1$, and $N=N_0\geq 2$ when $\ell=0$.
Denote the regions of strict positivity of $f$ by $R_0$ for the region $(-\infty, r_1)$,
by $R_k$ for the region $(r_k, r_{k+1})$ when $1 \leq k < \ell$,
and by $R_{\ell}$ for the region $(r_{\ell}, \infty)$.

In the following results, we will include the case $\ell=0$ with respect to the Normal densities to be complete although, as discussed earlier, parts of Theorems \ref{thm:exist-unique}, \ref{thm:gaps}, \ref{thm:converge} appear in \cite{mckeague_levin_2016}, \cite{chen_thanh_2020}, \cite{mckeague_swan_2021}.  We note the proof method for Theorem \ref{thm:converge} for $\ell=0$ differs from previous arguments.

We now state there is a unique MIW sequence matched to '$\mp \infty$' boundary conditions.
\begin{theorem}\label{thm:exist-unique}
  There is a unique MIW sequence $(x_n)_{n=1}^N$ of $f$ that satisfies
  the left boundary condition at $x_0=-\infty$,
  and the right boundary condition at $x_{N+1}=\infty$,
  with $N_k$ points that lie in the region $R_k$
  for each $0 \leq k \leq \ell$, and $N = \sum_{k=0}^{\ell}N_k$.
\end{theorem}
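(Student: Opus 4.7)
The plan is to build the MIW sequence by constructing subsequences within each region $R_k$ of strict positivity of $f$ and then stringing them together, as sketched in Section \ref{proof-sect}. Concretely, for each region $R_k$ with endpoints $a_k, b_k$ (either roots of $f$ or $\pm\infty$), I would parametrize candidate subsequences by their initial point $x \in R_k$. Set $\chi_1(x) = x$ on an appropriate open subinterval of $R_k$, use the local left boundary condition at $a_k$ to determine $\chi_2(x)$ via $\frac{1}{\chi_2(x) - x} - \frac{1}{x - a_k} = f'(x)/f(x)$, and then iteratively define
\[
  \chi_{n+1}(x) = \chi_n(x) + \left(\frac{f'(\chi_n(x))}{f(\chi_n(x))} + \frac{1}{\chi_n(x) - \chi_{n-1}(x)}\right)^{-1}
\]
from the MIW recursion \eqref{eq:hdw-seq}. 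The appropriate domain of $x$ is the open set for which each iterate remains in $R_k$, each denominator is positive, and the sequence is strictly increasing up through index $N_k$.

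The core task per region is to find a unique $x$ with $\chi_{N_k+1}(x) = b_k$, so that the subsequence has the correct left and right boundary conditions. I would show that $x \mapsto \chi_{N_k+1}(x)$ is continuous and strictly monotonic on its domain, with limits at the two endpoints of the domain that straddle $b_k$. Monotonicity should be established by induction on $n$: differentiating the recursion expressing $\chi_{n+1}$ in terms of $\chi_{n-1}, \chi_n$ and invoking log-concavity of $f$ (Lemma \ref{lem:higher-order}), which gives $(f'/f)' < 0$ on each $R_k$, one shows inductively that $\chi_n'(x) > 0$ and that the gaps $\chi_{n+1}(x) - \chi_n(x)$ depend monotonically on $x$. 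An intermediate value argument then yields the desired unique $x$ in $R_k$.

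The $\ell + 1$ subsequences so obtained assemble into a single strictly increasing sequence of $N = \sum_k N_k$ points on $\R$, and one verifies that the concatenation is indeed an MIW sequence satisfying the boundary conditions at $\pm\infty$: the MIW relation at interior points of each subsequence holds by construction, while the local boundary conditions at the roots, combined with the outermost $\pm\infty$ conditions, ensure the MIW relation at the interface points between adjacent regions. Uniqueness of the full sequence follows directly from the per-region uniqueness, since a different initial value in any one subsequence would, by the monotonicity, yield a different boundary value.

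The main obstacle is the monotonicity and boundary analysis within each region. The MIW recursion is a nonlinear coupling of the iterates that must be controlled for $N_k$ steps, and one must describe the limiting behavior of $\chi_{N_k+1}(x)$ as $x$ approaches the endpoints of its domain — where either some iterate approaches a root of $f$ or a gap degenerates. In particular, since $f$ has double zeros at each $r_k$ and $f'/f$ develops simple poles there, the singular contributions in the recursion must be balanced carefully against the gap terms, and the inductive monotonicity argument must be propagated through this singular regime in order to invoke the intermediate value theorem.
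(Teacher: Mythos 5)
Your overall plan --- subsequences in each region parametrized by the first point, monotonicity of the iterates, an intermediate-value argument, and a concatenation --- is the right skeleton and matches the paper's strategy. However, there is a concrete gap in how you propose to handle the interfaces between adjacent regions. You set the left boundary of each $R_k$ to the root $a_k$ and solve for $x$ with $\chi_{N_k+1}(x) = b_k$, treating each region independently. But the MIW relation \eqref{eq:hdw-seq} must hold at the last point $p$ of $R_k$ using the \emph{actual} next point $q$, which lies strictly inside $R_{k+1}$, not at the root $r_{k+1}$. Under your construction the relation at $p$ reads $\frac{1}{r_{k+1}-p} - \frac{1}{p - p_{-}} = f'(p)/f(p)$, whereas what is needed is $\frac{1}{q-p} - \frac{1}{p-p_{-}} = f'(p)/f(p)$, and since $q > r_{k+1}$ these are incompatible. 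The ``local boundary conditions at the roots ensure the MIW relation at the interface points'' claim is therefore false; the per-region problems are coupled and cannot be solved independently.

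The paper resolves this by making the left boundary $a_{K+1}(y)$ of region $K+1$ a \emph{function} of that region's parameter $y$, namely $a_{K+1}(y) = \chi_{N_K}^K\big((\chi_{N_K+1}^K)^{-1}(y)\big)$ --- the last point of the $K$-th subsequence when its successor is $y$. Lemma~\ref{lem:lcderivative} (the derivative bound $d\chi_n/d\chi_m \geq 1$) then shows $0 \leq a_{K+1}' \leq 1$, which is exactly what is needed to run the inductive construction of Section~\ref{sect3.1} on $R_{K+1}$. The parameters are then fixed by working from right to left: $x^\ell = b_{N_\ell}^\ell$ to hit the $\infty$ boundary, and $x^{k-1} = (\chi_{N_{k-1}+1}^{k-1})^{-1}(x^k)$ thereafter, so the interface relations hold by construction. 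Your uniqueness argument inherits the same flaw (per-region uniqueness does not obviously give global uniqueness for coupled systems); the paper instead gives a direct global comparison (Lemma~\ref{lem:ldineq}): if two such sequences have $x_1 < y_1$, then all gaps satisfy $x_{n+1}-x_n < y_{n+1}-y_n$, which is incompatible with both satisfying the right $\infty$ boundary condition.
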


As a consequence, we deduce symmetry of MIW sequences when the number of points $N_k = N_{\ell-k}$ for $0\leq k\leq \ell$.  This condition automatically holds when $\ell=0$, that is when $f$ is the Normal density.

\begin{corollary}\label{cor:Normalsymmetry}
  When $N_k=N_{\ell-k}$ for $0\leq k\leq \ell$, the MIW sequence $(x_n)_{n=1}^N$ in Theorem \ref{thm:exist-unique} is symmetric in that $x_n = -x_{N-n+1}$ for $1\leq n\leq N$.
\end{corollary}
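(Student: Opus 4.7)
The plan is to exploit the symmetry of the higher energy density $f$ together with the uniqueness part of Theorem \ref{thm:exist-unique}. Since $f(x)=cp_\ell(x)^2e^{-x^2/2}$ and each Hermite polynomial $p_\ell$ is either even or odd, $p_\ell^2$ is even and hence $f$ is an even, smooth, nonnegative function on $\R$. Consequently $f'$ is odd, the roots satisfy $r_k=-r_{\ell-k+1}$, the score $g(x):=f'(x)/f(x)$ is odd wherever it is defined, and the reflection $x\mapsto -x$ carries the region $R_k$ bijectively onto $R_{\ell-k}$.

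With these preparations I would introduce the reflected sequence $y_n:=-x_{N-n+1}$ for $1\leq n\leq N$ and argue that it is again an MIW sequence of $f$ meeting the hypotheses of Theorem \ref{thm:exist-unique} with the same region counts, so that by uniqueness $y_n=x_n$. Monotonicity is immediate, since reversing and negating a strictly increasing sequence yields a strictly increasing sequence. For the recursion, a direct computation gives
\begin{align*}
\frac{1}{y_{n+1}-y_n}-\frac{1}{y_n-y_{n-1}}
&=\frac{1}{x_{N-n+1}-x_{N-n}}-\frac{1}{x_{N-n+2}-x_{N-n+1}}\\
&=-\Big(\frac{1}{x_{m+1}-x_m}-\frac{1}{x_m-x_{m-1}}\Big)=-g(x_m),
\end{align*}
with $m=N-n+1$, and since $y_n=-x_m$ and $g$ is odd one has $-g(x_m)=g(-x_m)=g(y_n)$, which is exactly the MIW relation \eqref{eq:hdw-seq} for $(y_n)$.

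For the boundary conditions at $y_0=-x_{N+1}=-\infty$ and $y_{N+1}=-x_0=\infty$, I would reinterpret the right boundary condition at $\infty$ for $(x_n)$, namely $-1/(x_N-x_{N-1})=g(x_N)$, as $1/(x_N-x_{N-1})=g(-x_N)=g(y_1)$, which is the left boundary condition $1/(y_2-y_1)-1/(y_1-(-\infty))=g(y_1)$; the symmetric argument handles the right boundary. Finally, the region counts for $(y_n)$ are the counts for $(x_n)$ read off after reflection, so the number of $y_n$'s in $R_k$ equals the number of $x_n$'s in $R_{\ell-k}$, which equals $N_{\ell-k}=N_k$ by hypothesis. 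Theorem \ref{thm:exist-unique} then forces $y_n=x_n$, giving $x_n=-x_{N-n+1}$.

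The argument is essentially bookkeeping, so the only subtle point is ensuring the hypothesis $N_k=N_{\ell-k}$ is used in precisely the right place: it guarantees the reflected sequence has the correct tally in each region $R_k$, which is what allows the uniqueness clause of Theorem \ref{thm:exist-unique} to be invoked. The oddness of $f'/f$ and the pairing of roots do the rest of the work.
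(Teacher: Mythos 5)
Your proposal is correct and takes essentially the same approach as the paper: define the reflected sequence $y_n=-x_{N-n+1}$, check it is an MIW sequence with the same region counts and the same boundary conditions (using evenness of $f$ and the hypothesis $N_k=N_{\ell-k}$), and invoke the uniqueness part of Theorem \ref{thm:exist-unique}. The paper's proof states this tersely; you have simply spelled out the verification of the recursion, boundary conditions, and region counts in detail.
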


\begin{proof} Let $y_n = -x_{N-n+1}$ for $1\leq n\leq N$.  By symmetry of the higher energy density $f$ and that the number of points $N_k$ in $(r_{k}, r_{k+1})$ is the same as the number in reflected region $(r_{\ell-k}, r_{\ell-k+1})$, $(y_n)_{n=1}^N$ is also an MIW sequence, satisfying left and right boundary conditions at $\mp\infty$, with the same number of points $N$ as $(x_n)_{n=1}^N$.  Therefore, by the uniqueness part of Theorem \ref{thm:exist-unique}, $x_n = y_n$ for $1\leq n\leq N$.
\end{proof}

With respect to an MIW sequence, define now
\begin{align}
  \label{n(t) eq}
  n(t) = \max\big\{0\leq k\leq N: x_k \leq t\big\}
\end{align}
as the closest member of $(x_n)_{n=}^{N}$, with $x_0=-\infty$ and $x_{N+1}=\infty$, to the left of $t\in \R$,
or in other words $n(t)$ is the number of elements of $(x_n)_{n=1}^N$ less than or equal to $t$.  

We state that `gaps' $x_{n+1}-x_n$ vanish in the sequence, and give `spanning' orders of $x_1, x_N$.
\begin{theorem}\label{thm:gaps}
  Consider the MIW sequence $(x_n)_{n=1}^N$ in Theorem \ref{thm:exist-unique} such that the numbers $N_k\uparrow\infty$ for $0\leq k\leq \ell$.
  Then, for $t \in \R$,
  the difference $x_{n(t)+1} - x_{n(t)} \to 0$ as $N\uparrow\infty$.
  Consequentially, $\lim_{N \to \infty} x_{n(t)+1}^N = t$.  Moreover, at the extremities, there is a constant $c>0$ such that
  $c\sqrt{\log N}\leq |x_1|, x_N \leq c^{-1}\sqrt{\log N}$ for all large $N$.
\end{theorem}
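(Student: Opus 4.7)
My plan hinges on the telescoping identity derived from summing the MIW recursion \eqref{eq:hdw-seq} from $n=1$ to $m$ with the left boundary condition at $x_0 = -\infty$:
\[
\frac{1}{x_{m+1} - x_m} \;=\; S_m \;:=\; \sum_{n=1}^m \frac{f'(x_n)}{f(x_n)}, \qquad 1 \le m \le N-1,
\]
together with $S_N = 0$ from the right boundary. Writing $f'/f = 2p'_\ell/p_\ell - x$ shows that $f'/f$ behaves like $-x$ at $\pm\infty$ and has a simple pole of residue $2$ at each root of $p_\ell$, tending to $+\infty$ on the right and $-\infty$ on the left of each root.

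For the gap vanishing at fixed $t \in \R$, I would first establish intra-region monotonicity: within each region $R_k$, $f'/f$ changes sign exactly once at the unique interior critical point of $f$, so by the recursion the gap $x_{n+1} - x_n$ strictly decreases as $x_n$ moves from the left boundary of $R_k$ toward that critical point and strictly increases toward the right boundary. Hence the maximum intra-region gap in $R_k$ is attained at one of the two boundary gaps. If $t$ lies in the interior of some $R_k$, then $x_{n(t)+1} - x_{n(t)}$ is an intra-region gap bounded above by these boundary gaps; if $t$ equals a root $r_k$, then $x_{n(t)+1} - x_{n(t)}$ is precisely the across-root gap, which vanishes exactly when the closest point on each side approaches $r_k$. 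It therefore suffices to show all such boundary gaps vanish as each $N_k \to \infty$. For a gap just to the right of a root $r_k$, bounded by $1/(f'/f)(x_{M_k+1})$ where $M_k$ is the count of points in $R_0 \cup \cdots \cup R_{k-1}$, I would argue that $x_{M_k+1} \to r_k^+$ as $N_k$ grows: otherwise the recursion and intra-region monotonicity yield gaps bounded below on an initial segment of length $\Omega(\log N_k)$, contradicting that $R_k$ is bounded. This drives $(f'/f)(x_{M_k+1}) \to +\infty$, and the symmetric argument handles the left side of each root, simultaneously showing the across-root gap vanishes. For the extremal gaps, $x_2 - x_1 = 1/(f'/f)(x_1)$ and $x_N - x_{N-1} = -1/(f'/f)(x_N)$ vanish once $|x_1|, x_N \to \infty$, which is the spanning claim.

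For the spanning bound, I would focus on $x_N$; the estimate for $x_1$ follows by a symmetric argument or by Corollary~\ref{cor:Normalsymmetry} when $(N_k)$ is symmetric. The right boundary condition gives $x_N(x_N - x_{N-1}) \to 1$ up to a correction from $p'_\ell/p_\ell$. Iterating the recursion backwards through $R_\ell$ and using $(f'/f)(x) \sim -x$ at $+\infty$, one finds $S_{N-j} \sim j\,x_N$ (since the rightmost $j$ points cluster near $x_N$), and hence $x_{N-j+1} - x_{N-j} \sim 1/(j\,x_N)$; summing gives $x_N - x_{N-N_\ell+1} \sim (\log N_\ell)/x_N$. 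Combined with a uniform bound on $x_{N-N_\ell+1}$, which by intra-region monotonicity stays in a compact neighborhood of the critical point of $f$ in $R_\ell$, this yields $x_N^2 \asymp \log N$.

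The main obstacle will be rigorous control of the partial sums $S_m$ near zeros of $f$, where the individual summands diverge with opposite signs on the two sides of each root, requiring a careful pairing of the contributions from consecutive points across each root so as to avoid spurious cancellations; a secondary subtlety is justifying that the first point in each region $R_k$ genuinely converges to the adjacent root as $N_k \to \infty$, which must be extracted from the fixed-point construction underlying Theorem~\ref{thm:exist-unique}.
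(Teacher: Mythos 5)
Your high-level architecture matches the paper's: the intra-region gap monotonicity (gaps shrink approaching the unique critical point of $f$ within each $R_k$, then grow toward the boundaries) is exactly the key observation in the paper's final gap lemma, so the maximal gap is always an across-root gap or one of the two extremal gaps, and the spanning estimates proceed from the telescoped identity $1/(x_{m+1}-x_m)=S_m$. However, there is a genuine gap in how you propose to show the across-root gap at $r_\ell$ (and symmetrically $r_1$) vanishes. Your argument that the first point in a region converges to the adjacent root rests on the region being bounded: the sum of gaps, bounded below by a harmonic-type series, would otherwise blow up beyond the length of $R_k$. But $R_\ell=(r_\ell,\infty)$ and $R_0=(-\infty,r_1)$ are not bounded, so that contradiction is unavailable precisely for the two regions flanking the rays. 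You flag this as a "secondary subtlety" to be extracted from the existence construction, but that is not a substitute for an argument; the paper resolves it (Lemma~\ref{lem:4.3}) by a quantitative squeeze: if the first point of $R_\ell$ stayed at distance $\geq\varepsilon$ from $r_\ell$, the gap recursion forces $x_n$ to grow at least logarithmically in $n$, so $x_N\gtrsim \log N$; this is played off against the upper bound $x_N^2\leq x_{n(v)+1}^2+\calO(\log N)$ (your Lemma~\ref{lem:xNbound}-type estimate), giving $(\log N)^2\lesssim \log N$, a contradiction. That interplay between a linear-in-$\log N$ lower bound on $x_N$ and a $\sqrt{\log N}$ upper bound is the missing piece.

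Two smaller points. Your worry about "careful pairing of contributions from consecutive points across each root so as to avoid spurious cancellations" is a red herring: the paper never evaluates $S_m$ across a root; each region is analyzed in isolation via the gap monotonicity, and the across-root gap is controlled solely by showing the two flanking points both converge to the root. And your remark that $x_{N-N_\ell+1}$ "stays in a compact neighborhood of the critical point of $f$ in $R_\ell$" is not right (it converges to $r_\ell$, not to the critical point), though what you actually need from it — boundedness — still holds and salvages that step of the spanning estimate.
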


Given two probability distributions $S$, $T$ on $\R$,
the Wasserstein-$1$ distance can be defined
$d(S, T) = \sup_h |\Exp_S[h] - \Exp_T[h]|$,
where $h$ ranges over the space of $1$-Lipschitz functions.  It can also be evaluated as $d(S,T)=\int_\R |F_S(x) - F_T(x)|dx$ where $F_S$, $F_T$ are the respective distribution functions.
By standard smoothing arguments, 
$d(S,T) = \sup_{h\in \mathcal{L}}|\Exp_S[h] - \Exp_T[h]|$,
where $\mathcal{L}$ is the space of differentiable $1$-Lipschitz functions $h$.
We note that Wasserstein-$1$ distance convergence implies weak convergence.

We state convergence in the Wasserstein-$1$ distance of the empirical distribution of the sequence to the distribution with higher energy density $f$.  Here, by convention, empty sums $\sum_{k=0}^{-1}$ vanish.

\begin{theorem}\label{thm:converge}
  Let $P$ denote the distribution on $\R$ with density $f$.
  Suppose the MIW sequence $(x_n)_{n=1}^N$ in Theorem \ref{thm:exist-unique} is such that $N_k = \lfloor N \int_{R_k} f(t)dt \rfloor$
  for $0\leq k< \ell$ and $N_\ell = N-\sum_{k=0}^{\ell-1} N_k \geq \lfloor N\int_{R_{\ell}}f(t)dt\rfloor$.
  Then, the empirical distribution $Q$ of $(x_n)_{n=1}^N$
  converges to $P$ with rate
  \begin{align*}
    d(Q,P) = \sup_{h\in \mathcal{L}}|\Exp_Q[h] - \Exp_{P_N}[h]|
    = \calO\left(\frac{\sqrt{\log N}}{N}\right).
  \end{align*}
  as $N\uparrow\infty$.
\end{theorem}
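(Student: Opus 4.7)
The plan is to partition $\R$ along the $\ell$ roots of $f$ into the regions of strict positivity $R_0, \ldots, R_\ell$, apply the differential-equation form of Stein's method on each region separately, and then assemble. Let $Q_k$ denote the empirical distribution of the $N_k$ MIW points lying in $R_k$ and $P_k = P(\cdot\mid R_k)$ the conditional distribution, with density $f_k = f/\int_{R_k}f$. Since $N_k/N = \int_{R_k}f + O(1/N)$ by the hypothesis $N_k = \lfloor N\int_{R_k}f\rfloor$, writing $Q = \sum_k (N_k/N) Q_k$ and using the triangle inequality gives
\[
d(Q,P)\ \le\ \sum_{k=0}^{\ell}\frac{N_k}{N}\,d(Q_k, P_k)\ +\ O(1/N),
\]
so it suffices to prove $d(Q_k, P_k) = O(\sqrt{\log N}/N_k)$ in each region.

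For fixed $k$ and $h \in \mathcal{L}$, I would introduce the Stein equation
\[
g'(x) + \frac{f'(x)}{f(x)}\,g(x) = h(x) - \Exp_{P_k}[h], \qquad x \in R_k,
\]
solved via the integrating factor $f$ as $g(x) = f(x)^{-1}\int_{r_k}^x (h - \Exp_{P_k}h)\,f\,dt$, with the convention $r_0 = -\infty$, $r_{\ell+1} = \infty$. By Theorem \ref{thm:exist-unique} the subsequence $(y_n^{(k)})_{n=1}^{N_k}$ of $(x_n)$ lying in $R_k$ is itself an MIW sequence on $R_k$, satisfying the internal relation $f'(y_n^{(k)})/f(y_n^{(k)}) = (y_{n+1}^{(k)} - y_n^{(k)})^{-1} - (y_n^{(k)} - y_{n-1}^{(k)})^{-1}$ with definite boundary conditions from the construction. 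Substituting into $\Exp_{Q_k}[g' + (f'/f)g]$, summation by parts together with a mean-value step converts the discrepancy to
\[
\Exp_{Q_k}[h] - \Exp_{P_k}[h] = \frac{1}{N_k}\sum_{n=1}^{N_k}\bigl[g'(y_n^{(k)}) - g'(\xi_n)\bigr] + (\text{boundary}),
\]
with $\xi_n \in (y_n^{(k)}, y_{n+1}^{(k)})$. Bounding each difference by $\|g''\|_\infty(y_{n+1}^{(k)} - y_n^{(k)})$ and telescoping yields an estimate of order $\|g''\|_\infty(y_{N_k}^{(k)} - y_1^{(k)})/N_k$, which is $O(1/N_k)$ for interior $R_k$ and, by Theorem \ref{thm:gaps}, $O(\sqrt{\log N}/N_k)$ for the unbounded regions $R_0$ and $R_\ell$.

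The main obstacle is the uniform control of $g, g', g''$ on $\overline{R_k}$: at each root $r_k$ both $f$ and $f'$ vanish, so $f'/f$ has a simple pole. The remedy is a local expansion: since $f(x) \sim c_k(x-r_k)^2 e^{-x^2/2}$ near the double zero $r_k$, a Taylor computation gives $\int_{r_k}^x(h - \Exp_{P_k}h)f\,dt \sim (h(r_k) - \Exp_{P_k}h)\,c_k(x-r_k)^3/3$, so $g$ extends continuously and vanishes linearly at $r_k$; analogous estimates handle $g'$, $g''$ at the zeros and the Gaussian tails at $\pm\infty$. To reach the empirical measure $Q_k$, whose support may cluster near the zeros, one interposes an intermediate continuous distribution (in the spirit of \cite{mckeague_levin_2016, chen_thanh_2020}) between $Q_k$ and $P_k$; the $Q_k$--intermediate distance is controlled by the vanishing-gap property of Theorem \ref{thm:gaps}, while the intermediate--$P_k$ distance is handled by the Stein argument just described. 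Summing the regional estimates then yields the global rate $d(Q,P) = O(\sqrt{\log N}/N)$.
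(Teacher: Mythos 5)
Your proposal follows essentially the same route as the paper: decompose $\R$ along the $\ell$ roots of $f$ into regions of strict positivity, solve the Stein differential equation $g' + (f'/f)g = h - \Exp_{P_k}[h]$ with integrating factor $f$ on each region, interpose an intermediate continuous density (constructed from the MIW gaps) between $Q_k$ and $P_k$, control $g, g', g''$ near the double zeros of $f$ via the Taylor expansion $f(x)\sim \tfrac12 f''(r_k)(x-r_k)^2$ and at $\pm\infty$ via tail estimates, obtain a per-region bound of order $(x_{N_k}-x_1)/N_k$, and assemble with a mixture bound. These are precisely the ingredients of the paper's proof (Propositions 5.2, 5.3, 5.12, 5.13, 5.15 together with Lemmas 5.5--5.11), so the approach is the same; the only parts left implicit in your sketch are the detailed tail estimates for $g''$ at infinity (the paper's Lemmas 5.9--5.11, which need Mill's-ratio/polynomial integration-by-parts arguments) and the careful bookkeeping of boundary terms via the MIW boundary conditions (the paper's $\beta$-convex-combination trick in Proposition 5.3).
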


Finally, we give a sense in which the sequence is an approximate critical point of $H$, and in this sense an approximate higher energy state.

\begin{theorem}\label{thm:grad}
  The sequence $(x_n)_{n=1}^N$ in Theorem \ref{thm:exist-unique},
  such that the number of points $N_k\uparrow \infty$ for $0\leq k\leq \ell$,
  is stationary in the limit, in the sense that, evaluated on $(x_n)_{n=1}^N$,
  \begin{align*}
    \lim_{N \to \infty} \partial{x_{n(t)}} H = 0
  \end{align*}
  for each $t\in \R\setminus \{r_1, \ldots, r_\ell\}$, away from any zeros of $f$.
\end{theorem}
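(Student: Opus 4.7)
The strategy is to exploit the defining MIW relation twice: once to substitute $g(x_n):=f'(x_n)/f(x_n)$ for $\sigma_n := \frac{1}{x_{n+1}-x_n}-\frac{1}{x_n-x_{n-1}}$ inside an explicit formula for $\partial_{x_m}H$, and again to cancel an apparent $1/d_\pm$ divergence that arises from Taylor expansion with vanishing mesh.

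First I would compute $\partial_{x_m}H$ directly from the form \eqref{time-indep-H} of $H$. For an interior index $m$, only the three summands $\sigma_{m-1}^2,\sigma_m^2,\sigma_{m+1}^2$ depend on $x_m$, and a short bookkeeping collapses this into
\[
\tfrac{1}{2}\partial_{x_m}H \;=\; x_m \;+\; \frac{\sigma_m-\sigma_{m-1}}{(x_m-x_{m-1})^2} \;+\; \frac{\sigma_m-\sigma_{m+1}}{(x_{m+1}-x_m)^2}.
\]
On an MIW sequence $\sigma_n=g(x_n)$, so writing $d_\pm$ for the neighboring gaps this becomes $x_m + [g(x_m)-g(x_{m-1})]/d_-^2 - [g(x_{m+1})-g(x_m)]/d_+^2$. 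I would then Taylor expand $g$ to second order around $x_m$: each divided difference contributes a leading $g'(x_m)/d_\pm$ term plus a bounded remainder. The leading terms assemble into $g'(x_m)(1/d_- - 1/d_+)$, which by the MIW recursion $1/d_+ - 1/d_- = \sigma_m = g(x_m)$ equals $-g(x_m)g'(x_m)$. This cancellation, the heart of the argument, is exactly what the MIW relation was designed to produce, and gives
\[
\partial_{x_m}H \;=\; 2x_m - 2g(x_m)g'(x_m) - 2g''(x_m) + o(1) \;=\; \bigl(t^2-g(t)^2-2g'(t)\bigr)'\Big|_{t=x_m} + o(1).
\]

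Setting $m=n(t)$ and sending $N\to\infty$, Theorem \ref{thm:gaps} gives $x_m, x_{m\pm 1}\to t$, and since $t\notin\{r_1,\ldots,r_\ell\}$ the function $g=f'/f$ is smooth with all derivatives bounded on a neighborhood of $t$, so the Taylor remainders are uniformly $o(1)$. Hence $\partial_{x_m}H\to(t^2-g^2-2g')'(t)$. To conclude, writing $\psi=\sqrt{f}=\sqrt{c}\,p_\ell(t)e^{-t^2/4}$ on an interval of positivity and using $g=2\psi'/\psi$ yields $t^2-g^2-2g'=t^2-4\psi''/\psi$. The Hermite ODE $p_\ell''=tp_\ell'-\ell p_\ell$ then translates into $-4\psi''+t^2\psi=(4\ell+2)\psi$, so $t^2-g^2-2g'\equiv 4\ell+2$ on each positivity interval, a constant whose derivative vanishes.

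The main obstacle is making the $o(1)$ remainder genuinely uniform. Because $t$ is bounded away from the finitely many poles $r_1,\ldots,r_\ell$ of $g$, one may fix a compact neighborhood $I\ni t$ on which $g\in C^3$ with bounded derivatives; Theorem \ref{thm:gaps} then guarantees that $x_{m-1},x_m,x_{m+1}\in I$ for all large $N$, supplying the uniform bound on the third-order Taylor remainder. For $N$ large the index $n(t)$ is interior, so the boundary cases $m=1,N$ do not need to be addressed.
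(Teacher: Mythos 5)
Your proposal is correct and follows essentially the same route as the paper: compute $\partial_{x_n}H$ explicitly, Taylor-expand $\eta=f'/f$ around $x_n$ so the leading singular contributions combine via the MIW recursion into $-\eta(x_n)\eta'(x_n)$, pass to the limit using the no-gap property of Theorem \ref{thm:gaps}, and identify the limit $2t-2\eta\eta'-2\eta''$ as the derivative of $t^2-\eta^2-2\eta'$, which is constant (equal to $4\ell+2$) by the Hermite ODE. The only cosmetic difference is that you route the last step through $\psi=\sqrt f$ and the time-independent Schr\"odinger equation, while the paper verifies the identity $2\eta'+\eta^2+4\ell+2=x^2$ directly; both are equivalent one-line calculations.
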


\subsection{Remarks}
We have the following comments.
\vskip .1cm

1. One might ask if one may construct MIW sequences for more general $f$,
not necessarily a higher energy function.
There may be application for instance to numerical density approximation.
Going through the proofs, one can see that the class of $f$'s can be made wider,
perhaps to `log-concave' $f$'s and to densities $f$ supported on intervals.
But, we leave this generalization to future work.  
\vskip .1cm

2. Also, we have taken $N_k\geq 1$ in Theorem \ref{thm:exist-unique},
which means every interval $(r_k, r_{k+1})$ contains points of the MIW sequence.
We believe with more work, and redefinition of the MIW sequence,
one can take some of the intervals empty.
But, given the application to approximation of the distributions
with higher energy density $f$'s, we have avoided this consideration.
\vskip .1cm

3. In \cite{chen_thanh_2020}, a lower bound of $\calO(\sqrt{\log N}/N)$
for the rate of Wasserstein-$1$ distance convergence of $Q$ to $P$
is shown in the Normal density, $\ell=0$, setting.
Convergence in Kolmogorov distance is also shown in \cite{chen_thanh_2020}.
It would be of interest to derive a lower bound
to pair with the upper bound given in Theorem \ref{thm:converge} when $\ell\geq 1$,
as well to consider other distances.
\vskip .1cm

4. Although the MIW sequences constructed are approximate higher energy states
for $H$ as stated in Theorem \ref{thm:grad},
they are not actual critical points of $H$, except when $\ell=0$ (see the expression for $\partial_{x_n}H$ in Lemma \ref{grad H calc}).
In fact, we observe in Lemma \ref{prop:graddiverge}, with respect to $\ell=1$,
that $\partial_{x_n} H$ diverges at $x_n=x_{n(0)+1}$
as $N\uparrow\infty$ with rate $x_{n(0)+1}^{-3}$.
In Lemma \ref{lem:Max0}, we show that $x_{n(0)+1}$ is of order $N^{-1/3}$,
from which we see $\partial_{x_{n(0)}+1}H$ diverges at order $N$.
This is not a contradiction of Theorem \ref{thm:gaps},
as $t=0$ is a zero of the Maxwellian density.

In Figure \ref{fig:maxwellian}, the $N=5$ initial positions are given by the Maxwellian $(\ell=1)$ MIW sequence
with three negative and two positive points.  One can compare the time evolution of this sequence with that in Figure \ref{fig:arbitrary} which follows the dynamics starting from a different configuration with the same energy $H$.  There is some indication that in a small time-scale that the MIW sequence is somewhat stationary (flatter trajectories).
It would be of interest to understand better
how well the evolved `worlds' $(x_n(t))_{n=1}^N$
approximate the initial MIW sequence positions $(x_n=x_n(0))_{n=1}^N$.
\vskip .1cm

5. Finally, in \cite{hall_2014},
MIW discrete Hamiltonian approximations of Schr\"odinger's equation
with non-Harmonic potentials $V$ are also discussed.  
From the proof discussion in Sections \ref{proof-sect} and \ref{grad-sect}
we can infer more generally that convergence of the empirical distribution of MIW sequences
to critical points should happen when
$V(t) - (f'(t)/f(t))^2 -2[f'(t)/f(t)]' = E$,
the associated time-independent Schr\"odinger's equation.
It would be of interest to investigate more carefully these convergences in this context.

  \begin{figure}
   \begin{center}
        \includegraphics[scale = 0.25]{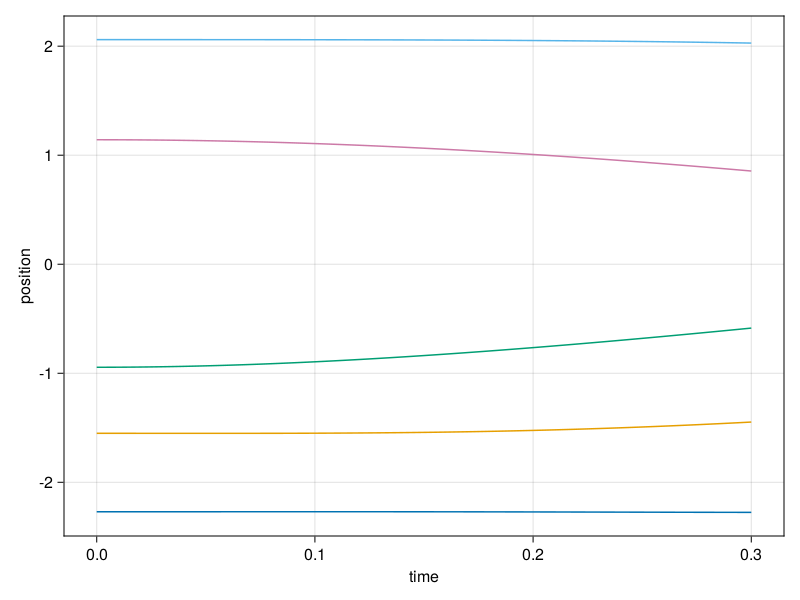}
        \caption{MIW $N=5$ dynamics starting from the Maxwellian $\ell=1$ approximating state $(x_n)_{n=1}^5$ with $3$ negative and $2$ positive values.}
        \label{fig:maxwellian}
              \end{center}
\end{figure}

\section{Proof of Theorem \ref{thm:exist-unique}:  Existence and uniqueness of MIW sequences}
\label{exist-sect}

We will begin by defining
an increasing sequence of functions $(\chi_n(x))_{n\geq 1}$, depending on an value $x\in \R$,
whose evaluations satisfy the left boundary condition at $-\infty$,
and the MIW recursion relation \eqref{eq:hdw-seq}.
Here, the parameter $x$ will correspond to the first element of the sequence, $\chi_1(x)=x$.
We will not consider the right boundary condition in defining this sequence,
but will later choose $x$ so that the sequence
satisfies the right boundary condition at $\infty$.

These sequences, depending on the position of $x$, will lie entirely in a single region
in which $f$ is strictly positive.
In order to create a sequence with points
in {\it every} strictly positive region of $f$,
we will use the single region existence to create a sequence in each region.
The left boundary conditions will be chosen
to allow the concatenation of these sequences
to satisfy the MIW recursion relation.

We will then work from right to left.
We choose the parameter of the last (say $\ell +1$th) sequence
so that the $\ell +1$th sequence satisfies the $\infty$-right boundary condition.
Then, we choose the parameter of the $\ell$th sequence
so that the concatenation of the $\ell$th and $\ell+1$th sequences
obeys the MIW recursion relation.
We continue by choosing the parameter of the $(\ell-1)$th sequence
so that the concatenation of the $(\ell-1)$th and $\ell$th sequences
obeys the MIW recursion relation, and so on to the initial sequence, which a priori satisfies the $-\infty$-left boundary condition.

Uniqueness will be a consequence of the property that
$\frac{d\chi_{n+1}}{d\chi_n} \geq 1$.
This relation ensures that two sequences
in the same strictly positive region of $f$
with different initial values
satisfy different right boundary conditions.

\subsection{Existence of the functions $\chi_n$}
\label{sect3.1}

In the first step, we wish to define a family of functions $\chi_n(x)$
that correspond to an MIW sequence starting at $x\in (r, b_0)$,
with a left boundary point of $a(x)$, leaving the right boundary undetermined.  Here, $r$ is a root of $f$ or $r = -\infty$, and $b_0$ is the next root of $f$ or $b_0=\infty$ if there are no subsequent roots.  It will be convenient to let $a(x): (r,b_0)\rightarrow [-\infty, \infty)$ be a differentiable function such that $a(x)<x$ and $0\leq \frac{da(x)}{dx} \leq 1$. We will also allow the choice of $a(x) \equiv -\infty$,
with the convention in this case that $\frac{1}{x - a(x)} = 0$ for $x\in (r, b_0)$.

In the second step, we will match the final elements of these MIW sequences in $(r, b_0)$ as left boundary points of MIW sequences in subsequent intervals.  In this way, $a(x)$ for each sequence will be later identified.

For $x \in (r, b_0)$, define
$\chi_1(x) = x$.
We now construct functions $\chi_n(x)$ for $n\geq 1$ which will form the next elements
of an MIW sequence.
We want to choose $b_1$ so that $(r, b_1)$
is the maximal domain of definition of $\chi_2(x)$.

\begin{lemma}
  \label{lem:3.1}
  The minimum
  \[
    b_1 = \min \Big\{x \in (r, b_0)
    \st \frac{1}{x - a(x)} + \frac{f'(x)}{f(x)} = 0\Big\}
  \]
  is well-defined.  Also, $\frac{1}{x-a(x)} + \frac{f'(x)}{f(x)} >0$ for $x\in (r, b_1)$, 
  $\lim_{x\downarrow r} \frac{1}{x-a(x)} + \frac{f'(x)}{f(x)} = \infty$, and
  $\lim_{x\uparrow b_1} \frac{1}{x-a(x)} + \frac{f'(x)}{f(x)} = 0$.
\end{lemma}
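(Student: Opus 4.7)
The plan is to treat $g(x) := 1/(x-a(x)) + f'(x)/f(x)$ as a continuous function on $(r, b_0)$ and apply the intermediate value theorem after pinning down its boundary behavior. Specifically, I would show $g \to +\infty$ as $x \downarrow r$ and $g \to -\infty$ as $x \uparrow b_0$; then $\{x \in (r,b_0): g(x) = 0\}$ is nonempty, and $b_1$ would be defined as its infimum. Continuity of $a$ (from the differentiability hypothesis), together with smoothness and strict positivity of $f$ on $(r, b_0)$, makes $g$ continuous there.

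For the left limit, I would use logarithmic differentiation of $f = c\, p_\ell^2\, e^{-x^2/2}$ to write $f'/f = 2p_\ell'/p_\ell - x$. If $r = -\infty$, the convention $1/(x-a(x)) \equiv 0$ applies and $-x \to +\infty$ dominates the vanishing ratio $p_\ell'/p_\ell$ (a polynomial of strictly lower degree over one of higher degree). If instead $r$ is a root of $f$, then $r$ is a simple zero of $p_\ell$ (Hermite zeros are simple, as is implicit in Lemma \ref{lem:dbl-zro}), so locally $p_\ell(x) = (x-r)q(x)$ with $q(r) \neq 0$, whence $f'/f = 2/(x-r) + O(1) \to +\infty$ as $x \downarrow r$; the remaining term $1/(x-a(x)) > 0$ only helps.

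For the right limit, I would first note that the hypothesis $a'(x) \leq 1$ forces $x - a(x)$ to be nondecreasing, so $1/(x-a(x))$ stays bounded on $[x_0, b_0)$ for any fixed $x_0 \in (r, b_0)$. The $f'/f$ term then decides the limit: either $b_0 = \infty$ and $-x \to -\infty$, or $b_0$ is a root and $f'/f = 2/(x - b_0) + O(1) \to -\infty$. Hence $g \to -\infty$, and by IVT the zero set of $g$ is nonempty. Since $g > 0$ in a right-neighborhood of $r$, the infimum $b_1$ lies strictly above $r$; since the zero set is relatively closed in $(r, b_0)$, the infimum is attained, so $b_1$ is the minimum with $g(b_1) = 0$. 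Minimality of $b_1$ combined with $g(r^+) = +\infty$ forces $g > 0$ throughout $(r, b_1)$, and $g(b_1) = 0$ yields $\lim_{x \uparrow b_1} g(x) = 0$ by continuity.

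The argument is largely a continuity-and-IVT exercise, so I do not expect any serious obstacle. The most delicate point is the asymptotic $f'/f \sim 2/(x-r)$ at a root of $f$, which rests on the classical simplicity of Hermite zeros so that $f = c\, p_\ell^2\, e^{-x^2/2}$ has a clean double zero at $r$; the boundedness of $1/(x - a(x))$ near $b_0$, which is what lets $f'/f$ dictate the right endpoint behavior, is the other small bookkeeping item to verify from the hypotheses on $a$.
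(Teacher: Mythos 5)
Your proposal is correct and follows essentially the paper's route: continuity of $g(x) = 1/(x-a(x)) + f'(x)/f(x)$ on $(r,b_0)$, the boundary limits $g(r^+)=+\infty$ and $g(b_0^-)=-\infty$, and the intermediate value theorem. You re-derive the asymptotics of $f'/f$ near $r$ and $b_0$ from the form $f = c\,p_\ell^2 e^{-x^2/2}$, where the paper simply cites Lemma~\ref{lem:3.5}; and you correctly isolate the small but necessary observation that $a'(x)\le 1$ makes $1/(x-a(x))$ nonincreasing and hence bounded away from $r$, so that the $f'/f$ term governs the behavior at $b_0$.

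The one genuine difference is in the last step. The paper invokes log-concavity (Lemma~\ref{lem:higher-order}) to conclude that $f'/f$, hence $g$, is strictly decreasing on $(r,b_0)$; this gives uniqueness of the zero and, as a byproduct, that $g>0$ on $(r,b_1)$. You instead argue topologically: the zero set of $g$ is relatively closed in $(r,b_0)$, bounded away from $r$ because $g>0$ near $r$, so its infimum is attained; positivity on $(r,b_1)$ then follows from minimality and a second IVT application. Your version is slightly more general (it does not need monotonicity, only continuity and the boundary limits) but loses the uniqueness of $b_1$, which the paper states and which is harmless but convenient later. Either reading is a valid proof of the lemma as stated.
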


\begin{proof}
  Because $0 \leq \frac{da(x)}{dx} \leq 1$ and $a(x)<x$,
  the quotient $\frac{1}{x - a(x)}$ is a decreasing, positive function of $x$.
  By Lemma \ref{lem:3.5}, the limit $\lim_{x \downarrow r} \frac{f'(x)}{f(x)} = \infty$.
  So, the limit of the sum
  $
  \lim_{x \downarrow r} \frac{1}{x - a(x)} + \frac{f'(x)}{f(x)} = \infty.
  $
  
  Because $b_0$ is either a root of $f$ or is infinite, again by Lemma \ref{lem:3.5}.
  the limit $\lim_{x \uparrow b_0} \frac{f'(x)}{f(x)} = -\infty$.
  Thus, the limit of the sum
  $
    \lim_{x \uparrow b_0} \frac{1}{x - a(x)} + \frac{f'(x)}{f(x)} = -\infty.
  $
 
  Then, there is a point $b_1 \in (r, b_0)$ such that
  \[
    \lim_{x\uparrow b_1} \frac{1}{x-a(x)} + \frac{f'(x)}{f(x)} = \frac{1}{b_1 - a(b_1)} + \frac{f'(b_1)}{f(b_1)} = 0.
  \]
  Because $\frac{f'(x)}{f(x)}$ is strictly decreasing by Lemma \ref{lem:higher-order}, we have that
  $\frac{1}{x - a(x)} + \frac{f'(x)}/{f(x)}$
  is strictly decreasing for $x \in (r, b_0)$, the value $b_1$ is the unique such point.

  Moreover, by this construction, $\frac{1}{x-a(x)} + \frac{f'(x)}{f(x)}>0$ on $(r, b_1)$.
\end{proof}

We wish to mirror the recursion relation
$x_2 = x_1 + \left(\frac{1}{x_1 - a(x)} + \frac{f'(x_1)}{f(x_1)}\right)^{-1}
$ of an MIW sequence.
Thus, we define $\chi_2(x)$ on $(r, b_1)$ as
\[
  \chi_2(x) = x + \left(\frac{1}{x - a(x)} + \frac{f'(x)}{f(x)}\right)^{-1}.
\]

\noindent{\bf Base.}
In order to make an inductive definition for higher values of $n$,
we will now note several properties of $\chi_1(x)$ and $\chi_2(x)$.

\begin{lemma}
  Define $b_0$, $b_1$, $\chi_1$, and $\chi_2$ as above.
  The following hold:
  \begin{enumerate}
    \item $r < b_1 < b_0$,
    \item $\chi_1(x)$ is continuous for $x \in (r, b_0)$,
    \item $\chi_2(x)$ is continuous on $x \in (r, b_1)$,
    \item $\chi_1(x) < \chi_2(x)$ for all $x \in (r, b_1)$,
    \item $\lim_{x \downarrow r} \chi_2(x) = r$, and
    \item $\lim_{x \uparrow b_1} \chi_2(x) = \infty$.
  \end{enumerate}
\end{lemma}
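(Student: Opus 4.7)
Each of the six assertions is essentially immediate from Lemma \ref{lem:3.1} together with the explicit formulas $\chi_1(x)=x$ and $\chi_2(x)=x+\left(\tfrac{1}{x-a(x)}+\tfrac{f'(x)}{f(x)}\right)^{-1}$, so the proof is a matter of organization rather than a matter of new ideas. I would dispatch them in the order listed.

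Claim (1) is contained in Lemma \ref{lem:3.1}, which established $b_1\in(r,b_0)$. Claim (2) is trivial since $\chi_1$ is the identity on $(r,b_0)$. For claim (3), observe that on $(r,b_1)$ the function $a(x)$ is continuous (it is differentiable) and $a(x)<x$, so $\tfrac{1}{x-a(x)}$ is continuous; similarly $\tfrac{f'(x)}{f(x)}$ is continuous on the region of strict positivity of $f$; by Lemma \ref{lem:3.1} the sum $\tfrac{1}{x-a(x)}+\tfrac{f'(x)}{f(x)}$ is strictly positive on $(r,b_1)$, so reciprocating preserves continuity, and $\chi_2$ is continuous there.

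For claim (4), the difference $\chi_2(x)-\chi_1(x)=\left(\tfrac{1}{x-a(x)}+\tfrac{f'(x)}{f(x)}\right)^{-1}$ is the reciprocal of a strictly positive quantity on $(r,b_1)$, again by Lemma \ref{lem:3.1}, and so is strictly positive.

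For claim (5), as $x\downarrow r$ we have $\chi_1(x)=x\to r$, and Lemma \ref{lem:3.1} gives $\tfrac{1}{x-a(x)}+\tfrac{f'(x)}{f(x)}\to\infty$, so its reciprocal tends to $0$, and $\chi_2(x)\to r$. For claim (6), as $x\uparrow b_1$, Lemma \ref{lem:3.1} gives $\tfrac{1}{x-a(x)}+\tfrac{f'(x)}{f(x)}\to 0^{+}$, so its reciprocal tends to $+\infty$, while $\chi_1(x)=x\to b_1$ stays bounded; hence $\chi_2(x)\to\infty$.

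There is no substantive obstacle here; the only subtlety worth flagging is ensuring strict positivity of the denominator on the open interval $(r,b_1)$ so that the reciprocal is well-defined and continuous — this is exactly what the monotonicity argument in Lemma \ref{lem:3.1} delivers, since $\tfrac{1}{x-a(x)}+\tfrac{f'(x)}{f(x)}$ is strictly decreasing and vanishes only at $b_1$.
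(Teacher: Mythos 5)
Your proof is correct and follows essentially the same route as the paper: all six claims are reduced to Lemma \ref{lem:3.1} together with the explicit formulas for $\chi_1$ and $\chi_2$. The only minor cosmetic difference is that for claim (3) you spell out continuity of the numerator and positivity of the denominator explicitly, whereas the paper dispatches it as immediate from the choice of $b_1$ as the maximal domain of definition; the content is the same.
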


\begin{proof}
  The first bullet holds because $b_1 \in (r, b_0)$.
  The second because $\chi_1(x) = x$.
  The third holds by inspection as $b_1$ was chosen
  to make $(r, b_1)$ the maximal domain
  on which $\chi_2$ is well-defined.
  The fourth holds because, after rearrangement, and noting Lemma \ref{lem:3.1},
  \[
    \frac{1}{\chi_2(x) - \chi_1(x)}
    = \frac{f'(x)}{f(x)} + \frac{1}{\chi_1(x) - a(x)}
    > 0.
  \]
  The fifth holds because
  $\lim_{x \downarrow r} \chi_1(x) = r$, and
  \[
    \lim_{x \downarrow r}
    \frac{1}{\chi_2(x) - \chi_1(x)}
    = \lim_{x \downarrow r} \frac{f'(x)}{f(x)} + \frac{1}{\chi_1(x) - a(x)}
    = \infty
  \]
  by Lemma \ref{lem:3.1}.
  The sixth holds because
  $\lim_{x \uparrow b_1} \chi_1(x) = b_1$ is finite, and
  \[
    \lim_{x \uparrow b_1}
    \frac{1}{\chi_2(x) - \chi_1(x)}
    = \lim_{x \uparrow b_1} \frac{f'(x)}{f(x)} + \frac{1}{\chi_1(x) - a(x)}
    = 0,
  \]
  by Lemma \ref{lem:3.1} again.
\end{proof}

\noindent{\bf Induction.}
Suppose we have successfully defined $\chi_{n-1}(x)$ and $\chi_n(x)$,
on the domain $(r, b_{n-1})$,
as we have done for $n = 2$.
Suppose that for some $b_{n-2}$, $b_{n-1} \in \R$,
\begin{enumerate}
  \item $r < b_{n-1} < b_{n-2}$,
  \item $\chi_{n-1}(x)$ is continuous on $x \in (r, b_{n-2})$,
  \item $\chi_n(x)$ is continuous on $x \in (r, b_{n-1})$,
  \item $\chi_{n-1}(x) < \chi_n(x)$ for all $x \in (r, b_{n-1})$,
  \item $\lim_{x \downarrow r} \chi_n(x) = r$, and
  \item  $\lim_{x \uparrow b_{n-1}} \chi_n(x) = \infty$.
\end{enumerate}
We have already shown that these hold for the base case of $n = 2$.

We wish now to define $b_n$ and the other items to complete induction.
To begin with, we shall evaluate the limits of
$\frac{1}{\chi_n(x) - \chi_{n-1}(x)} + \frac{f'(\chi_n(x))}{f(\chi_n(x))}$
at the boundary points of its domain to allow for an intermediate value argument.

\begin{lemma}\label{lem:left-lim}
  The limit
  \[
    \lim_{x \downarrow r}
    \frac{1}{\chi_n(x) - \chi_{n-1}(x)} + \frac{f'(\chi_n(x))}{f(\chi_n(x))}
    = \infty.
  \]
\end{lemma}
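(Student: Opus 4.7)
The plan is to leverage the inductive hypothesis together with Lemma \ref{lem:3.5} (which provides that $f'(t)/f(t) \to \infty$ as $t$ approaches any root of $f$ from the right, including the case $r = -\infty$). The key observation is that the two terms to be analyzed are both nonnegative near $r$, and the second already blows up on its own, so no cancellation can occur.

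First, by inductive hypothesis (4), $\chi_n(x) > \chi_{n-1}(x)$ for $x \in (r, b_{n-1})$, so the reciprocal gap term satisfies
\[
\frac{1}{\chi_n(x) - \chi_{n-1}(x)} > 0.
\]
Next, by inductive hypothesis (5), $\lim_{x \downarrow r} \chi_n(x) = r$, and since $\chi_n(x) > \chi_{n-1}(x) \geq r$ (again by hypothesis, with both terms squeezed toward $r$), the point $\chi_n(x)$ approaches the root $r$ from within the region of strict positivity of $f$ where $f'/f$ is defined.

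Applying Lemma \ref{lem:3.5} to this limit,
\[
\lim_{x \downarrow r} \frac{f'(\chi_n(x))}{f(\chi_n(x))} = \infty.
\]
Combining these two facts, the sum
\[
\frac{1}{\chi_n(x) - \chi_{n-1}(x)} + \frac{f'(\chi_n(x))}{f(\chi_n(x))} \geq \frac{f'(\chi_n(x))}{f(\chi_n(x))}
\]
is bounded below by a quantity tending to $\infty$ as $x \downarrow r$, which gives the claim. There is no genuine obstacle here: the inductive hypothesis has already done the heavy lifting of ensuring $\chi_n(x) \to r$ with $\chi_n > \chi_{n-1}$, and Lemma \ref{lem:3.5} supplies the divergence of $f'/f$ at roots. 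The only subtlety worth double-checking is that Lemma \ref{lem:3.5} applies uniformly whether $r$ is a finite root of $f$ or $r = -\infty$, but both cases are expected to be covered in that lemma since the higher energy functions $f = cp_\ell^2 e^{-x^2/2}$ have the required behavior at both finite roots (double zeros of $f$, producing a simple pole of $f'/f$ with the correct sign) and at $-\infty$ (where $f'/f \sim -x \to \infty$).
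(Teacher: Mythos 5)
Your proof is correct and follows essentially the same approach as the paper: invoke the inductive hypothesis for positivity (respectively smallness) of the gap $\chi_n - \chi_{n-1}$ and Lemma~\ref{lem:3.5} for the divergence of $f'/f$ at $r$, then add. The only minor difference is that the paper also notes that the first term $\frac{1}{\chi_n(x)-\chi_{n-1}(x)}$ itself diverges (since both $\chi_n$ and $\chi_{n-1}$ tend to $r$), whereas you only use that it is positive; your version uses a strictly weaker fact, which is fine since divergence of the second term alone suffices.
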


\begin{proof}
  By assumption, $\chi_{n-1}(x) < \chi_n(x)$ in a neighborhood of $r$, and $\lim_{x \downarrow r} \chi_n(x) =\lim_{x\downarrow r} \chi_{n-1}(x)= r$.  Then,
  $\lim_{x \downarrow r} \frac{1}{\chi_n(x) - \chi_{n-1}(x)} = \infty$.
  Note also $\lim_{x \downarrow r} \frac{f'(\chi_n(x))}{f(\chi_n(x))} = \lim_{x\downarrow r} \frac{f'(x)}{f(x)} = \infty$, by Lemma \ref{lem:3.5}.  The desired limit holds by adding the limits.
\end{proof}

Note that the range of $\chi_n(x)$ is $(r,\infty)$ for $x\in (r, b_{n-1})$.  For $r<s<\infty$, let 
$$\chi_n^{-1}(s) = \inf\{r<z<b_{n-1}: \chi_n(z) = s\}$$ 
be the generalized inverse. If $s=\infty$, let $\chi_n^{-1}(s)=b_{n-1}$.    Note also, as $b_0$ is the next root of $f$ after $r$, or $b_0=\infty$ if no subsequent root exists, that $\chi_n^{-1}(b_0)\leq b_{n-1}$, and $f'(\chi_n(x))/f(\chi_n(x))$ is continuous for $x\in (r, \chi_n^{-1}(b_0))$.   

\begin{lemma}\label{lem:right-lim}
  The limit
  \[
    \lim_{x \uparrow \chi_n^{-1}(b_0)}
    \frac{1}{\chi_n(x) - \chi_{n-1}(x)} + \frac{f'(\chi_n(x))}{f(\chi_n(x))}
    = -\infty.
  \]
\end{lemma}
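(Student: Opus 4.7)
The plan is to split the argument into two cases based on the structure of $b_0$: either $b_0$ is a finite root of $f$, or $b_0 = \infty$. In each case, the logarithmic derivative term $f'(\chi_n(x))/f(\chi_n(x))$ will drive the sum to $-\infty$ via Lemma \ref{lem:3.5}, while the reciprocal gap $1/(\chi_n(x) - \chi_{n-1}(x))$ will either stay bounded (in the finite-root case) or vanish (in the infinite case).

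In the first case, where $b_0$ is a finite root of $f$, I first show that $\chi_n^{-1}(b_0) < b_{n-1}$ strictly: the inductive hypothesis gives $\lim_{x \uparrow b_{n-1}} \chi_n(x) = \infty$ while $\lim_{x \downarrow r} \chi_n(x) = r$, so continuity of $\chi_n$ on $(r, b_{n-1})$ together with the intermediate value theorem forces $b_0$ to be attained at some interior point, which must equal $\chi_n^{-1}(b_0)$. Then as $x \uparrow \chi_n^{-1}(b_0)$, continuity of $\chi_n$ yields $\chi_n(x) \to b_0$, and Lemma \ref{lem:3.5} gives $f'(\chi_n(x))/f(\chi_n(x)) \to -\infty$ since $b_0$ is a root of $f$. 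For the reciprocal gap term, I invoke the inductive hypothesis $\chi_{n-1} < \chi_n$ on $(r, b_{n-1})$ at the interior point $\chi_n^{-1}(b_0)$ to conclude $\chi_{n-1}(\chi_n^{-1}(b_0)) < \chi_n(\chi_n^{-1}(b_0)) = b_0$ strictly, so by continuity of $\chi_{n-1}$ on $(r, b_{n-2}) \supset (r, \chi_n^{-1}(b_0)]$, the gap $\chi_n(x) - \chi_{n-1}(x)$ stays bounded away from zero and its reciprocal remains bounded.

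In the second case, $b_0 = \infty$ and by convention $\chi_n^{-1}(b_0) = b_{n-1}$. The inductive hypothesis directly gives $\chi_n(x) \to \infty$ as $x \uparrow b_{n-1}$, and Lemma \ref{lem:3.5} again yields $f'(\chi_n(x))/f(\chi_n(x)) \to -\infty$. For the gap term, since $b_{n-1} \in (r, b_{n-2})$ and $\chi_{n-1}$ is continuous on $(r, b_{n-2})$, the value $\chi_{n-1}(b_{n-1})$ is finite; hence $\chi_n(x) - \chi_{n-1}(x) \to \infty$ and its reciprocal tends to $0$.

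I do not anticipate a serious obstacle here: the argument is essentially symmetric to that of Lemma \ref{lem:left-lim}. The only care needed is verifying the strict separation $\chi_{n-1}(\chi_n^{-1}(b_0)) < b_0$ in the finite-root case to guarantee boundedness of the reciprocal gap, but this is immediate from the inductive hypothesis that $\chi_{n-1} < \chi_n$ pointwise on $(r, b_{n-1})$ applied at the interior point $\chi_n^{-1}(b_0)$.
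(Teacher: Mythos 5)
Your proposal is correct and follows essentially the same approach as the paper: show the reciprocal gap $1/(\chi_n(x)-\chi_{n-1}(x))$ remains bounded (or vanishes) near $\chi_n^{-1}(b_0)$, while the logarithmic derivative $f'(\chi_n(x))/f(\chi_n(x))$ is driven to $-\infty$ by Lemma \ref{lem:3.5}. You spell out the $b_0$ finite versus $b_0=\infty$ cases separately and justify the strict positivity of the limiting gap in more detail, whereas the paper compresses both into a single expression $\tfrac{1}{b_0 - \chi_{n-1}(\chi_n^{-1}(b_0))}<\infty$; these are presentational rather than substantive differences.
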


\begin{proof}
  The limit $\lim_{x\uparrow \chi_n^{-1}(b_0)} \frac{1}{\chi_n(x) - \chi_{n-1}(x)} = \frac{1}{b_0 - \chi_{n-1}(\chi_n^{-1}(b_0))}<\infty$ as $\chi_n(x)>\chi_{n-1}(x)$ are continuous for $x\in (r, b_{n-1})$ by assumption.  However, by Lemma \ref{lem:3.5}, we have that
  $\lim_{x\uparrow \chi_n^{-1}(b_0)}\frac{f'(\chi_n(x))}{f(\chi_n(x))} = -\infty$, yielding the result.
\end{proof}

We now define $b_n$. 

\begin{lemma}\label{lem:this-bn}
  The value $b_n$ is well-defined and given by
  \[
    r<b_n = \min \left\{ x \in (r, \chi_n^{-1}(b_0)) \st
      \frac{1}{\chi_n(x) - \chi_{n-1}(x)} + \frac{f'(\chi_n(x))}{f(\chi_n(x))}
    = 0 \right\}< b_{n-1}.
  \]
  Also, $\frac{1}{\chi_n(x) - \chi_{n-1}(x)} + \frac{f'(\chi_n(x))}{f(\chi_n(x))}>0$ for $x\in (r, b_n)$, and $\lim_{x\uparrow b_n} \frac{1}{\chi_n(x) - \chi_{n-1}(x)} + \frac{f'(\chi_n(x)}{f(\chi_n(x))} = 0$.
\end{lemma}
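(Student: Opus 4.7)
The plan is to obtain $b_n$ as the first zero of the continuous function
\[
  g(x) := \frac{1}{\chi_n(x) - \chi_{n-1}(x)} + \frac{f'(\chi_n(x))}{f(\chi_n(x))}
\]
on the interval $(r,\chi_n^{-1}(b_0))$, via an intermediate value argument, and then verify the three additional claims by combining this with the inductive hypotheses.

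First I would record that $g$ is continuous on $(r,\chi_n^{-1}(b_0))$: the denominator $\chi_n(x)-\chi_{n-1}(x)$ is positive there by the induction hypothesis (item~4) and $\chi_n$ stays strictly below $b_0$ so that $f\circ\chi_n>0$ by the definition of $b_0$ as the next zero of $f$ (or $b_0=\infty$). Lemma~\ref{lem:left-lim} gives $g(x)\to+\infty$ as $x\downarrow r$ and Lemma~\ref{lem:right-lim} gives $g(x)\to-\infty$ as $x\uparrow\chi_n^{-1}(b_0)$, so by the intermediate value theorem the zero set $Z:=\{x\in(r,\chi_n^{-1}(b_0)):g(x)=0\}$ is nonempty. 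Moreover the $+\infty$ limit at $r$ produces an open neighborhood $(r,r+\delta)$ on which $g>0$, hence $\inf Z>r$, and by continuity of $g$ the infimum is attained in $(r,\chi_n^{-1}(b_0))$. Set $b_n:=\min Z$; by minimality and the fact that $g$ is positive near $r$, another IVT application rules out any sign change of $g$ on $(r,b_n)$, so $g>0$ throughout $(r,b_n)$. Continuity of $g$ at $b_n$ together with $g(b_n)=0$ gives $\lim_{x\uparrow b_n}g(x)=0$.

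To finish I would show $b_n<b_{n-1}$. Since $\chi_n(x)\to\infty$ as $x\uparrow b_{n-1}$ (induction item~6) and $b_0$ is either a root of $f$ or $+\infty$, we have $\chi_n^{-1}(b_0)\leq b_{n-1}$. The strict inequality $b_n<\chi_n^{-1}(b_0)$ follows because $g\to-\infty$ at $\chi_n^{-1}(b_0)$, so $b_n\leq \chi_n^{-1}(b_0)$ cannot be equality, yielding $b_n<b_{n-1}$ as required.

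The only subtlety I anticipate is book-keeping around the two possibilities $b_0\in\R$ versus $b_0=\infty$, which must be handled when asserting continuity of $f'\circ\chi_n/f\circ\chi_n$ on $(r,\chi_n^{-1}(b_0))$ and when locating $\chi_n^{-1}(b_0)$ relative to $b_{n-1}$; in the finite case continuity fails only as one approaches $\chi_n^{-1}(b_0)$, and in the infinite case $\chi_n^{-1}(b_0)=b_{n-1}$ and the $-\infty$ limit is supplied by the blow-up of $\chi_n$ itself combined with Lemma \ref{lem:3.5}. Neither case poses a real obstacle; the argument is a clean IVT plus minimality plus the inductive structure.
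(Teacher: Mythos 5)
Your proposal is correct and follows essentially the same route as the paper's proof: both invoke Lemma~\ref{lem:left-lim} and Lemma~\ref{lem:right-lim} for the boundary behavior, use continuity on $(r,\chi_n^{-1}(b_0))$ and the intermediate value theorem to obtain a least zero $b_n$ with $r<b_n<\chi_n^{-1}(b_0)\leq b_{n-1}$, and deduce positivity on $(r,b_n)$ and the left-hand limit $0$ at $b_n$ from minimality and continuity. You simply make a couple of small steps more explicit (that the infimum of the zero set is positive-distance from $r$ and is attained) than the paper's condensed exposition.
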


\begin{proof}
  By Lemma~\ref{lem:left-lim}, the limit
  $\lim_{x \downarrow r}
  \frac{1}{\chi_n(x) - \chi_{n-1}(x)} + \frac{f'(\chi_n(x))}{f(\chi_n(x))}
  = \infty$.
  As well, by Lemma~\ref{lem:right-lim}, the limit
  $\lim_{x \uparrow \chi_n^{-1}(b_0)}
  \frac{1}{\chi_n(x) - \chi_{n-1}(x)} + \frac{f'(\chi_n(x))}{f(\chi_n(x))}
  = -\infty$.
  Thus, by continuity of $\frac{1}{\chi_n(x) - \chi_{n-1}(x)} + f'(\chi_n(x))/f(\chi_n(x))$ for $x\in (r, \chi_n^{-1}(b_0))$, following by assumption, there is a least intermediate point $r < b_n < \chi_n^{-1}(b_0)\leq b_{n-1}$, where
  $$\lim_{x\uparrow b_n} \frac{1}{\chi_n(x) - \chi_{n-1}(x)}
  + \frac{f'(\chi_n(x))}{f(\chi_n(x))} = \frac{1}{\chi_n(b_n) - \chi_{n-1}(b_n)} + \frac{f'(\chi_n(b_n)}{f(\chi_n(b_n))} = 0.$$
  Also, by construction, $\frac{1}{\chi_n(x) - \chi_{n-1}(x)} + \frac{f'(\chi_n(x))}{f(\chi_n(x))}>0$ for $x\in (r, b_n)$.  
\end{proof}

We are now ready to define $\chi_{n+1}(x)$ for $x\in (r, b_n)$.
Given
$\frac{1}{\chi_n(x) - \chi_{n-1}(x)} + \frac{f'(\chi_n(x))}{f(\chi_n(x))} > 0$
on $(r, b_n)$ by Lemma \ref{lem:this-bn}, define
\begin{align}
  \label{chi-n+1}
  \chi_{n+1}(x) = \chi_n(x)
  + \left( \frac{1}{\chi_n(x) - \chi_{n-1}(x)}
  + \frac{f'(\chi_n(x))}{f(\chi_n(x))} \right)^{-1}.
\end{align}

\begin{lemma}\label{lem:cont-xn}
  The function $\chi_{n+1}(x)$ is continuous and $\chi_{n+1}(x)>\chi_n(x)$ on $(r, b_n)$.
\end{lemma}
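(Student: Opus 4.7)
The plan is to read both claims directly off the defining formula \eqref{chi-n+1}, using the inductive hypothesis together with Lemma \ref{lem:this-bn}. All the real work has already been done in choosing $b_n$ precisely so as to keep the relevant denominator positive; what remains here is essentially bookkeeping to propagate the inductive hypotheses one step further.

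For continuity on $(r, b_n)$, I would argue that each piece of the right-hand side of \eqref{chi-n+1} is continuous there. By the inductive hypothesis, $\chi_n$ and $\chi_{n-1}$ are continuous on $(r, b_{n-1}) \supset (r, b_n)$ with $\chi_n(x) > \chi_{n-1}(x)$, so $1/(\chi_n(x) - \chi_{n-1}(x))$ is continuous on $(r, b_n)$. Next, Lemma \ref{lem:this-bn} gives $b_n < \chi_n^{-1}(b_0)$, so that the image $\chi_n((r, b_n))$ lies in $(r, b_0)$, the region of strict positivity of $f$ to the right of $r$; on this region $f$ is smooth and strictly positive, so $f'/f$ is continuous, and hence the composition $f'(\chi_n(x))/f(\chi_n(x))$ is continuous on $(r, b_n)$. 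Finally, Lemma \ref{lem:this-bn} asserts strict positivity of the sum
\[
  \frac{1}{\chi_n(x) - \chi_{n-1}(x)} + \frac{f'(\chi_n(x))}{f(\chi_n(x))}
\]
on $(r, b_n)$, so taking the reciprocal preserves continuity, and adding $\chi_n(x)$ yields a continuous function $\chi_{n+1}(x)$.

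The strict inequality $\chi_{n+1}(x) > \chi_n(x)$ is then immediate from \eqref{chi-n+1}: the term added to $\chi_n(x)$ is the reciprocal of a strictly positive quantity (again by Lemma \ref{lem:this-bn}), and therefore strictly positive. I do not anticipate any substantive obstacle; the lemma is in effect a summary of properties that were engineered into the construction of $b_n$ and the formula for $\chi_{n+1}$, and its role is largely to feed the induction forward so that analogues of Lemmas \ref{lem:left-lim}, \ref{lem:right-lim}, and \ref{lem:this-bn} can be invoked at the next stage.
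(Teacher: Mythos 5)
Your proof is correct and follows essentially the same approach as the paper: read off continuity of each piece of the defining formula from the inductive hypotheses and the positivity of the denominator furnished by Lemma \ref{lem:this-bn}, then note the added term is strictly positive. The only cosmetic difference is that you spell out why $f'(\chi_n(x))/f(\chi_n(x))$ is continuous via the image of $\chi_n$ lying in $(r, b_0)$, whereas the paper relies on the observation made just before Lemma \ref{lem:right-lim} that this composition is continuous on $(r, \chi_n^{-1}(b_0))$.
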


\begin{proof}
  On the domain $(r, b_n)\subset (r, b_{n-1})$,
  the functions $\chi_{n-1}(x)$ and $\chi_n(x)$ are both continuous and $\chi_{n-1}(x) < \chi_n(x)$ by assumption.
  Hence, $\frac{1}{\chi_n(x) - \chi_{n-1}(x)}$
  is continuous on $(r, b_n)$.
  By the definition of $b_n$, $\frac{f'(\chi_n(x))}{f(\chi_n(x))}$ is continuous on $(r, b_n)$.  Moreover, by Lemma \ref{lem:this-bn}, $\frac{1}{\chi_n(x) - \chi_{n-1}(x)} + \frac{f'(\chi_n(x))}{f(\chi_n(x))}>0$
  on $(r, b_n)$.
  Thus, its reciprocal is continuous and positive on $(r, b_n)$.  Therefore, $\chi_{n+1}(x)$ is continuous and $\chi_{n+1}(x)>\chi_n(x)$ on $(r, b_n)$.
\end{proof}

We now wrap up the definition of $\chi_{n+1}$ by showing
it satisfies the remaining properties required to complete the induction.

\begin{lemma}\label{lem:next-left}
  The limits
  $\lim_{x \downarrow r} \chi_{n+1}(x) = r$ and $\lim_{x\uparrow b_n}\chi_{n+1}(x)=\infty$.
\end{lemma}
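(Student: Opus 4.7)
The plan is to evaluate each of the two limits separately by substituting the definition
\[
  \chi_{n+1}(x) = \chi_n(x)
  + \left( \frac{1}{\chi_n(x) - \chi_{n-1}(x)}
  + \frac{f'(\chi_n(x))}{f(\chi_n(x))} \right)^{-1}
\]
and invoking the limits already computed in Lemmas \ref{lem:left-lim} and \ref{lem:this-bn}, together with the inductive hypotheses on $\chi_n$ and $\chi_{n-1}$.

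For the left limit $x \downarrow r$, I would first note that by the inductive hypothesis $\lim_{x \downarrow r}\chi_n(x) = r$, so the leading term $\chi_n(x)$ contributes $r$ in the limit. Then Lemma \ref{lem:left-lim} gives
\[
  \lim_{x \downarrow r}
  \Bigl(\tfrac{1}{\chi_n(x) - \chi_{n-1}(x)} + \tfrac{f'(\chi_n(x))}{f(\chi_n(x))}\Bigr) = \infty,
\]
so its reciprocal vanishes. Adding the two contributions yields $\lim_{x \downarrow r}\chi_{n+1}(x) = r$.

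For the right limit $x \uparrow b_n$, I would first observe that $\chi_n(x)$ remains finite at $b_n$: indeed, Lemma \ref{lem:this-bn} gives $b_n < \chi_n^{-1}(b_0) \leq b_{n-1}$, and $\chi_n$ is continuous on $(r, b_{n-1})$ by the inductive hypothesis, so $\chi_n(b_n)$ is finite. On the other hand, Lemma \ref{lem:this-bn} also says that
\[
  \lim_{x \uparrow b_n}\Bigl(\tfrac{1}{\chi_n(x) - \chi_{n-1}(x)} + \tfrac{f'(\chi_n(x))}{f(\chi_n(x))}\Bigr) = 0,
\]
and that this expression is strictly positive on $(r, b_n)$; hence its reciprocal tends to $+\infty$. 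Adding a bounded term to a term diverging to $+\infty$ gives $\lim_{x \uparrow b_n}\chi_{n+1}(x) = \infty$, as required.

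There is no real obstacle in this step: everything has been prepared by Lemmas \ref{lem:left-lim} and \ref{lem:this-bn}, and the only subtlety worth writing out explicitly is the justification that $\chi_n(b_n)$ is finite (which relies on the strict inequality $b_n < b_{n-1}$ from Lemma \ref{lem:this-bn} combined with continuity of $\chi_n$ on $(r,b_{n-1})$, so that the divergence of $\chi_n$ is confined to the endpoint $b_{n-1}$ and does not happen at the interior point $b_n$).
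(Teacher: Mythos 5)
Your proof is correct and follows essentially the same route as the paper: both arguments substitute the definition of $\chi_{n+1}$, use the inductive hypotheses together with Lemma \ref{lem:left-lim} to show the increment vanishes as $x\downarrow r$, and use Lemma \ref{lem:this-bn} together with continuity (and $b_n < b_{n-1}$) to show $\chi_n(b_n)$ is finite while the increment diverges as $x\uparrow b_n$. The only cosmetic difference is that you invoke Lemma \ref{lem:left-lim} as a black box for the sum, whereas the paper separates out the $f'/f$ term; the content is identical.
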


\begin{proof}
  By assumption, $\lim_{x \downarrow r} \chi_n(x)
  = r$.  
  Then, by Lemma~\ref{lem:left-lim},
  $\lim_{x \downarrow r} \frac{f'(\chi_n(x))}{f(\chi_n(x))}
  = \infty$.  Since $\chi_n(x)>\chi_{n-1}(x)$ for $x\in (r, b_n)\subset (r, b_{n-1})$ by assumption,
  we have that
  $\lim_{x\downarrow r} \big(\frac{1}{\chi_n(x) - \chi_{n-1}(x)}
  + \frac{f'(\chi_n(x))}{f(\chi_n(x))} \big)^{-1} = 0$.  Hence, noting the definition of $\chi_{n+1}(x)$, the first limit holds.

  Next, as $\chi_n(x)$ is continuous at $x=b_n\in (r, b_{n-1})$ by assumption,
  $\lim_{x\uparrow b_n}\chi_n(x) = \chi_n(b_n)<\infty$.  Also, by Lemma \ref{lem:this-bn},
  $\lim_{x\uparrow b_n}\big(\frac{1}{\chi_n(x) - \chi_{n-1}(x)}
  + \frac{f'(\chi_n(x))}{f(\chi_n(x))} \big)^{-1} = \infty$.  The second limit holds as a consequence.
\end{proof}

For convenience, we summarize what we have shown via Lemmas \ref{lem:this-bn}, \ref{lem:cont-xn} and \ref{lem:next-left} in one place.

\begin{proposition}
  \label{prop:summary}
  The following properties hold:
  \begin{enumerate}
    \item $r < b_n < b_{n-1}$,
    \item $\chi_n(x)$ is continuous on $x \in (r, b_{n-1})$,
    \item $\chi_{n+1}(x)$ is continuous on $x \in (r, b_n)$,
    \item $\chi_n(x) < \chi_{n+1}(x)$ for all $x \in (r, b_n)$,
    \item $\lim_{x \to r} \chi_{n+1}(x) = r$, and
    \item $\lim_{x \to b_n} \chi_{n+1}(x) = \infty$,
  \end{enumerate}
  where recall that $r \in \R$ is a root of $f$, or $r = -\infty$, and $b_0$ is the next root after $r$ of $f$, or if there no subsequent root, $b_0=\infty$.
\end{proposition}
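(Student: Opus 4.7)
The plan is to observe that Proposition~\ref{prop:summary} is a direct compilation of the three lemmas immediately preceding it, whose role is to close the inductive step begun at $n=2$. My proof would simply match each of the six bullets to the lemma that supplies it, verify that the conclusions reinstate the inductive hypotheses at the next index, and note that the induction then runs unimpeded.

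For item~(1), the bound $r < b_n$ comes from Lemma~\ref{lem:this-bn}, which locates $b_n$ by an intermediate value argument applied to $\frac{1}{\chi_n(x)-\chi_{n-1}(x)} + \frac{f'(\chi_n(x))}{f(\chi_n(x))}$, using its divergence to $+\infty$ at $r$ (Lemma~\ref{lem:left-lim}) and to $-\infty$ at $\chi_n^{-1}(b_0)$ (Lemma~\ref{lem:right-lim}). The upper bound $b_n < b_{n-1}$ follows from the chain $b_n < \chi_n^{-1}(b_0) \leq b_{n-1}$ noted in Lemma~\ref{lem:this-bn}. Item~(2) is just the inductive hypothesis carried into the present step. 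Item~(3) is Lemma~\ref{lem:cont-xn}. Item~(4) also follows from Lemma~\ref{lem:cont-xn}, combined with the strict positivity on $(r,b_n)$ of the bracketed sum supplied by Lemma~\ref{lem:this-bn}; indeed the increment $\chi_{n+1}(x)-\chi_n(x)$ is then the reciprocal of a strictly positive quantity. Items~(5) and (6) are precisely the two limits established in Lemma~\ref{lem:next-left}.

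With these six properties in hand at index $n+1$ on the nested domain $(r, b_n) \subset (r, b_{n-1})$, the inductive hypotheses for the next step are reinstated, so the construction of $b_{n+1}$ and $\chi_{n+2}$ proceeds in exactly the same fashion, and the induction closes. No step here is a serious obstacle; the one point that requires attention is ensuring $\chi_n^{-1}(b_0) \leq b_{n-1}$ so that $f'(\chi_n(x))/f(\chi_n(x))$ is continuous on the interval in which $b_n$ is sought, and this is already observed in the discussion preceding Lemma~\ref{lem:right-lim}.
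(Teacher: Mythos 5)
Your proposal is correct and matches the paper's own treatment: the paper presents Proposition~\ref{prop:summary} explicitly as a compilation of Lemmas~\ref{lem:this-bn}, \ref{lem:cont-xn}, and \ref{lem:next-left} closing the inductive step, and your bullet-by-bullet matching (including noting that item~(2) is the carried inductive hypothesis and that $\chi_n^{-1}(b_0)\leq b_{n-1}$ keeps $f'/f$ continuous where needed) is exactly the intended reading.
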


\subsection{Properties of the functions $\chi_n$}

In the following, let $N\geq 1$.  Recall the definition of MIW sequence \eqref{eq:hdw-seq}.

\begin{lemma}\label
  {lem:left-bc}
  For $x \in (r, b_N)$,
  the sequence $(\chi_n(x))_{n=1}^{N}$
  is an MIW sequence,  satisfying the left boundary condition
  with boundary point $a(x)$, and the right boundary condition
  with boundary point $\chi_{N+1}(x)$.
  Moreover, for $x = b_N$,
  the sequence $(\chi_n(x))_{n=1}^N$
  satisfies the right boundary condition
  with boundary $\infty$.

\end{lemma}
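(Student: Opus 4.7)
The lemma is essentially a bookkeeping assembly of the construction in Section \ref{sect3.1}: the defining identity \eqref{chi-n+1} for $\chi_{n+1}(x)$ is precisely the MIW recursion \eqref{eq:hdw-seq} rearranged, so the plan is simply to unpack this correspondence and check the two boundary identities. The main preparatory step is to confirm that the sequence is well-posed on the relevant index range. Since $b_N < b_{N-1} < \cdots < b_1 < b_0$ by iterating Proposition \ref{prop:summary}(1), the interval $(r, b_N]$ lies in the domain of every $\chi_n$ for $n = 1, \ldots, N+1$; moreover, the inequality $b_n < \chi_n^{-1}(b_0)$ built into Lemma \ref{lem:this-bn} guarantees that each $\chi_n(x)$ stays in the strict positivity region $(r, b_0)$ of $f$ whenever $x \in (r, b_N]$ and $n \leq N$, so every quotient $f'(\chi_n(x))/f(\chi_n(x))$ appearing below is finite, and strict monotonicity $\chi_n(x) < \chi_{n+1}(x)$ for $n = 1, \ldots, N$ follows from Proposition \ref{prop:summary}(4).

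Next, I would verify the interior MIW relation \eqref{eq:hdw-seq} for $1 < n < N$ by simply rewriting \eqref{chi-n+1}: taking the reciprocal of $\chi_{n+1}(x) - \chi_n(x)$ and subtracting $1/(\chi_n(x) - \chi_{n-1}(x))$ yields $f'(\chi_n(x))/f(\chi_n(x))$ on the nose. The left boundary condition at $x_0 = a(x)$ follows identically from the definition $\chi_2(x) = x + \bigl(1/(x - a(x)) + f'(x)/f(x)\bigr)^{-1}$ given just after Lemma \ref{lem:3.1}, combined with $\chi_1(x) = x$; the convention $1/(x - a(x)) = 0$ in the case $a(x) \equiv -\infty$ is consistent with the interpretation of the left boundary condition at $-\infty$.

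For the right boundary assertion with $x \in (r, b_N)$ and boundary point $\chi_{N+1}(x)$, this is nothing more than the recursion \eqref{eq:hdw-seq} at index $n = N$, which is again the rearrangement of \eqref{chi-n+1} with $n = N$ (valid since $x$ lies in the domain $(r, b_N)$ of $\chi_{N+1}$). For the limiting case $x = b_N$, Lemma \ref{lem:this-bn} supplies the key vanishing identity $\frac{1}{\chi_N(b_N) - \chi_{N-1}(b_N)} + \frac{f'(\chi_N(b_N))}{f(\chi_N(b_N))} = 0$, which is exactly the right boundary condition with boundary $\infty$ under the convention $1/(\infty - \chi_N(b_N)) = 0$. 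There is no substantive obstacle here: all the analytical work---existence of the thresholds $b_n$, continuity and monotonicity of $\chi_n$, and the vanishing identity at $b_n$---was carried out in Section \ref{sect3.1}, and this lemma is its direct restatement in MIW terms.
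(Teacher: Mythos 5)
Your proposal is correct and follows essentially the same route as the paper's own proof: the MIW relation is read off directly from \eqref{chi-n+1}, the left boundary condition comes from the definition of $\chi_2$, the right boundary condition at $\chi_{N+1}(x)$ comes from \eqref{chi-n+1} at index $N$, and the $\infty$-boundary at $x=b_N$ comes from the vanishing identity in Lemma \ref{lem:this-bn}. One minor imprecision: you write that $(r,b_N]$ lies in the domain of $\chi_{N+1}$, but $\chi_{N+1}$ is defined only on the open interval $(r,b_N)$ (with $\chi_{N+1}(x)\to\infty$ as $x\uparrow b_N$); this does not damage the argument since you handle $x=b_N$ separately via the boundary condition at $\infty$, but the domain claim should be restricted to $n\le N$.
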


\begin{proof}
  By Lemma~\ref{lem:cont-xn}, the sequence is increasing, and
  the defining relation of $\chi_{n+1}(x)$ in \eqref{chi-n+1} is the MIW recursion relation.  Therefore, $(\chi_n(x))_{n=1}^N$ is an MIW sequence.

  Rearranging the form of $\chi_2(x)$ in \eqref{chi-n+1}, we obtain
  $\frac{1}{\chi_2(x) - \chi_1(x)} - \frac{1}{\chi_1(x) - a(x)}
  = \frac{f'(\chi_1(x))}{f(\chi_1(x))}$,
  which is the left boundary condition at $a(x)$.

  Also, by definition of $\chi_{N+1}(x)$ in \eqref{chi-n+1}, we have
  $\frac{1}{\chi_{N+1}(x) - \chi_N(x)} - \frac{1}{\chi_N(x) - \chi_{N-1}(x)}
  = \frac{f'(\chi_N(x))}{f(\chi_N(x))}$,
  which is the right boundary condition at $\chi_{N+1}(x)$.

  Finally, the point $b_N$ is defined (Lemma \ref{lem:this-bn}) so that by continuity the relation
  $$-\frac{1}{\chi_N(b_N) - \chi_{N-1}(b_N)}
  = \frac{f'(\chi_N(b_N))}{f(\chi_N(b_N))}$$
  holds,
  which is the right boundary condition at $\infty$.
\end{proof}

The above development allows to construct an MIW sequence, with respect to higher energy functions, albeit in a single strictly positive region of $f$, on $(r, b_0)$.

\begin{lemma}\label{lem:chi-bdd}
  The sequence $(\chi_k(b_N))_{k=1}^N$ is an MIW sequence of $f$, bounded in the interval $(r, b_0)$, satisfying the left boundary condition $a(x)$ and right boundary condition $\infty$.
  Moreover, 
  For $n\geq 1$ and $x \in (r, b_n]$,
  \[
    r < \chi_1(x) < \cdots < \chi_n(x) < b_0.
  \]
  We note $\chi_{n+1}(x)$ is defined for $x\in (r, b_n)$,
  however it is not bounded above,
  as $\lim_{x \uparrow b_n} \chi_{n+1}(x) = \infty$.
\end{lemma}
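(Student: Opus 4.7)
My plan is to derive the lemma as essentially a packaging of Proposition \ref{prop:summary} and Lemma \ref{lem:left-bc}, with the main work being the chain of inequalities $r < \chi_1(x) < \cdots < \chi_n(x) < b_0$ for $x \in (r, b_n]$, which I will prove by induction on $n$. Once this is in hand, specializing to $x = b_N$ gives that $(\chi_k(b_N))_{k=1}^N$ lies in $(r, b_0)$, while the MIW recursion together with the claimed left and right boundary conditions is immediate from Lemma \ref{lem:left-bc}.

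For the base case $n=1$, we have $\chi_1(x) = x$, so for $x \in (r, b_1]$ the chain reduces to $r < x \leq b_1 < b_0$, where the last inequality is Proposition \ref{prop:summary}(1). For the inductive step, assume the chain holds at level $n$. Restricting to $x \in (r, b_{n+1}] \subset (r, b_n]$ (using Proposition \ref{prop:summary}(1) again), the inductive hypothesis gives $r < \chi_1(x) < \cdots < \chi_n(x) < b_0$, and Proposition \ref{prop:summary}(4) appends $\chi_n(x) < \chi_{n+1}(x)$. What remains is the upper bound $\chi_{n+1}(x) < b_0$.

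The main obstacle is this upper bound, since $\chi_{n+1}$ has not been shown to be monotone in $x$, so the containment $b_{n+1} < \chi_{n+1}^{-1}(b_0)$ from Lemma \ref{lem:this-bn} does not instantly translate to a pointwise bound. To handle this, I will use continuity of $\chi_{n+1}$ on $(r, b_n)$ (Proposition \ref{prop:summary}(3)) together with the boundary behavior $\lim_{x \downarrow r} \chi_{n+1}(x) = r < b_0$ (Proposition \ref{prop:summary}(5)). An intermediate value argument then forces $\chi_{n+1}$ to remain strictly below $b_0$ on $(r, \chi_{n+1}^{-1}(b_0))$: otherwise, if $\chi_{n+1}(x_0) \geq b_0$ for some $x_0 < \chi_{n+1}^{-1}(b_0)$, continuity would produce some $x_1 \leq x_0$ with $\chi_{n+1}(x_1) = b_0$, contradicting the defining infimum of $\chi_{n+1}^{-1}(b_0)$. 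Since $b_{n+1} < \chi_{n+1}^{-1}(b_0)$, this yields $\chi_{n+1}(x) < b_0$ for all $x \in (r, b_{n+1}]$, closing the induction.

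Evaluating the chain at $x = b_N$ gives the boundedness $r < \chi_1(b_N) < \cdots < \chi_N(b_N) < b_0$, and Lemma \ref{lem:left-bc} (applied at the endpoint $x = b_N$) supplies both the MIW recursion and the fact that the left and right boundary points are $a(b_N)$ and $\infty$, respectively. The closing remark that $\chi_{n+1}(x)$ is defined but unbounded on $(r, b_n)$ is just Proposition \ref{prop:summary}(3) and (6). I expect the only subtlety to be the intermediate value argument for $\chi_{n+1}(x) < b_0$; everything else reduces to citing the already-established properties.
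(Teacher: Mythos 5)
Your proof is correct. The packaging (induction on $n$, with the MIW recursion and boundary conditions handled by Lemma~\ref{lem:left-bc}, and the closing remark on unboundedness from Proposition~\ref{prop:summary}) matches the paper's structure. The one place where your route genuinely differs is the upper bound $\chi_{n+1}(x) < b_0$. The paper argues this by supposing $\chi_n(x_0) = b_0$ for some $x_0 \in (r, b_n]$ and deriving a contradiction directly from the analytic behavior of $f'/f$ at the root $b_0$: since $f'(\chi_n(x))/f(\chi_n(x)) \to \mp\infty$ as $\chi_n(x) \to b_0$, the MIW increment $\chi_{n+1}(x_0) - \chi_n(x_0)$ would vanish, contradicting the strict inequality $\chi_{n+1} > \chi_n$ from Proposition~\ref{prop:summary}. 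You instead avoid appealing to the root structure of $f$ altogether, and lean on the already-defined quantity $\chi_{n+1}^{-1}(b_0)$ and the bound $b_{n+1} < \chi_{n+1}^{-1}(b_0)$ from Lemma~\ref{lem:this-bn}: by the infimum definition $\chi_{n+1}$ does not hit $b_0$ anywhere on $(r, \chi_{n+1}^{-1}(b_0))$, and since $\chi_{n+1}$ limits to $r < b_0$ at the left endpoint, IVT keeps it strictly below $b_0$ on $(r, \chi_{n+1}^{-1}(b_0)) \supset (r, b_{n+1}]$. Both arguments are valid and about the same length; yours is somewhat more structural (it reuses the relations between $b_n$ and $\chi_n^{-1}(b_0)$ already encoded in Lemma~\ref{lem:this-bn}), while the paper's is a bit more self-contained (it works directly from the singularity of $f'/f$ near $b_0$). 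You also correctly note that one cannot shortcut the IVT step by monotonicity of $\chi_{n+1}$ in $x$, since that is only established later (Lemma~\ref{lem:lcderivative}).
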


\begin{proof}
  By Lemma~\ref{lem:left-bc}, $(\chi_k(b_N))_{k=1}^N$ is an MIW sequence,
  satisfying the left and right boundary conditions.

  Recall that $r<x=\chi_1(x)<\chi_k(x)$ for $2\leq k\leq n$ on the interval $(r, b_n]$ (cf. Proposition \ref{prop:summary}).   The lower bound in the second statement follows as a consequence.

  For the upper bound, suppose $\chi_n(x_0) = b_0$ for some $x_0 \in (r, b_n]$.  Note that
  $\chi_n(x_0) - \chi_{n-1}(x_0)>0$, as $r<x_0\leq b_n<b_{n-1}$, by Proposition \ref{prop:summary} again.  Also, $\lim_{x\uparrow b_0} f'(x)/f(x) = \infty$ by Lemma \ref{lem:3.5}.
  Then, 
  \[
    \chi_{n+1}(x_0) = \lim_{x\uparrow x_0} \chi_{n+1}(x)
    = \lim_{x\uparrow x_0} \chi_n(x)
    + \left( \frac{1}{\chi_n(x) - \chi_{n-1}(x)}
    + \frac{f'(\chi_n(x))}{f(\chi_n(x))} \right)^{-1}
    = \chi_n(x_0),
  \]
  which contradicts the monotonicity $\chi_{n+1}(x)>\chi_n(x)$ for $r<x\leq b_n$ (cf. Proposition \ref{prop:summary}). 
  Thus, $\chi_n(x) \neq b_0$.
  However, by Proposition \ref{prop:summary}, the limit $\lim_{x \downarrow r} \chi_n(x) = r < b_0$ and $\chi_n(x)$ is continuous for $x\in (r, b_n]\subset (r, b_{n-1})$.
  Therefore, $\chi_k(x)<\chi_n(x) < b_0$ for $1\leq k\leq n-1$ and $x\in (r, b_n]$.
\end{proof}

When $f$ is the Normal density, we have indeed constructed, as a direct consequence of Lemma \ref{lem:chi-bdd}, the desired MIW sequence, as also done in \cite{mckeague_levin_2016}.

\begin{corollary}[Normal $\ell =0$]
  \label{cor:Normal-exist}
  Let $f(x) = \frac{1}{\sqrt{2\pi}} e^{-\frac 12 x^2}$.
  Choose $a(x) \equiv -\infty$.  As $f$ has no roots, $b_0=\infty$.
  The sequence $(\chi_n(b_N))_{n=1}^N$, with $N\geq 2$,
  is an MIW sequence of $f$
  satisfying both the left boundary condition at $-\infty$,
  and right boundary condition at $\infty$.
\end{corollary}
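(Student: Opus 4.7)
The plan is to reduce the statement to a direct application of Lemma \ref{lem:left-bc}, since the Normal density is exactly the degenerate case of the general construction in which no roots of $f$ need to be bridged.

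First, I would verify that the Normal setting fits the framework of Section \ref{sect3.1}. Since $f(x) = \frac{1}{\sqrt{2\pi}} e^{-\frac{1}{2}x^2}$ is strictly positive on $\R$, it has no roots, so we may take $r = -\infty$ and $b_0 = \infty$, and the single strictly positive region $R_0 = \R$ coincides with $(r,b_0)$. The choice $a(x)\equiv -\infty$ is explicitly permitted, together with the convention $\frac{1}{x-a(x)} = 0$, which makes the hypotheses $a(x)<x$ and $0\le \frac{da(x)}{dx}\le 1$ vacuously satisfied. Log-concavity and smoothness of $f$ (Lemma \ref{lem:higher-order} with $\ell=0$) supply the remaining ingredients used by Lemmas \ref{lem:3.1}--\ref{lem:next-left}.

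Next, I would confirm that the induction of Section \ref{sect3.1} produces the data needed. By Proposition \ref{prop:summary}, the strictly decreasing sequence $\cdots < b_N < b_{N-1}<\cdots < b_1 < b_0 = \infty$ is well defined, each $\chi_n$ is continuous on $(r,b_{n-1})$, and in particular $b_N$ is defined for every $N\ge 2$. No concatenation across zeros is required here, since $f$ has none; the entire MIW sequence lives in the single region $R_0$.

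Finally, I would invoke Lemma \ref{lem:left-bc} at $x=b_N$. That lemma directly yields that $(\chi_n(b_N))_{n=1}^N$ is an MIW sequence satisfying the right boundary condition at $\infty$. For the left boundary condition at $-\infty$, the convention $\frac{1}{\chi_1(b_N)-a(b_N)} = 0$ collapses the defining identity \eqref{chi-n+1} for $\chi_2(b_N)$ into
\[
\frac{1}{\chi_2(b_N)-\chi_1(b_N)} = \frac{f'(\chi_1(b_N))}{f(\chi_1(b_N))},
\]
which is precisely the left boundary condition at $-\infty$. Honestly, there is no real obstacle in this argument: the substantive work of constructing the $\chi_n$ and identifying $b_N$ has already been carried out in Section \ref{sect3.1}, and the corollary is essentially a restatement, specialized to $\ell=0$, of Lemma \ref{lem:left-bc}.
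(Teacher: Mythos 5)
Your proposal is correct and takes essentially the same approach as the paper, which derives the corollary as a direct consequence of Lemma \ref{lem:chi-bdd} (itself a packaging of Lemma \ref{lem:left-bc} evaluated at $x=b_N$). The only cosmetic difference is that you cite Lemma \ref{lem:left-bc} directly and unwind the $a(x)\equiv -\infty$ convention by hand, whereas the paper defers to Lemma \ref{lem:chi-bdd}, which already records that $(\chi_k(b_N))_{k=1}^N$ satisfies both boundary conditions.
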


In Section \ref{sec:exist}, we will show how to construct sequences for higher energy $f$ when $\ell\geq 1$.  We now give two results that will be used
to help define the choice of $a(x)$ in this respect.

\begin{lemma}\label{lem:lcderivative}
  For $1\leq m \leq n$ and $x \in (r, b_{n-1})$, the derivative
  \begin{align*}
    \frac{d\chi_n}{d\chi_m}(x) \geq 1.
  \end{align*}
  In particular, as $\chi_1(x)=x$, we have $\frac{d\chi_n}{dx}\geq 1$ for $x\in (r, b_{n-1})$.
\end{lemma}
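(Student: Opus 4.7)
The strategy is to reduce the claim to a monotonicity property of the derivatives $\chi_k'(x) := d\chi_k/dx$ in $k$, and to establish that monotonicity by induction, using the log-concavity of $f$ from Lemma \ref{lem:higher-order}. I will show along the way that $\chi_m'(x) \ge 1 > 0$, so that $\chi_m(x)$ is a strictly increasing $C^1$ bijection onto its image; then the inverse function theorem yields
\[
\frac{d\chi_n}{d\chi_m}(x) \;=\; \frac{\chi_n'(x)}{\chi_m'(x)},
\]
and the lemma reduces to the comparison $\chi_n'(x) \ge \chi_m'(x)$ for $n\ge m$.

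Setting $u := f'/f$, I would differentiate the defining relation $\frac{1}{\chi_{k+1}-\chi_k} - \frac{1}{\chi_k-\chi_{k-1}} = u(\chi_k)$ (implicit in \eqref{chi-n+1}) with respect to $x$ and solve for $\chi_{k+1}'-\chi_k'$ to obtain the key identity
\[
\chi_{k+1}' - \chi_k' \;=\; \left(\frac{\chi_{k+1}-\chi_k}{\chi_k-\chi_{k-1}}\right)^{2}\bigl(\chi_k' - \chi_{k-1}'\bigr) \;-\; (\chi_{k+1}-\chi_k)^2\, u'(\chi_k)\, \chi_k',
\]
interpreting the first summand on the right as $0$ when $a(x)\equiv -\infty$, consistent with the convention $1/(\chi_1-a(x))=0$. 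By Lemma \ref{lem:higher-order}, $f$ is log-concave on each region of strict positivity, so $u' = (\log f)'' \le 0$, and hence $-u'(\chi_k)\chi_k' \ge 0$ whenever $\chi_k' \ge 0$. Under the inductive hypothesis $\chi_k' \ge \chi_{k-1}' \ge 0$, both summands on the right are nonnegative, which yields $\chi_{k+1}' \ge \chi_k'$.

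For the base step $k=1$, I would use $\chi_1'(x) = 1$ together with $\chi_0'(x) = a'(x) \in [0,1]$ (taken as $0$ when $a\equiv -\infty$), giving $\chi_1' \ge \chi_0' \ge 0$; applying the identity above at $k=1$ then produces $\chi_2' \ge \chi_1' = 1$, launching the induction. Iterating, $\chi_n'(x) \ge \chi_{n-1}'(x) \ge \cdots \ge \chi_1'(x) = 1$, so $\chi_n'(x)\ge \chi_m'(x)\ge 1$ for $1\le m\le n$, and the ratio formula completes the proof. The main obstacle I anticipate is bookkeeping the two cases of $a(x)$ uniformly in the base step and confirming the sign of each summand in the differentiated recursion; once log-concavity is invoked in the right form, the inductive propagation is routine.
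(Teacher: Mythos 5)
Your proposal is correct and follows essentially the same route as the paper: differentiate the MIW recursion, invoke log-concavity of $f$ (Lemma \ref{lem:higher-order}) to control the sign of $(f'/f)'$, use $0 \le a'(x) \le 1$ in the base case, and induct on the step index before chaining. Reframing the computation via $\chi_k'(x) = d\chi_k/dx$ and the identity $\chi_{k+1}' - \chi_k' = \bigl(\tfrac{\chi_{k+1}-\chi_k}{\chi_k-\chi_{k-1}}\bigr)^2(\chi_k'-\chi_{k-1}') - (\chi_{k+1}-\chi_k)^2 u'(\chi_k)\chi_k'$ is just a cosmetic rearrangement of the paper's direct computation of $d\chi_{m+1}/d\chi_m = 1 - (\cdots)^{-2}\,\tfrac{d}{d\chi_m}(\cdots)$.
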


\begin{proof}
  We will show first that $d\chi_{m+1}(x)/d\chi_m(x)\geq 1$.
  As a base case, when $m = 1$,
  the derivative
  \begin{align*}
    \frac{d\chi_2}{d\chi_1}
    &= \frac{d}{d\chi_1} \left[
      \chi_1 + \left( \frac{1}{\chi_1 - a(x)}
      + \frac{f'(\chi_1)}{f(\chi_1)} \right)^{-1}
    \right] \\
    &= 1 - \left( \frac{1}{\chi_1 - a(x)}
    + \frac{f'(\chi_1)}{f(\chi_1)} \right)^{-2}
    \frac{d}{d\chi_1} \left( \frac{1}{\chi_1 - a(x)}
    + \frac{f'(\chi_1)}{f(\chi_1)} \right).
  \end{align*}
  The squared term is positive.
  It will be enough to show that
  $\frac{d}{d\chi_1} \left( \frac{1}{\chi_1 - a(x)}
  + \frac{f'(\chi_1)}{f(\chi_1)} \right)$ is negative.
  Because $\frac{da(x)}{dx} \leq 1$, and $\chi_1(x) = x$,
  \begin{align*}
    \frac{d}{d\chi_1} \frac{1}{\chi_1 - a(x)}
    = -\left(x - a(x)\right)^{-2} \left(1 - \frac{da(x)}{dx}\right)
    \leq 0.
  \end{align*}
  On the other hand, by log-concavity of $f$ (Lemma \ref{lem:higher-order}), the derivative
  $\frac{d}{d\chi_1} \frac{f'(\chi_1)}{f(\chi_1)}
  = \frac{d}{dx} \frac{f'(x)}{f(x)}$ is negative, finishing the base case.

  For $m\geq 2$, the derivative
  \begin{align*}
    \frac{d\chi_{m+1}}{d\chi_m} 
    &= \frac{d}{d\chi_m} \left[
      \chi_m + \left( \frac{1}{\chi_m - \chi_{m-1}}
      + \frac{f'(\chi_m)}{f(\chi_m)} \right)^{-1}
    \right] \\
    &= 1
    - \left( \frac{1}{\chi_m - \chi_{m-1}}
    + \frac{f'(\chi_m)}{f(\chi_m)} \right)^{-2}
    \frac{d}{d\chi_m} \left( \frac{1}{\chi_m - \chi_{m-1}}
    + \frac{f'(\chi_m)}{f(\chi_m)} \right).
  \end{align*}
  Since
  $\big( \frac{1}{\chi_m - \chi_{m-1}}
  + \frac{f'(\chi_m)}{f(\chi_m)} \big)^{-2}$
  is positive, once more
  we wish to show that
  $\frac{d}{d\chi_m} \big( \frac{1}{\chi_m - \chi_{m-1}}
  + \frac{f'(\chi_m)}{f(\chi_m)} \big)$
  is negative.
  By the log-concavity of $f$ again,
  the term $\frac{f'(\chi_m)}{f(\chi_m)}$
  is decreasing as a function of $\chi_m$,
  so its derivative with respect to $\chi_m$ is negative.
  But, the derivative
  \begin{align*}
    \frac{d}{d\chi_m} \frac{1}{\chi_m - \chi_{m-1}}
    &= -\left(\chi_m - \chi_{m-1}\right)^{-2}
    \frac{d}{d\chi_m} (\chi_m - \chi_{m-1}) \\
    &= -\left(\chi_m - \chi_{m-1}\right)^{-2}
    \left(1 - \frac{d\chi_{m-1}}{d\chi_m}\right)\leq 0,
  \end{align*}
  since $\frac{d\chi_{m-1}}{d\chi_m} \leq 1$
  by inductive assumption.  Hence, $\frac{d\chi_{m+1}}{d\chi_m}\geq 1$.

  Finally, to finish, we write
  \[\frac{d\chi_n}{d\chi_m} = \prod_{k=m}^{n-1} \frac{d\chi_{k+1}}{d\chi_k} \geq 1. \qedhere\]
\end{proof}

\begin{lemma}
  \label{cor:inv}
  For $n\geq 1$,
  the function $\chi_{n+1}: (r, b_n) \to (r, \infty)$ is invertible.
\end{lemma}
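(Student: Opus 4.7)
The plan is to show that $\chi_{n+1}$ is a continuous, strictly increasing bijection from $(r,b_n)$ onto $(r,\infty)$, from which invertibility follows immediately.

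First, I would note the three ingredients already in hand. By Lemma \ref{lem:cont-xn}, $\chi_{n+1}(x)$ is continuous on $(r,b_n)$. By Proposition \ref{prop:summary} (items (5) and (6) applied at the relevant stage), we have the boundary limits $\lim_{x\downarrow r}\chi_{n+1}(x)=r$ and $\lim_{x\uparrow b_n}\chi_{n+1}(x)=\infty$. Thus the image of $\chi_{n+1}$ contains, by the intermediate value theorem, the full interval $(r,\infty)$.

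Next, I would establish strict monotonicity. By Lemma \ref{lem:lcderivative} applied with $m=1$ and $n$ replaced by $n+1$, the derivative $\tfrac{d\chi_{n+1}}{dx}\geq 1>0$ on $(r,b_n)$. Consequently $\chi_{n+1}$ is strictly increasing on $(r,b_n)$. Combined with the boundary limits, this forces the range of $\chi_{n+1}$ to be exactly $(r,\infty)$, and injectivity is automatic from strict monotonicity.

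Putting these pieces together, $\chi_{n+1}:(r,b_n)\to(r,\infty)$ is a continuous strictly increasing bijection, hence invertible (with continuous inverse, though only invertibility is claimed). There is no real obstacle here—everything needed has already been assembled in the inductive construction of Section \ref{sect3.1}; the only mild subtlety is making sure one appeals to Lemma \ref{lem:lcderivative} at the correct index so that the derivative inequality $d\chi_{n+1}/dx\geq 1$ is valid on the entire domain $(r,b_n)$, which it is since $(r,b_n)\subset(r,b_{n-1})$ where $\chi_{n+1}$ and its derivative are defined.
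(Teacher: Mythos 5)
Your proof is correct and follows essentially the same route as the paper: continuity and the boundary limits from Proposition \ref{prop:summary} identify the range as $(r,\infty)$, and strict monotonicity follows from Lemma \ref{lem:lcderivative} applied at index $n+1$ with $m=1$ so that $d\chi_{n+1}/dx \geq 1$ on $(r,b_n)$. The only cosmetic difference is that you cite Lemma \ref{lem:cont-xn} rather than the summary Proposition \ref{prop:summary} for continuity, which is immaterial.
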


\begin{proof}
  The range of $\chi_{n+1}$ is well specified as $\lim_{x \to r} \chi_{n+1}(x) = r$,
  $\lim_{x \to b_n} \chi_{n+1}(x) = \infty$,
  and $\chi_{n+1}$ is continuous on $(r, b_n)$ by Proposition \ref{prop:summary}.
  Since the derivative
  $\frac{d\chi_{n+1}}{dx}
  = \frac{d\chi_{n+1}}{d\chi_1}
  \geq 1$, we conclude that $\chi_{n+1}$ is strictly increasing and therefore invertible.
\end{proof}

\subsection{Existence of MIW sequences
for higher energy functions when $\ell\geq 1$}\label{sec:exist}
Having found an MIW sequence with respect to the Normal density, $\ell=0$, in Corollary \ref{cor:Normal-exist}, we 
fix $f(x)$ as a higher energy function of order $\ell\geq 1$ in this section.
Let $r_0 = -\infty$,
and $r_{\ell+1} = \infty$, where $\ell$ is the number of roots of $f(x)$.
For $1 \leq k \leq \ell$, denote by $r_k$ the $k$th root of $f(x)$.
For each $0 \leq k \leq \ell$, choose a number $N_k\geq 1$ of points that lie in the region $(r_k, r_{k+1})$.

We will define sequences $(\chi_n(x))$ inductively
on each strictly positive region of $f$.  Here, $x$ is a value in each region.
We will denote by $(\chi_n^0(x))$ the sequence on the leftmost region,
and by $(\chi_n^k(x))$ the sequence to the right of the $k$th root.
Denote as well by $a_k(x)$ the choice of $a(x)$ for our $k$th sequence, that we will provide.

\medskip
\noindent
{\bf Base case.} Define $\chi_n^0(x)$ by $\chi_n(x)$, as in Section \ref{sect3.1} with $r=r_0$, $b_0 = r_1$ and $b_{n-1} =b^0_{n-1}$,
on the subinterval $(r_0, b_{n-1}^0)$ of $(r_0, r_1)$.
We want to satisfy the left boundary condition at $-\infty$,
so we choose $a_0(x) \equiv -\infty$.

\medskip
\noindent
{\bf Induction.}
Suppose we have defined $\chi_n^k(x)$ on $(r_k, b^k_{N_k -1})\subset (r_k, r_{k+1})$ for all $0 \leq k \leq K$.
We wish for the left boundary condition of the next sequence
to be the last element of the current one, that is, $a_{K+1}(y) = \chi_{N_K}^K(x)$,
for some $x\in (r_K, r_{K+1})$, $y\in (r_{K+1}, r_{K+2})$ to be determined.
As well, we wish for the first element of the next sequence
to be the right boundary condition of the current one,
$y=\chi_1^{K+1}(y) = \chi_{N_K+1}^K(x)$.
In other words, informally,
\[
  (\chi_{N_K}^K)^{-1} (a_{K+1} (y)) = x
  \qquad\text{and}\qquad
  (\chi_{N_K+1}^K)^{-1} (y) = x,
\]
or
\[
  (\chi_{N_K}^K)^{-1} (a_{K+1} (y)) = (\chi_{N_K+1}^K)^{-1} (y).
\]

Thus, for $y \in (r_{K+1}, r_{K+2})$, we define
\[
  a_{K+1} (y) = \chi_{N_K}^K ((\chi_{N_K+1}^K)^{-1} (y)).
\]

\begin{lemma}
  \label{lem:a-def}
  The definition of $a_{K+1}$ is well-defined.  Moreover, $a_{K+1}(y)<y$ for $y\in (r_{K+1}, r_{K+2})$.
\end{lemma}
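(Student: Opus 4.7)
The plan is to verify well-definedness in two steps, and then to derive the strict inequality directly from the monotonicity property packaged in Proposition \ref{prop:summary}.

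First, I would check that $y$ lies in the range of $\chi_{N_K+1}^K$. By Lemma \ref{cor:inv}, the map $\chi_{N_K+1}^K$ is a bijection from $(r_K, b^K_{N_K})$ onto $(r_K, \infty)$. Since $y \in (r_{K+1}, r_{K+2})$ and $r_{K+1} > r_K$, we have $y \in (r_K, \infty)$, so the preimage $x := (\chi_{N_K+1}^K)^{-1}(y)$ is uniquely defined in $(r_K, b^K_{N_K})$.

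Second, I would verify that this $x$ lies in the domain on which $\chi_{N_K}^K$ was constructed. By Proposition \ref{prop:summary}(1) applied at step $N_K$ of the inductive construction on the interval $(r_K, r_{K+1})$, we have $b^K_{N_K} < b^K_{N_K - 1}$, so $x \in (r_K, b^K_{N_K}) \subset (r_K, b^K_{N_K - 1})$, which is exactly the domain on which $\chi_{N_K}^K$ is defined and continuous (Proposition \ref{prop:summary}(2)). Hence $a_{K+1}(y) = \chi_{N_K}^K(x)$ is well-defined.

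Third, for the inequality $a_{K+1}(y) < y$, I would appeal to Proposition \ref{prop:summary}(4), which gives $\chi_{N_K}^K(x) < \chi_{N_K+1}^K(x)$ for every $x \in (r_K, b^K_{N_K})$. Substituting our specific $x$, this reads $a_{K+1}(y) = \chi_{N_K}^K(x) < \chi_{N_K+1}^K(x) = y$, as desired. There is no substantive obstacle here: the lemma is essentially a bookkeeping consequence of invertibility (Lemma \ref{cor:inv}) and the ordering $\chi_n < \chi_{n+1}$ already established. The only subtlety worth flagging is that the nested inclusion $(r_K, b^K_{N_K}) \subset (r_K, b^K_{N_K-1})$ is what allows the composition defining $a_{K+1}$ to land on the sequence's domain.
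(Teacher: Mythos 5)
Your well-definedness argument is identical to the paper's: invoke Lemma \ref{cor:inv} to get $x = (\chi_{N_K+1}^K)^{-1}(y) \in (r_K, b^K_{N_K})$, then use the nesting $b^K_{N_K} < b^K_{N_K-1}$ from Proposition \ref{prop:summary} to conclude that $x$ lies in the domain of $\chi_{N_K}^K$.

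For the inequality $a_{K+1}(y) < y$, you and the paper take slightly different routes. You appeal to the ordering $\chi_{N_K}^K(x) < \chi_{N_K+1}^K(x)$ from Proposition \ref{prop:summary}(4), giving $a_{K+1}(y) = \chi_{N_K}^K(x) < \chi_{N_K+1}^K(x) = y$ directly. The paper instead cites Lemma \ref{lem:chi-bdd} to get the stronger chain $a_{K+1}(y) = \chi_{N_K}^K(x) < b_0^K = r_{K+1} < y$. Both are valid. Your argument is a bit more economical, relying only on the basic monotonicity $\chi_n < \chi_{n+1}$; the paper's version proves the sharper fact that $a_{K+1}(y)$ lies strictly to the left of the root $r_{K+1}$, which is the conceptually natural statement (the proposed left boundary point should sit in the previous region), although the lemma as stated only requires $a_{K+1}(y) < y$, which your argument suffices to establish.
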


\begin{proof}
  Notice, by Lemma \ref{cor:inv}, for $y\in (r_{K+1}, r_{K+2})\subset (r_K, \infty)$ that
  $x=(\chi^K_{N_k +1})^{-1}(y) \in (r_K, b^K_{N_k})\subset (r_K, b^K_{N_k -1})$.  Hence, 
  $x$ belongs to the domain of $\chi^K_{N_k}$, $a_{K+1}(y) = \chi^K_{N_k}(x)\in (r_K, \infty)$ is well-defined, and $(\chi^K_{N_k})^{-1}(a_{K+1}(y))=x$.

  Moreover, as $x\in (r_K, b^K_{N_k})$, by Lemma \ref{lem:chi-bdd}, we have $a_{K+1}(y) = \chi^K_{N_k}(x)< b^K_0=r_{K+1}<y$.
\end{proof}

With this definition of $a_{K+1}$, it is ensured that if we take $x = (\chi_{N_K+1}^K)^{-1}(y)$,
the concatenated sequence
\[
  \left(\chi_1^K(x), \ldots, \chi_{N_K}^K(x),
  \chi_1^{K+1}(y), \ldots, \chi_{N_{K+1}}^{K+1}(y)\right)
\]
will be an MIW sequence.

We now complete the proof of the necessary properties of $a_{K+1}$ to finish the induction step, namely $0\leq a'_{K+1}(y)\leq 1$ for $y\in (r_{K+1}, r_{K+2})$, as we have already shown in Lemma \ref{lem:a-def} that
$a_{K+1}(x)<x$ for $x\in (r_{K+1}, r_{K+2})$. 

\begin{lemma}
  The bound $0 \leq a_{K+1}'(x) \leq 1$ holds for $x\in (r_{K+1}, r_{K+2})$.
\end{lemma}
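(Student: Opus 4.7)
\medskip

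\noindent\textbf{Proof proposal.} The plan is to reduce $a_{K+1}'$ to a ratio of derivatives of $\chi_n^K$ along the base variable $x$, and then invoke Lemma \ref{lem:lcderivative} twice.

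First I would differentiate the defining formula
\[
  a_{K+1}(y) \;=\; \chi_{N_K}^K\!\bigl((\chi_{N_K+1}^K)^{-1}(y)\bigr)
\]
by the chain rule. Writing $x = (\chi_{N_K+1}^K)^{-1}(y)$, which is well-defined and differentiable by Lemma \ref{cor:inv} together with the bound $\tfrac{d\chi_{N_K+1}^K}{dx} \geq 1$ from Lemma \ref{lem:lcderivative}, I obtain
\[
  a_{K+1}'(y) \;=\; \frac{\dfrac{d\chi_{N_K}^K}{dx}(x)}{\dfrac{d\chi_{N_K+1}^K}{dx}(x)}
  \;=\; \frac{d\chi_{N_K}^K}{d\chi_{N_K+1}^K}(x).
\]

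Next I would apply Lemma \ref{lem:lcderivative} with $m = 1$ (and using $\chi_1^K(x) = x$) to conclude both $\tfrac{d\chi_{N_K}^K}{dx} \geq 1 > 0$ and $\tfrac{d\chi_{N_K+1}^K}{dx} \geq 1 > 0$; this gives $a_{K+1}'(y) \geq 0$. For the upper bound, applying Lemma \ref{lem:lcderivative} once more with $m = N_K$ and $n = N_K+1$ yields
\[
  \frac{d\chi_{N_K+1}^K}{d\chi_{N_K}^K}(x) \;\geq\; 1,
\]
and taking reciprocals (valid because this derivative is strictly positive) gives exactly $a_{K+1}'(y) \leq 1$.

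I do not expect a significant obstacle here: the bound essentially repackages Lemma \ref{lem:lcderivative}, and the only point requiring a small amount of care is justifying that $(\chi_{N_K+1}^K)^{-1}$ is differentiable at $y$ with positive derivative, which follows from the strict monotonicity established in the proof of Lemma \ref{cor:inv}. The content of the lemma is really that the monotonicity and log-concavity propagated through the recursion in Lemma \ref{lem:lcderivative} is inherited by the glue functions $a_{K+1}$ used to concatenate sequences across roots of $f$, so that the inductive hypothesis $0 \leq a'(x) \leq 1$ needed in Section \ref{sect3.1} is preserved at the next stage.
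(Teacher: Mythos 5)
Your proposal is correct and follows essentially the same route as the paper's own proof: differentiate the composition $a_{K+1} = \chi_{N_K}^K \circ (\chi_{N_K+1}^K)^{-1}$ by the chain rule, reduce $a_{K+1}'$ to the ratio $\frac{d\chi_{N_K}^K}{d\chi_{N_K+1}^K}$, and bound that ratio in $[0,1]$ by two applications of Lemma~\ref{lem:lcderivative}. The only difference is cosmetic (your choice of variable names is swapped relative to the paper), and you make explicit the positivity step that the paper leaves implicit.
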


\begin{proof}  By Lemma \ref{lem:lcderivative}, for $x\in (r_{K+1}, r_{K+2})$, $(\chi^K_{N_k +1})^{-1}(x)\in (r_K, b^K_{N_k})$ is differentiable.  Also, for $y\in (r_K, b^K_{N_k})\subset (r_K, b^K_{N_K-1})$, $\chi^K_{N_K}(y)$ is differentiable.  We compute the derivative
  \begin{align*}
    a_{K+1}'(x)
    &= (\chi_{N_K}^K)'((\chi_{N_K+1}^K)^{-1}(x))
    \ ((\chi_{N_K+1}^K)^{-1})'(x) \\
    &= \frac{(\chi_{N_K}^K)'((\chi_{N_K+1}^K)^{-1}(x))}
    {(\chi_{N_K+1}^K)'((\chi_{N_K+1}^K)^{-1}(x))} \ = \ 
    \frac{(\chi_{N_K}^K)'(y)}{(\chi_{N_K+1}^K)'(y)},
  \end{align*}
  where $y = (\chi_{N_K+1}^K)^{-1}(x)$.
  By Lemma~\ref{lem:lcderivative},
  we have $\frac{d\chi_{N_K+1}^K}{d\chi_{N_K}^K} \geq 1$.
  Thus, $\frac{1}{a_{K+1}'(x)}
  = \frac{(\chi_{N_K+1}^K)'(y)}{(\chi_{N_K}^K)'(y)}
  \geq 1$ as well.  Hence, both derivative bounds hold.
\end{proof}

We now define $\chi_n^{K+1}(x)$ for
$x\in (r_{K+1}, b_{n-1}^{K+1})\subset (r_{K+1}, r_{K+2})$
as $\chi_n(x)$ with the choice $r = r_{K+1}$, $b_0 = r_{K+2}$, and
$a_{K+1}(x) = \chi_{N_K}^K ((\chi_{N_K+1}^K)^{-1} (x))$ to finish the induction step.
\medskip

At this point, we have constructed an MIW sequence bridging the roots of $f$.  What is left to choose is the value $x=x^K$ for each $\chi^K_n(x)$ for $1\leq K\leq \ell$. These choices are limited by the concatenation relation.  So, once a choice is made, say in the right-most interval, this choice is propagated backwards, determining the other choices all the way to the first interval.

As our first parameter, we choose $x^\ell = b_{N_\ell}^\ell$
to satisfy the right boundary condition at $\infty$, according to Lemma \ref{lem:left-bc}.
Then, $x^{\ell-1} = (\chi_{N_{\ell-1}+1}^{\ell-1})^{-1}(x^\ell)$ is determined,
so that the concatenation of the last two sequences will be an MIW sequence.
Continuing recursively yields
$x^{k-1} = (\chi_{N_{k-1}+1}^{k-1})^{-1}(x^k)$,
for $1 \leq k \leq \ell$.
The concatenation
\[
  (\chi_1^0(x_1), \dots, \chi_{N_1}^0(x_1), \dots,
  \chi_1^k(x_k), \dots, \chi_{N_k}^k(x_k), \dots,
  \chi_1^\ell(x_\ell), \dots, \chi_{N_\ell}^\ell(x_\ell))
\]
is then an MIW sequence of $f(x)$,
with the desired number of points in each region,
that satisfies both left and right boundary conditions at $\pm\infty$.
We summarize this in the following proposition.

\begin{proposition}\label{prop:exist}
  There is an MIW sequence of $f$ that satisfies
  the left boundary condition at $-\infty$,
  and the right boundary condition at $\infty$,
  with $N_k\geq 1$ points that lie in the regions $R_k$
  for $0 \leq k \leq \ell$.
\end{proposition}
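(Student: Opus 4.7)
The plan is to assemble the inductively constructed families $(\chi^k_n)$ from Section \ref{sec:exist} into a single MIW sequence on $\R$ by pinning down the starting parameter $x^k \in R_k$ in each region and then checking the junction conditions at each root of $f$.

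First I would observe that on the leftmost region $R_0 = (-\infty, r_1)$ the choice $a_0 \equiv -\infty$ makes Lemma \ref{lem:left-bc} yield the left boundary condition at $-\infty$ for every admissible $x^0$, so no constraint is imposed there. At the right extreme I would fix $x^\ell = b^\ell_{N_\ell}$, which by the last sentence of Lemma \ref{lem:left-bc} produces a length-$N_\ell$ MIW sub-sequence in $R_\ell$ satisfying the right boundary condition at $\infty$. I would then propagate the remaining parameters from right to left by the recursion $x^{K-1} = (\chi^{K-1}_{N_{K-1}+1})^{-1}(x^K)$ for $K = \ell, \ell-1, \ldots, 1$. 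Lemma \ref{cor:inv} makes each inversion unambiguous, since $\chi^{K-1}_{N_{K-1}+1}$ is a continuous, strictly increasing bijection $(r_{K-1}, b^{K-1}_{N_{K-1}}) \to (r_{K-1}, \infty)$ and $x^K \in R_K \subset (r_{K-1}, \infty)$ lies in its range; Proposition \ref{prop:summary} then certifies that the resulting $x^{K-1}$ lies in $(r_{K-1}, b^{K-1}_{N_{K-1}}) \subset (r_{K-1}, b^{K-1}_{N_{K-1}-1})$, which is precisely the domain on which $\chi^{K-1}_{N_{K-1}}$ is defined.

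The remaining task is to verify that the MIW recursion survives the transition across each root $r_{K+1}$. Here the tailored definition $a_{K+1}(y) = \chi^K_{N_K}((\chi^K_{N_K+1})^{-1}(y))$ is exactly what is needed: with $y = x^{K+1}$ the propagation rule forces
\[
a_{K+1}(x^{K+1}) = \chi^K_{N_K}(x^K), \qquad \chi^{K+1}_1(x^{K+1}) = x^{K+1} = \chi^K_{N_K+1}(x^K),
\]
so the left boundary condition satisfied by the $(K+1)$-th subsequence (Lemma \ref{lem:left-bc}) is precisely the MIW recursion at the junction point $\chi^K_{N_K}(x^K)$, with its right neighbor supplied by the first point of the next subsequence. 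Likewise, the MIW recursion at $\chi^{K+1}_1(x^{K+1})$ reads off directly from \eqref{chi-n+1} applied at $x^K$, which is the defining identity of $\chi^K_{N_K+1}(x^K)$.

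The main obstacle is essentially bookkeeping: keeping track of $\ell$ junctions and checking that each propagated $x^K$ stays in the domain where $\chi^K_{N_K}$ is defined and still inside $R_K$. All the analytic substance — existence of the auxiliary thresholds $b^k_n$, continuity and monotonicity of the $\chi^k_n$, invertibility of $\chi^{K-1}_{N_{K-1}+1}$, and the derivative bound $0 \le a'_{K+1} \le 1$ that justified the inductive definition of $a_{K+1}$ — has already been extracted in Lemmas \ref{lem:left-bc}, \ref{lem:chi-bdd}, \ref{cor:inv}, and \ref{lem:lcderivative}, so the proof of Proposition \ref{prop:exist} reduces to the assembly described above together with these two brief verifications.
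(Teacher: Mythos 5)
Your proposal is correct and follows essentially the same route as the paper: fix $x^\ell = b^\ell_{N_\ell}$ to secure the right boundary condition, propagate backwards via $x^{K-1} = (\chi^{K-1}_{N_{K-1}+1})^{-1}(x^K)$, and use the tailored definition of $a_{K+1}$ to make the MIW recursion hold across each root. The lemmas you invoke (Lemmas \ref{lem:left-bc}, \ref{lem:chi-bdd}, \ref{cor:inv}, \ref{lem:lcderivative} and Proposition \ref{prop:summary}) are the same ingredients the paper relies on, and your junction-condition check matches the paper's concatenation argument.
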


\subsection{Uniqueness of MIW sequences
for higher energy functions when $\ell\geq 0$}
Let now  $f$ be a higher energy function with order $\ell\geq 0$.
When $f$ has zeros, label them as
$-\infty < r_1 < \cdots < r_\ell < \infty$.
Denote by $r_0 = -\infty$ and $r_{\ell+1} = \infty$.

The MIW sequence $(x_n)_{n=1}^N$ constructed in Proposition~\ref{prop:exist} is in fact unique.
Informally, we can see this from the uniqueness
of the choices made during its construction.
The choices of $a_k(x)$ are necessary to satisfy the recursion relation.
The choice of $x_\ell$ is necessary to satisfy the right boundary condition.
The choices of $x_k$ for $0 \leq k < \ell$
are determined to satisfy the recursion relation.

We now proceed to show this more formally, using a different idea.
We show that if the first elements of two sequences differ,
then one of the two sequences does not satisfy the right boundary condition.  Also, if the first elements are the same, then the two sequences are the same.

Note, when $\ell=0$ and $N=1$, as commented in the introduction, there is no one point MIW sequence $(x_1)$ satisfying \eqref{eq:hdw-seq} with respect to left and right boundary conditions at $-\infty$ and $\infty$.

\begin{lemma}\label{lem:ldineq}
  Consider two $N$-element MIW sequences,
  $(x_n)_{n=1}^N$ and $(y_n)_{n=1}^N$ satisfying the left $-\infty$ boundary condition, and $x_{N+1}$, $y_{N+1}$ right boundary conditions respectively.
  Suppose for every $0 \leq k \leq \ell$,
  that both sequences have the same number of points $N_k\geq 1$
  lying in the intervals $(r_k, r_{k+1})$ for $0\leq k\leq \ell$.

  If $x_1 < y_1$, then $x_{n+1} - x_n < y_{n+1} - y_n$
  for all $1 \leq n < N$.
  Consequentially $x_n < y_n$ for all $1 \leq n \leq N$.

  On the other hand, if $x_1=y_1$, then $x_n=y_n$ for $1\leq n\leq N$.
\end{lemma}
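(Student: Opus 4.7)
My plan is to argue by induction on $n$, using two ingredients: the strict monotonicity of $f'/f$ on each region of positivity $R_k$ (which follows from log-concavity of $f$, Lemma~\ref{lem:higher-order}), and the MIW recursion \eqref{eq:hdw-seq} rewritten uniformly for $n\geq 1$ as
\[
  \frac{1}{x_{n+1}-x_n} \;=\; \frac{f'(x_n)}{f(x_n)} \;+\; \frac{1}{x_n - x_{n-1}},
\]
with the convention $1/(x_1-x_0)=1/(x_1-(-\infty))=0$, which is exactly the left boundary condition at $-\infty$ at $n=1$.

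Before running the induction, I would first record the structural observation that $x_n$ and $y_n$ lie in the same region $R_k$ for every $n$. Indeed, each MIW sequence is strictly increasing and its points avoid the roots of $f$ by definition; combined with the hypothesis that both sequences contain exactly $N_k$ points in $R_k$ for $0\leq k\leq \ell$, this forces the indices of the points falling in $R_k$ to be the same interval $\{\sum_{j<k}N_j+1,\ldots,\sum_{j\leq k}N_j\}$ for both sequences. This is the key fact that makes log-concavity applicable to $x_n$ and $y_n$ simultaneously.

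For the base case $n=1$, I would compare $1/(x_2-x_1)=f'(x_1)/f(x_1)$ and $1/(y_2-y_1)=f'(y_1)/f(y_1)$. Since $x_1,y_1\in R_0$ and $f'/f$ is strictly decreasing on $R_0$, the hypothesis $x_1<y_1$ yields $f'(x_1)/f(x_1)>f'(y_1)/f(y_1)>0$, hence $x_2-x_1<y_2-y_1$. For the inductive step, assume $x_{k+1}-x_k<y_{k+1}-y_k$ for $1\leq k<n$. Telescoping with $x_1<y_1$ gives $x_n<y_n$. Applying the recursion displayed above at step $n$ to both sequences and subtracting, I would use (i) $1/(x_n-x_{n-1})>1/(y_n-y_{n-1})>0$ from the inductive hypothesis at $k=n-1$ (trivially true at $n=1$ where both are zero), together with (ii) $f'(x_n)/f(x_n)>f'(y_n)/f(y_n)$ from log-concavity applied in the common region $R_k$ containing both $x_n$ and $y_n$, to conclude $1/(x_{n+1}-x_n)>1/(y_{n+1}-y_n)>0$ and therefore $x_{n+1}-x_n<y_{n+1}-y_n$.

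The second assertion, that $x_1=y_1$ forces $x_n=y_n$ for all $n$, is immediate from the same recursion: $x_1=y_1$ gives $x_2-x_1=y_2-y_1$ via the left boundary condition, and if $x_j=y_j$ for all $j\leq n$, then both sides of the recursion at step $n$ coincide, so $x_{n+1}=y_{n+1}$. I do not anticipate a substantive obstacle; the point that requires attention is the region-matching observation, because the comparison of $f'/f$-values rests on $x_n$ and $y_n$ lying in a common region of positivity where $f'/f$ is strictly monotone.
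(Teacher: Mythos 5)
Your proposal is correct and follows essentially the same route as the paper's proof: both use the region-matching observation, the strict monotonicity of $f'/f$ from log-concavity (Lemma~\ref{lem:higher-order}), the left boundary condition for the base case, and the MIW recursion \eqref{eq:hdw-seq} for the inductive step. The only cosmetic difference is that you phrase the induction hypothesis as strong induction on gaps (recovering $x_n<y_n$ by telescoping), while the paper carries both $x_n<y_n$ and $x_{n+1}-x_n<y_{n+1}-y_n$ as a joint induction hypothesis; the content is identical.
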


\begin{proof} When $N=1$, the statement already holds.

  Suppose $N\geq 2$.  
  Because each sequence has the same number of points
  in each strictly positive region of $f$,
  the points $x_n$ and $y_n$ lie in the same region
  for every $1 \leq n \leq N$.
  As $x_1 < y_1$ lie in the same region and $f$ is log-concave, we have
  $\frac{f'(y_1)}{f(y_1)} < \frac{f'(x_1)}{f(x_1)}$ (Lemma \ref{lem:higher-order}).
  Then, with respect to the left boundary conditions,
  \begin{align}
    \label{x1 x2}
    \frac{1}{x_2 - x_1}
    = \frac{f'(x_1)}{f(x_1)}
    > \frac{f'(y_1)}{f(y_1)}
    = \frac{1}{y_2 - y_1}.
  \end{align}
  Thus, we can bound $x_2 - x_1 < y_2 - y_1$.

  Suppose that $x_{n} < y_{n}$
  and $x_{n+1} - x_{n} < y_{n+1} - y_{n}$
  for some $1 \leq n < N$.
  This implies that $x_{n+1} < y_{n+1}$.
  Because $x_{n+1}$ and $y_{n+1}$ are in the same region, as before,
  $\frac{f'(y_{n+1})}{f(y_{n+1})} < \frac{f'(x_{n+1})}{f(x_{n+1})}$.
  Then,
  \begin{align*}
    \frac{1}{x_{n+2} - x_{n+1}}
    = \frac{1}{x_{n+1} - x_{n}} + \frac{f'(x_{n+1})}{f(x_{n+1})}
    > \frac{1}{y_{n+1} - y_{n}} + \frac{f'(y_{n+1})}{f(y_{n+1})}
    = \frac{1}{y_{n+2} - y_{n+1}}.
  \end{align*}
  Thus, we can bound $x_{n+2} - x_{n+1} < y_{n+2} - y_{n+1}$.  Hence, by this iteration, $x_n<y_n$ for $1\leq n\leq N$.

  The last statement follows the same argument:  In \eqref{x1 x2}, if $x_1=y_1$, we would conclude $x_2=y_2$, and so on.
\end{proof}

\begin{proposition}\label{prop:unique}
  There is at most one MIW sequence $(x_n)_{n=1}^N$
  that satisfies the left boundary condition at $-\infty$
  and the right boundary condition at $\infty$, with $N_k\geq 1$ points in the intervals $(r_k, r_{k+1})$ for $0\leq k\leq \ell$ when $\ell\geq 1$, and $N=N_0\geq 2$ when $\ell=0$.
\end{proposition}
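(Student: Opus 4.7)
The plan is to reduce the statement to Lemma \ref{lem:ldineq} by comparing first elements. Suppose, for contradiction, that $(x_n)_{n=1}^N$ and $(y_n)_{n=1}^N$ are two distinct MIW sequences satisfying all the hypotheses, and assume without loss of generality that $x_1 \leq y_1$. If $x_1 = y_1$, then the last assertion of Lemma \ref{lem:ldineq} immediately gives $x_n = y_n$ for every $1 \leq n \leq N$, contradicting distinctness. So we may assume $x_1 < y_1$.

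Since both sequences have $N_k$ points in $R_k$ for each $k$, the first part of Lemma \ref{lem:ldineq} applies and yields the strict gap comparison
\[
  x_{n+1} - x_n \;<\; y_{n+1} - y_n \qquad \text{for all } 1 \leq n < N,
\]
together with $x_n < y_n$ pointwise. The key observation is that the constraints from the hypotheses $N_k \geq 1$ (with $N = N_0 \geq 2$ when $\ell = 0$) force $N \geq 2$, so the $n = N-1$ case is nonvacuous, giving $x_N - x_{N-1} < y_N - y_{N-1}$. Moreover, both $x_N$ and $y_N$ lie in the rightmost region $R_\ell$, where $f$ is strictly positive.

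The main idea is then to extract a contradictory inequality from the right boundary condition at $\infty$, which, written in the form given in the introduction, reads
\[
  \frac{1}{x_N - x_{N-1}} \;=\; -\frac{f'(x_N)}{f(x_N)}
  \qquad \text{and} \qquad
  \frac{1}{y_N - y_{N-1}} \;=\; -\frac{f'(y_N)}{f(y_N)}.
\]
By the log-concavity of $f$ on $R_\ell$ (Lemma \ref{lem:higher-order}), the ratio $f'/f$ is strictly decreasing on this interval, so $x_N < y_N$ implies $-f'(x_N)/f(x_N) < -f'(y_N)/f(y_N)$, whence $1/(x_N - x_{N-1}) < 1/(y_N - y_{N-1})$, i.e., $x_N - x_{N-1} > y_N - y_{N-1}$. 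This directly contradicts the strict gap inequality obtained from Lemma \ref{lem:ldineq}, completing the proof.

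The only delicate point is verifying that $N \geq 2$ always holds under the stated hypotheses (so that there is a genuine $n=N-1$ gap to compare) and that $x_N, y_N$ both lie in the same region $R_\ell$ where log-concavity applies. The first follows because when $\ell \geq 1$ the constraint $N_k \geq 1$ for $0 \leq k \leq \ell$ gives $N \geq \ell + 1 \geq 2$, and when $\ell = 0$ it is imposed directly. The second is immediate from the assumption that both sequences have the same number of points in each region.
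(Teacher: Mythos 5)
Your proof is correct and follows essentially the same argument as the paper: reduce to the case $x_1 < y_1$ via Lemma \ref{lem:ldineq}, propagate the strict gap inequality to $n = N-1$, and then use the right boundary condition at $\infty$ together with log-concavity of $f$ (Lemma \ref{lem:higher-order}) to derive the contradictory inequality $x_N - x_{N-1} > y_N - y_{N-1}$. Your explicit verifications that $N \geq 2$ and that $x_N, y_N$ lie in the same region $R_\ell$ are details the paper leaves implicit, but the route is identical.
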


\begin{proof}
  Let $(x_n)_{n=1}^N$ and $(y_n)_{n=1}^N$ be two such sequences.  If they are different, by Lemma \ref{lem:ldineq}, $x_1\neq y_1$.  To make a choice, suppose that $x_1 < y_1$.
  By Lemma~\ref{lem:ldineq}, we have the relations
  $x_{n+1} - x_n < y_{n+1} - y_n$
  for all $1 \leq n < N$,
  and thus $x_n < y_n$ for all $1 \leq n \leq N$.
  Our right boundary conditions require that
  \begin{align*}
    -\frac{1}{x_N - x_{N-1}} = \frac{f'}{f}(x_N)
    \qquad\text{and}\qquad
    -\frac{1}{y_N - y_{N-1}} = \frac{f'}{f}(y_N).
  \end{align*}
  However, by log-concavity of $f$ (Lemma \ref{lem:higher-order}),
  $\frac{f'(x_N)}{f(x_N)} > \frac{f'(y_N)}{f(y_N)}$ and so
  $-\frac{1}{x_N - x_{N-1}} < -\frac{1}{y_N - y_{N-1}}$,
  or in other words $y_N - y_{N-1}< x_N - x_{N-1}$.  This contradicts that $x_1<y_1$.
\end{proof}

\begin{proof}[\textup{\textbf{Proof of Theorem~\ref{thm:exist-unique}}}]
  Existence follows from Corollary \ref{cor:Normal-exist} for $\ell=0$, and from Proposition~\ref{prop:exist} for $\ell\geq 1$.  Uniqueness follows from Proposition~\ref{prop:unique}.
\end{proof}

\section{Proof of Theorem \ref{thm:gaps}:  Gap sizes and span of MIW sequences}
\label{gap_sect}

Suppose $f(x)$ is a higher energy function with order $\ell\geq 0$.  When $f$ has roots, label them as
$r_1 < \cdots < r_\ell$.  Let $r_0=-\infty$ and $r_{\ell+1}=\infty$
By Theorem \ref{thm:exist-unique}, let $(x_n)_{n=1}^N$ be
the $N$-element MIW sequence of $f$
satisfying the left boundary condition at $-\infty$,
and the right boundary condition at $\infty$,
such that in regions
$(r_k, r_{k+1})$ there are
$N_k\uparrow\infty$ points for $0\leq k\leq \ell$.  Note, when $\ell=0$, the sequence $(x_n)_{n=1}^N$ is symmetric by Corollary \ref{cor:Normalsymmetry} with $\lfloor N/2\rfloor$ points below and above the origin.

Recall the definition of $n(t)$ in \eqref{n(t) eq} as the index of the closest point in $(x_n)_{n=1}^N$ to the left of $t\in \R$.  Recall also the MIW relation \eqref{eq:hdw-seq}.

The strategy of proof of Theorem \ref{thm:gaps}, at the end of the section, is to leverage the higher energy form of $f$, with $\ell+1$ local maxima, which indicates that $x_2 - x_1$, $x_{n(r)+1} - x_{n(r)}$ for roots $r$, and $x_N - x_{N-1}$
are the local maxima of $x_{n+1} - x_n$ over $1\leq n\leq N-1$.  Some development is needed to show these `boundary' gaps vanish, as well as estimates for the orders of $x_N, x_1$ along the way in the following lemmas.

\begin{lemma}
  \label{lem:4.1}
  Let $f$ be 
  strictly positive on the finite interval $(u, v)$ where $u$ and $v$ are consecutive zeros of $f$.
  Let $(x_n)_{n=1}^N$ be the part of an MIW sequence of $f$ contained in $(u,v)$.
  Then, $\lim_{N \to \infty} x_1 \to u$
  and $\lim_{N \to \infty} x_N \to v$.
\end{lemma}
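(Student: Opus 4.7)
The plan is to prove the lemma by contradiction using a compactness/diagonal extraction argument, where the contradiction comes from showing that no infinite MIW sequence of $f$ can be confined to a bounded subset of the strictly-positive interval $(u,v)$.

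Suppose for contradiction that $x_1^{(N)} \not\to u$ along $N\uparrow\infty$. Extracting a subsequence $N_j \to \infty$, I may assume $x_1^{(N_j)} \to u'$ with $u' > u$. Since each $x_n^{(N_j)} \in (u,v)$ is bounded, a standard diagonal argument yields a further subsequence along which $x_n^{(N_j)} \to \tilde x_n$ for every $n \geq 1$, producing a weakly increasing limit sequence $(\tilde x_n)_{n\ge 1}$ in $[u',v]$. I may also assume the element $\alpha_j$ of the global MIW sequence immediately preceding $x_1^{(N_j)}$ converges in $[-\infty, u]$; regardless of its limit, $1/(x_1^{(N_j)} - \alpha_j)$ is bounded, since $x_1^{(N_j)} - \alpha_j \geq u' - u > 0$ eventually.

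Next I verify that $(\tilde x_n)$ is strictly increasing with $\tilde x_n < v$ for every $n$, so that the limiting recursion makes sense. The MIW relation at $x_1^{(N_j)}$ gives $c_1^{(N_j)} := 1/(x_2 - x_1) = 1/(x_1 - \alpha_j) + f'(x_1)/f(x_1)$, which converges to a finite limit $\tilde c_1$. This limit must be strictly positive: a limit of zero would force $x_2 - x_1 \to \infty$, violating the bound $x_2 < v$; a negative limit would contradict $c_1^{(N_j)}>0$. Hence $\tilde x_2 > \tilde x_1$. Proceeding inductively, so long as $\tilde x_n < v$, the identity $c_n^{(N_j)} = c_{n-1}^{(N_j)} + f'(x_n)/f(x_n)$ produces a finite positive limit $c_n$ and $\tilde x_{n+1} > \tilde x_n$. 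Moreover $\tilde x_{n+1} < v$ strictly: if $\tilde x_{n+1} = v$, Lemma~\ref{lem:3.5} forces $f'(x_{n+1}^{(N_j)})/f(x_{n+1}^{(N_j)}) \to -\infty$ and hence $c_{n+1}^{(N_j)} \to -\infty$, contradicting positivity. So $(\tilde x_n)_{n\ge 1}$ is an infinite strictly increasing MIW sequence in $(u',v)$, satisfying $c_n - c_{n-1} = f'(\tilde x_n)/f(\tilde x_n)$ for $n \geq 2$.

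I then derive the final contradiction from this limit sequence. By monotone convergence, $\tilde x_n \to \tilde x_\infty \in (u', v]$, and $\sum_{n\ge 1} 1/c_n = \tilde x_\infty - \tilde x_1 < \infty$. If $\tilde x_\infty < v$, the iterates lie in the compact subset $[u', \tilde x_\infty] \subset (u,v)$ on which $|f'/f|$ is bounded by some $M$; telescoping yields $c_n \leq c_1 + (n-1)M$, whence $\sum 1/c_n \geq \sum 1/(c_1 + (n-1)M) = \infty$, a contradiction. If instead $\tilde x_\infty = v$, then $f'(\tilde x_n)/f(\tilde x_n) \to -\infty$ by Lemma~\ref{lem:3.5}, so eventually $c_n - c_{n-1} < -1$, driving $c_n \to -\infty$ and violating $c_n > 0$. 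Both cases being impossible, $x_1^{(N)} \to u$. The convergence $x_N^{(N)} \to v$ follows by applying the identical argument to the reflected MIW sequence $(-x_{N-n+1})$ of the reflected density $f(-\cdot)$ on $(-v,-u)$.

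The main obstacle is the inductive construction of the limiting infinite MIW sequence: one must simultaneously rule out gap collapse (which would break the recursion in the limit) and boundary collision $\tilde x_n = v$ (which would force $c_n$ negative through the blow-up of $f'/f$ at the zero $v$), using only the positivity of $c_n^{(N_j)}$ and the asymptotics of $f'/f$ at zeros. Once the infinite limit sequence is established, the non-existence argument via the summability of $1/c_n$ is short.
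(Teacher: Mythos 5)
Your proof is correct, but the route is genuinely different (and heavier) than the paper's. The paper proves the same contrapositive directly: assuming $x_1 \geq u+\varepsilon$, it uses the left boundary bound $1/(x_1-x_0) \leq 1/\varepsilon$ and \emph{log-concavity} (Lemma~\ref{lem:higher-order}: $f'/f$ decreasing) to produce the explicit linear bound $1/(x_{n+1}-x_n) \leq 1/\varepsilon + n\,f'(u+\varepsilon)/f(u+\varepsilon)$, which immediately forces the telescoping sum $x_N - x_1 = \sum (x_{k+1}-x_k)$ to diverge like a harmonic series and contradict $x_N - x_1 \leq v-u$; no subsequences or limit objects are needed. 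You instead pass to a limiting infinite MIW-type sequence $(\tilde x_n)$ via a diagonal extraction, prove it is well-defined (strictly increasing, confined to $(u',v)$), and then split into two cases: if $\tilde x_\infty < v$, a uniform bound $M$ on $|f'/f|$ over a compact subinterval gives $c_n \leq c_1 + (n-1)M$ and the same harmonic divergence; if $\tilde x_\infty = v$, the blow-up $f'/f \to -\infty$ at the root (Lemma~\ref{lem:3.5}) drives $c_n \to -\infty$, contradicting positivity. Both arguments ultimately hinge on the same harmonic-series divergence and the same blow-up of $f'/f$ near the roots, but yours replaces the monotonicity input (log-concavity) with the weaker statement that $f'/f$ is bounded on compact subsets of $(u,v)$, at the price of the compactness machinery and the case split. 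That is a slight gain in generality, but for the higher-energy densities considered in the paper log-concavity is already available and gives the shorter argument; you should also be aware that your construction silently uses that, along the subsequence, there are at least $n+2$ elements of the \emph{global} MIW sequence beyond the left boundary point $\alpha_j$ so that each recursion $c_{n+1}^{(N_j)} = c_n^{(N_j)} + f'(x_{n+1})/f(x_{n+1})$ with $c_{n+1}^{(N_j)}>0$ is meaningful — this holds, but is worth saying.
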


\begin{proof}
  Let $\varepsilon<0$ be such that
  $\frac{1}{x_1 - u} \leq \frac 1\varepsilon$, that is
  $x_1 \geq u + \varepsilon$.
  Note, as the left boundary point $x_0\leq u$ that $\frac{1}{x_1-x_0}\leq \frac{1}{\varepsilon}$.  Moreover, by the form of $f$ (cf. Lemma \ref{lem:3.5}), we can take $\varepsilon$ smaller if necessary so that also $f'(u+\varepsilon)>0$.
  Also, by log-concavity of $f$ (Lemma \ref{lem:higher-order}), $\frac{f'}{f}$ is decreasing on $(u, v)$.

  We can bound, as the MIW sequence $(x_n)_{n=1}^N$ is strictly increasing,
  \begin{align}
    \label{lem4.1help}
    0<\frac{1}{x_{n+1} - x_n}
    = \frac{1}{x_1 - x_0} + \sum_{k=1}^n \frac{f'(x_k)}{f(x_k)}
    \leq \frac 1\varepsilon + n\frac{f'(u + \varepsilon)}{f(u + \varepsilon)}.
  \end{align}
  Then, by the MIW relation,
  \begin{align*}
    v-u\geq  x_{N} - x_1
    = \sum_{k=1}^{N-1} x_{k+1} - x_k
    \geq \sum_{k=1}^N
    \frac{1}{\frac 1\varepsilon + k\frac{f'(u + \varepsilon)}{f(u + \varepsilon)}}.
  \end{align*}
  The right hand side is diverges to $\infty$ as $N \to \infty$, yielding a contradiction.
  Hence, no such $\varepsilon$ can exist.  Therefore, for every $\varepsilon > 0$,
  for $N$ sufficiently large,
  $u <x_1 < u + \varepsilon$.

  The other limit holds by the same argument applied with the function $f(-x)$.
\end{proof}

We now state an estimate of $x_N$, subject to `spanning' assumption.
\begin{lemma}\label{lem:xNbound}
  Let $f$ be 
  strictly positive on $(u, \infty)$ for $u=r_\ell$ when $\ell\geq 1$ and $u=0$ when $\ell=0$. 
  Let $(x_n)_{n=1}^N$ denote
  the part of an MIW sequence of $f$ contained in $(u, \infty)$.

  Suppose for some $v>u$, such that $p'_\ell(v)/p_\ell(v) \leq v/4$, the number of points beyond $v$, that is $N-n(v)$, diverges.  Then,
  we have
  $x_N^2 \leq  x_{n(v)+1}^2+ 4(1+\log(N - n(v)))$.
\end{lemma}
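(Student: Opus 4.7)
The plan is to telescope the MIW recursion from the right using the boundary condition $x_{N+1} = \infty$, convert the hypothesis on $v$ into the pointwise bound $-f'(x)/f(x) \geq x/2$ for all $x \geq v$, and combine these to force $x_{n+1}(x_{n+1}-x_n) \leq 2/(N-n)$ on the relevant range, so that a telescoping of $x_{n+1}^2 - x_n^2$ produces the claimed logarithm.

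Summing the MIW relation \eqref{eq:hdw-seq} from $k = n+1$ to $k = N$ and using $x_{N+1} = \infty$, i.e.\ $1/(x_{N+1}-x_N) = 0$, I would obtain the identity
\[
\frac{1}{x_{n+1}-x_n} \;=\; -\sum_{k=n+1}^N \frac{f'(x_k)}{f(x_k)}, \qquad 1 \leq n \leq N-1.
\]
Writing $f(x) = c\, p_\ell(x)^2 e^{-x^2/2}$ gives $-f'(x)/f(x) = x - 2 p'_\ell(x)/p_\ell(x)$. The key claim is that $-f'(x)/f(x) \geq x/2$ for all $x \geq v$. For $\ell = 0$, $p_0 \equiv 1$ and this is immediate. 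For $\ell \geq 1$, $p_\ell$ has $\ell$ simple real roots $r_1 < \cdots < r_\ell$, and past its last root the partial fraction expansion
\[
\frac{p'_\ell(x)}{p_\ell(x)} \;=\; \sum_{i=1}^\ell \frac{1}{x-r_i}, \qquad x > r_\ell,
\]
presents $p'_\ell/p_\ell$ as a sum of strictly decreasing positive functions, so $p'_\ell/p_\ell$ is strictly decreasing on $(r_\ell,\infty)$. Since $v > u = r_\ell$, the hypothesis then gives
\[
\frac{p'_\ell(x)}{p_\ell(x)} \;\leq\; \frac{p'_\ell(v)}{p_\ell(v)} \;\leq\; \frac{v}{4} \;\leq\; \frac{x}{4} \qquad \text{for } x \geq v,
\]
and the claim follows.

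Setting $m = n(v)+1$, for $m \leq n \leq N-1$ and $k \geq n+1$ one has $x_k \geq x_{n+1} \geq x_m > v$, so the claim yields $-f'(x_k)/f(x_k) \geq x_k/2 \geq x_{n+1}/2$. Inserted into the telescoping identity, this gives $1/(x_{n+1}-x_n) \geq (N-n)\,x_{n+1}/2$, i.e.\ $x_{n+1}(x_{n+1}-x_n) \leq 2/(N-n)$. Combining with $x_{n+1}^2 - x_n^2 \leq 2\,x_{n+1}(x_{n+1}-x_n)$ and telescoping,
\[
x_N^2 - x_{n(v)+1}^2 \;\leq\; 4\sum_{n=m}^{N-1} \frac{1}{N-n} \;=\; 4\sum_{j=1}^{N-m}\frac{1}{j} \;\leq\; 4\bigl(1 + \log(N-m)\bigr) \;\leq\; 4\bigl(1 + \log(N-n(v))\bigr),
\]
which is the desired bound.

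The main obstacle is the key inequality $-f'(x)/f(x) \geq x/2$ on $[v,\infty)$: the bare log-concavity of $f$ supplied by Lemma \ref{lem:higher-order} only gives monotonicity of $f'/f$, producing the constant lower bound $v/2$ rather than the growing $x/2$ needed to close the $x_N^2$ estimate. One therefore has to exploit the partial fraction structure of $p'_\ell/p_\ell$ past the last root; once that is available, the rest is a short telescoping computation mirroring the Normal ($\ell=0$) case.
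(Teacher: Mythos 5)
Your proof is correct and follows essentially the same route as the paper's: both sum the MIW recursion from the right, use the monotonicity of $p'_\ell/p_\ell$ together with the hypothesis $p'_\ell(v)/p_\ell(v)\le v/4$ to get $-f'(x)/f(x)\ge x/2$ for $x\ge v$, deduce $x_{n+1}(x_{n+1}-x_n)\le 2/(N-n)$, and telescope $x_{n+1}^2-x_n^2\le 2x_{n+1}(x_{n+1}-x_n)$ to obtain the harmonic sum. The only difference is that you spell out the partial-fraction justification for the pointwise bound, which the paper leaves implicit (it invokes $p'_\ell/p_\ell=O(1/x)$ and relies on the decreasing property established in its appendix Lemma~\ref{lem:higher-order}).
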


\begin{proof}
  As $\frac{p'_\ell(x)}{p_\ell(x)} = O(1/x)$, let $v$ be such that
  $\frac 12 x \leq x - 2\frac{p'_\ell(x)}{p_\ell(x)}$ for $x\geq v$.
  Then, when $j\geq n(v)$,
  \[
    0<\frac 12 \sum_{k=j+1}^N x_k
    \leq \sum_{k=j+1}^N x_k - 2\frac{p_\ell'(x_k)}{p_\ell(x_k)}.
  \]
  Inverting both sides, noting the MIW sequence $(x_n)_{n=1}^N$ is increasing,
  \begin{equation}
    \label{lem4.2help}
    \frac{1}{\sum_{k=j+1}^N x_k - 2\frac{p_\ell'(x_k)}{p_\ell(x_k)}}
    \leq \frac{2}{\sum_{k=j+1}^N x_k}
    \leq \frac{2}{(N-j) x_{j+1}}.
  \end{equation}
  Note as $-f'(x)/f(x) = x -2p'_\ell(x)/p_\ell(x)$, by the MIW relation \eqref{MIW-relation1}, the fact $x_j<x_{j+1}$, and the last display \eqref{lem4.2help} that
  \begin{align*}
    x_N^2
    &= x_{n(v)+1}^2 + \sum_{j={n(v)+1}}^{N-1} x_{j+1}^2 - x_j^2
    \leq x_{n(v)+1}^2 + 2 \sum_{j={n(v)+1}}^{N-1} x_{j+1}(x_{j+1} - x_j) \\
    &= x_{n(v)+1}^2 + 2 \sum_{j={n(v)+1}}^{N-1}
    \frac{x_{j+1}}{\sum_{k=j+1}^N x_k - 2\frac{p_\ell'(x_k)}{p_\ell(x_k)}}
    \leq x_{n(v)+1}^2 + 4 \sum_{j={n(v)+1}}^{N-1} \frac{1}{N-j}.
  \end{align*}
  Hence,
  $x_N^2 \leq x_{n(v)+1}^2
  + 4(1+\log(N - n(v)))$, as desired.
\end{proof}

\begin{lemma}
  \label{lem:4.3}
  Let $\ell\geq 1$ and note $f$ is 
  strictly positive on $(r_\ell, \infty)$.
  Let $(x_n)_{n=1}^N$ be the part of an MIW sequence of $f$ contained in $(r_\ell,\infty)$.
  Then, the limit $\lim_{N \to \infty} x_1 = r_\ell$ holds.

  Similarly, for $\ell\geq 1$, note $f$ is strictly positive on $(-\infty, r_1)$.
  Let $(x_n)_{n=1}^N$ be the part of an MIW sequence contained in $(-\infty, r_1)$.  Then, $\lim_{N\to \infty} x_N=r_1$.
\end{lemma}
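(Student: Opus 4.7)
I will prove the first assertion; the second then follows by applying the same argument to the reflected sequence $(-x_{N-n+1})_{n=1}^N$, which is an MIW sequence for $\tilde f(x) = f(-x)$ whose rightmost root is $-r_1$. The approach rests on a single telescoping identity for the global sum $\sum_{k=1}^N f'(x_k)/f(x_k)$, combined with a contradiction argument.

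The first step is to derive
\begin{equation*}
\sum_{k=1}^N \frac{f'(x_k)}{f(x_k)} = -\frac{1}{x_1 - x_0},
\end{equation*}
where $x_0 < r_\ell$ is the left boundary point of the subsequence (the last point of the MIW sequence in the preceding region $(r_{\ell-1}, r_\ell)$, with $r_0 = -\infty$ when $\ell = 1$). This comes from summing the MIW relation \eqref{eq:hdw-seq} together with the left boundary condition at $x_0$ and the right boundary condition at $\infty$, which rewrites as $-\frac{1}{x_N - x_{N-1}} = f'(x_N)/f(x_N)$; the interior differences telescope cleanly.

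Assume for contradiction that $x_1 \geq r_\ell + \varepsilon$ for some $\varepsilon > 0$ along a subsequence of $N$. The identity then bounds the global sum below by $-1/\varepsilon$ uniformly in $N$. I fix $v > r_\ell$ large enough that $f'(v)/f(v) < 0$, which is possible since $f'(x)/f(x) = 2p_\ell'(x)/p_\ell(x) - x \to -\infty$ as $x \to \infty$, and split the global sum at $v$. By log-concavity of $f$ on $(r_\ell, \infty)$ (Lemma \ref{lem:higher-order}), each term in $\{x_k > v\}$ is at most $f'(v)/f(v) < 0$, and each term in $\{x_k \leq v\}$ is at most $M := f'(r_\ell+\varepsilon)/f(r_\ell+\varepsilon)$. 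To control $n(v)$, I would mirror the calculation already executed in the proof of Lemma \ref{lem:4.1}: iterating the MIW recursion yields the gap lower bound $x_{k+1}-x_k \geq (1/\varepsilon + kM)^{-1}$, so any $x_k \in [r_\ell+\varepsilon, v]$ forces $k \leq C_v$ for a constant $C_v$ independent of $N$; hence $n(v) \leq C_v$ and $N - n(v) \to \infty$. Putting the pieces together, the split identity gives
\begin{equation*}
-\frac{1}{\varepsilon} < \sum_{k=1}^N \frac{f'(x_k)}{f(x_k)} \leq C_v M + (N - C_v)\frac{f'(v)}{f(v)},
\end{equation*}
whose right-hand side tends to $-\infty$ as $N \to \infty$, contradicting the uniform lower bound; hence $x_1 \to r_\ell$. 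The step that takes care is establishing the telescoping identity with the correct $\infty$-boundary contribution; once that is in place and log-concavity is exploited, the contradiction is forced by pitting the uniform boundedness of the global sum against the divergence of its tail over the ever more populated set $\{x_k > v\}$.
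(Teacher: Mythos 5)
Your argument is correct, and it takes a genuinely different and arguably cleaner route than the paper's. The paper also assumes for contradiction that $x_1 \geq r_\ell + \varepsilon$ and derives the same logarithmic gap lower bound, but it then proceeds via Lemma~\ref{lem:xNbound} to get $x_N^2 \leq x_{n(v)+1}^2 + \calO(\log N)$, and contrasts this with a $\gtrsim (\log N)^2$ lower bound on $x_N^2 - x_{n(v)+1}^2$ coming from the logarithmic growth of $x_n$; the contradiction is then between $\log N$ and $(\log N)^2$ asymptotics. You instead sum the MIW relation over the whole tail together with the left boundary condition at $x_0$ and the right boundary condition at $\infty$ to produce the telescoping identity $\sum_{k=1}^N f'(x_k)/f(x_k) = -1/(x_1 - x_0)$, which is uniformly bounded below by $-1/\varepsilon$ under the contradiction hypothesis. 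The split of this bounded sum into the finitely many terms in $[r_\ell+\varepsilon, v]$ and the $N - \calO(1)$ terms beyond $v$, each of the latter at most $f'(v)/f(v) < 0$ by log-concavity, then forces the sum to $-\infty$. This avoids Lemma~\ref{lem:xNbound} entirely and replaces a two-sided quadratic asymptotic comparison with an elementary counting bound; it is essentially a conservation-law argument exploiting the full strength of the boundary conditions. One small implicit detail worth surfacing: as in the paper's Lemma~\ref{lem:4.1}, you should take $\varepsilon$ small enough (which only strengthens the contradiction hypothesis) to ensure $M = f'(r_\ell+\varepsilon)/f(r_\ell+\varepsilon) > 0$, so that $1/\varepsilon + kM$ stays positive and the gap lower bound $x_{k+1}-x_k \geq (1/\varepsilon + kM)^{-1}$ is meaningful for all $k$; your argument is otherwise watertight.
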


\begin{proof}
  Let $u=r_\ell$.  Suppose $\varepsilon > 0$ is such that $\frac{1}{x_1 - u} \leq \frac{1}{\varepsilon}$, that is $x_1\geq \varepsilon + u$. Hence, $x_j>x_1\geq \varepsilon + u$ for $j\geq 1$.  We can arrange by the form of $f$ that also $f'(u + \varepsilon)>0$ (cf. Lemma \ref{lem:3.5}).  Following the proof of Lemma \ref{lem:4.1}, noting \eqref{lem4.1help},
  we have that
  \begin{align}
  \label{lem4.3help}
    x_n
    &\geq x_j +\sum_{k=j}^{n-1} \frac{1}{\frac{1}{\varepsilon} + k\frac{f'(u+ \varepsilon)}{f(u+\varepsilon)}} \geq u + \sum_{k=j}^{n-1} \frac{1}{\frac{1}{\varepsilon} + k\frac{f'(u+ \varepsilon)}{f(u+\varepsilon)}}.
  \end{align}
  Hence, $x_n$ must grow at least logarithmically with $n$.  In particular, $x_{\lfloor \sqrt{N}\rfloor} >v$ for all large $N$, with respect to any $v>u$.  Hence, $N- n(v) \geq N-\sqrt{N}$ diverges, and by Lemma \ref{lem:xNbound} applied to the $v$ specified there, we have
  $x^2_N \leq x_{n(v)+1}^2 + 4(1+\log N)$.  But, from the first inequality in \eqref{lem4.3help} with $n=N$ and $j =n(v)+1$, we have $x_N\geq x_{n(v)+1} + c\log(N/\sqrt{N})$ for a constant $c>0$. Then, $x^2_N \geq (x_{n(v)+1} + (c/2)\log N)^2 \geq x^2_{n(v)+1} + ((c/2)\log N)^2$, yielding a contradiction in the growth of $x^2_N - x^2_{n(v)+1}$.
 Hence, $r_\ell=u<x_1\leq u+\varepsilon$, and so $x_1\rightarrow u$ as $N\uparrow\infty$.

  The last statement follows the same argument by considering $f(u-x)$ for $u=r_1$.
\end{proof}

We now give a lower bound for $x_N$. 
\begin{lemma}
  \label{prop:gaps-xN}
  Let $f$ be 
  strictly positive on $(u,\infty)$ where
  $u=r_\ell$ if $\ell\geq 1$ and $u=0$ when $\ell=0$.
  Let $(x_n)_{n=1}^N$ denote the part of
  an MIW sequence of $f$ contained in $(u,\infty)$.  
  Then,
  the term $x_N > \sqrt{\log N}$ as $N\uparrow\infty$.
\end{lemma}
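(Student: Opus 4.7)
My plan is to exploit a clean telescoped form of the MIW recursion together with the right boundary condition at $\infty$. Summing \eqref{eq:hdw-seq} from $k = n$ to $k = N$ and using $1/(x_{N+1}-x_N)=0$, the left-hand side collapses to $-1/(x_n - x_{n-1})$, while the right-hand side is $\sum_{k=n}^N f'(x_k)/f(x_k)$. Substituting $f = c p_\ell^2 e^{-x^2/2}$, so that $f'/f = 2 p'_\ell/p_\ell - x$, I would obtain
\[
\frac{1}{x_n - x_{n-1}} = \sum_{k=n}^N \left(x_k - 2\frac{p'_\ell(x_k)}{p_\ell(x_k)}\right),
\]
which reduces to $\sum_{k=n}^N x_k$ in the $\ell = 0$ case.

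The next step is the key estimate: on $(u,\infty)$ the ratio $p'_\ell(x)/p_\ell(x)$ is nonnegative. For $\ell \geq 1$, this follows from the interlacing property of Hermite roots, which places the largest root of $p_{\ell-1}$ strictly below $u = r_\ell$; hence both $p_\ell$ and $p_{\ell-1}$ are positive on $(u,\infty)$, and the identity $p'_\ell = \ell p_{\ell-1}$ finishes the claim. For $\ell = 0$ the ratio is identically zero. Each summand is therefore bounded above by $x_k$, and monotonicity of the MIW sequence gives $x_k \leq x_N$. Substituting yields
\[
\frac{1}{x_n - x_{n-1}} \leq (N-n+1)\, x_N, \qquad \text{hence} \qquad x_n - x_{n-1} \geq \frac{1}{(N-n+1)\, x_N}
\]
for each $1 \leq n \leq N$.

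To conclude, I would telescope this gap bound from $n = 2$ to $n = N$, obtaining
\[
x_N - x_1 \geq \frac{1}{x_N}\sum_{n=2}^N \frac{1}{N-n+1} = \frac{H_{N-1}}{x_N},
\]
where $H_{N-1} = \sum_{k=1}^{N-1} 1/k$. Because $x_1 > u \geq 0$ in all cases under consideration (we have $u = r_\ell \geq 0$ when $\ell \geq 1$, and $u = 0$ when $\ell = 0$), it follows that $x_N^2 > x_N(x_N - x_1) \geq H_{N-1} > \log N$, giving $x_N > \sqrt{\log N}$. The argument is short once the telescoped identity is in place; the main point to verify carefully is the sign of $p'_\ell/p_\ell$ on $(u,\infty)$, which I expect to be routine from Hermite interlacing, or by appeal to a lemma analogous to Lemmas \ref{lem:3.5} and \ref{lem:higher-order}.
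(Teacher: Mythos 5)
Your proof is correct and follows essentially the same route as the paper's: telescope the MIW recursion against the $\infty$ right boundary condition, drop the nonnegative term $2p'_\ell/p_\ell$ to bound each gap by $\bigl((N-n)x_N\bigr)^{-1}$, then sum the gaps to force $x_N^2 \geq H_{N-1} > \log N$. The only cosmetic difference is how you establish $p'_\ell/p_\ell \geq 0$ on $(u,\infty)$ — you invoke $p'_\ell = \ell p_{\ell-1}$ and Hermite interlacing, whereas the paper uses the more direct factored identity $p'_\ell(x)/p_\ell(x) = \sum_{j=1}^\ell (x - r_j)^{-1}$ from Lemma \ref{lem:higher-order}; both are valid.
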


\begin{proof}
  For $x>u$, we have $p_\ell'(x)/p_\ell(x) = \sum_{j=1}^\ell (x-r_j)^{-1}>0$ when $\ell\geq 1$.  When $\ell=0$, $p'_\ell(x) = 0$.  In both cases,
  $x - 2\frac{p_\ell'(x)}{p_\ell(x)} \leq x$.

  Then, by \eqref{MIW-relation1} and as the MIW sequence $(x_n)_{n=1}^N$ is increasing,
  \[
    0 < \frac{1}{x_{n+1} - x_n}
    = \sum_{k=n+1}^N x_k - 2\frac{p_\ell'(x_k)}{p_\ell(x_k)}
    \leq  \sum_{k=n+1}^N x_k
    \leq (N-n) x_N.
  \]
  That is to say,
  $\frac{1}{(N-n) x_N}
  \leq x_{n+1} - x_n$ for $1\leq n\leq N-1$.

  Then, as $x_1\geq 0$,
  \begin{align*}
    x_N
    &= x_{1} + \sum_{k=1}^{N-1} x_{k+1} - x_k
    \geq \sum_{k=1}^{N-1} \frac{1}{(N-k)x_N}
    = \frac{1}{x_N} \sum_{k=1}^{N-1} \frac 1k.
  \end{align*}
  Thus, we can bound
  $x_N^2
    \geq \sum_{k=1}^{N-1} \frac 1k
    \geq \log N$.
\end{proof}

\begin{lemma}\label{lem:tail-gap}
  Let $f$ be strictly positive in $(u, \infty)$ where $u=r_\ell$ when $\ell\geq 1$ and $u=0$ when $\ell=0$.
  Let $(x_n)_{n=1}^N$ denote the part of 
  an MIW sequence of $f$ contained in $(u,\infty)$.
  Then, for all $N$ sufficiently large,
  $x_{N} - x_{N-1} < \frac{2}{\sqrt{\log N}}$.

  Similarly, with respect to the part of the MIW sequence contained in $(-\infty, v)$ where $f$ is strictly positive and $v=r_1$ when $\ell\geq 1$ and $v=0$ when $\ell=0$, we have that $x_2-x_1 < \frac{2}{\sqrt{\log N}}$. 
\end{lemma}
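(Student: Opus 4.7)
The plan is to exploit the right boundary condition directly. Since $x_{N+1} = \infty$, the MIW boundary relation reduces to
\[
  \frac{1}{x_N - x_{N-1}} \;=\; -\frac{f'(x_N)}{f(x_N)} \;=\; x_N - \frac{2p_\ell'(x_N)}{p_\ell(x_N)},
\]
using $f(x) = c p_\ell(x)^2 e^{-x^2/2}$. Thus everything reduces to getting a good lower bound on the right-hand side, which is just $x_N$ up to a correction coming from the logarithmic derivative of $p_\ell$.

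First, I would invoke Lemma~\ref{prop:gaps-xN} to obtain $x_N > \sqrt{\log N}$ for all large $N$. Next, since $p_\ell'(x)/p_\ell(x) = \sum_{j=1}^\ell (x-r_j)^{-1} \to 0$ as $x \uparrow \infty$ (and this term vanishes identically when $\ell = 0$), there exists a fixed threshold $v_0$ such that $2p_\ell'(x)/p_\ell(x) \leq x/2$ for all $x \geq v_0$. Because $x_N \to \infty$, we have $x_N \geq v_0$ for $N$ large, and therefore
\[
  \frac{1}{x_N - x_{N-1}} \;=\; x_N - \frac{2p_\ell'(x_N)}{p_\ell(x_N)} \;\geq\; \frac{x_N}{2} \;\geq\; \frac{\sqrt{\log N}}{2}.
\]
Inverting gives $x_N - x_{N-1} \leq 2/\sqrt{\log N}$, which is the desired bound.

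For the analogous statement at the left end, I would simply invoke the symmetry $f(x) = f(-x)$ from Lemma~\ref{lem:higher-order}: the MIW sequence in $(-\infty, v)$ for the function $f$ corresponds under $x \mapsto -x$ to an MIW sequence in $(-v, \infty)$ for $\tilde f(x) = f(-x)$, which is again a higher energy function of the same order. Applying the preceding argument to this reflected sequence converts the bound on $x_N - x_{N-1}$ into the bound $x_2 - x_1 < 2/\sqrt{\log N}$.

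I do not anticipate a serious obstacle here; the argument is essentially a one-line consequence of the boundary relation plus Lemma~\ref{prop:gaps-xN}. The only subtle point is verifying that the correction $2p_\ell'(x_N)/p_\ell(x_N)$ does not spoil the bound, but this is immediate from the partial-fraction form of $p_\ell'/p_\ell$ and the fact that $x_N \uparrow \infty$. Everything has been set up by the previous lemmas (in particular, the lower bound $x_N > \sqrt{\log N}$ and the structural properties of $p_\ell$ recorded in Lemma~\ref{lem:higher-order} and Lemma~\ref{lem:3.5}).
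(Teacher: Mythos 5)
Your argument is correct and follows the paper's proof essentially verbatim: both use the right boundary condition to write $1/(x_N - x_{N-1}) = x_N - 2p_\ell'(x_N)/p_\ell(x_N)$, both bound the logarithmic-derivative correction by $x_N/2$ for large $x_N$ (valid since $p_\ell'/p_\ell = \calO(1/x)$), both invoke Lemma~\ref{prop:gaps-xN} for $x_N > \sqrt{\log N}$, and both dispatch the left endpoint by symmetry. The only cosmetic point is that the strict inequality in the conclusion comes from the strict bound $x_N > \sqrt{\log N}$, which you should carry through the final step.
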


\begin{proof}
  Observe that $-\frac{f'(x)}{f(x)}=x - 2\frac{p_\ell'(x)}{p_\ell(x)} = x + \calO(1/x)$ as $x\uparrow\infty$.
  Since $x_N\geq \sqrt{\log N}$ by Lemma \ref{prop:gaps-xN}, we have
  $x_N - 2\frac{p_\ell'(x_N)}{p_\ell(x_N)} \geq \frac 12 x_N$ for large $N$.
  Thus, from the right $\infty$ boundary condition,
  \begin{align*}
    x_{N} - x_{N-1}
    = \frac{1}{x_N - 2\frac{p_\ell'(x_N)}{p_\ell(x_N)}}
    \leq \frac{2}{x_N}
    < \frac{2}{\sqrt{\log N}}.
  \end{align*}

  The argument for $x_2-x_1$ on the left end, follows by symmetry of $f$.
\end{proof}

\begin{lemma}\label{lem:gaps}
  For each $t \in \R$,
  the difference $x_{n(t)+1} - x_{n(t)} \to 0$.
  Therefore, $\lim_{N \to \infty} x_{n(t)+1} = t$.
\end{lemma}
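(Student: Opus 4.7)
The plan is to exploit the MIW recursion together with the log-concavity of $f$ within each strictly positive region to show that the reciprocal-gap sequence $1/g_i$, where $g_i = x_{i+1}-x_i$, is concave as a function of the index $i$ on each region. Concavity will localize the maximum gap within each region to one of its boundary gaps, which we then show vanish using Lemmas~\ref{lem:4.1}, \ref{lem:4.3}, and \ref{lem:tail-gap}.

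First I would fix a region $R_k = (r_k, r_{k+1})$ and let $m_k = \sum_{j<k} N_j$, so that the MIW points in $R_k$ are $x_{m_k+1}, \ldots, x_{m_k+N_k}$. The MIW recursion $1/g_i - 1/g_{i-1} = f'(x_i)/f(x_i)$, valid for $m_k+1 \leq i \leq m_k+N_k$, has right-hand side strictly decreasing in $i$ over this range, since $x_i$ is increasing and $f'/f$ is strictly decreasing on $R_k$ by log-concavity (Lemma~\ref{lem:higher-order}). Hence $1/g_i$ is a concave sequence on the index set $\{m_k, m_k+1, \ldots, m_k+N_k\}$; the minimum of a discrete concave sequence on a contiguous range is attained at an endpoint, so the maximum of $g_i$ on any contiguous subrange of actual gap indices is also attained at a boundary index.

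It then suffices to show that the relevant boundary gaps vanish as $N \to \infty$. For each interior root $r_k$ with $1 \leq k \leq \ell$, the straddling gap $g_{m_k} = x_{m_k+1} - x_{m_k}$ vanishes because Lemma~\ref{lem:4.1} (middle regions) and Lemma~\ref{lem:4.3} (outer regions) imply that the last point of the subsequence in $R_{k-1}$ converges to $r_k$ from below while the first point of the subsequence in $R_k$ converges to $r_k$ from above. At the two infinite extremities, where the conventions $x_0 = -\infty$ and $x_{N+1} = \infty$ give the degenerate values $1/g_0 = 1/g_N = 0$, these degenerate indices are excluded from the range of actual gaps, and the neighboring extremal gaps $g_1$ and $g_{N-1}$ are bounded directly by Lemma~\ref{lem:tail-gap} (and its left-symmetric counterpart) as $g_1, g_{N-1} \leq 2/\sqrt{\log N}$. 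Combining these estimates with the concavity localization, every actual gap $g_i$ for $1 \leq i \leq N-1$ is bounded above by a quantity tending to $0$ as $N \uparrow \infty$.

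Finally, for any fixed $t \in \R$, the growth estimates $x_1 \to -\infty$ and $x_N \to +\infty$ (from Lemma~\ref{prop:gaps-xN} and its symmetric analogue) give $1 \leq n(t) \leq N-1$ for all sufficiently large $N$. Hence $x_{n(t)+1} - x_{n(t)} = g_{n(t)}$ is one of the gaps just bounded, and so tends to $0$; since $x_{n(t)} \leq t < x_{n(t)+1}$, we conclude $x_{n(t)+1} \to t$. I expect the main subtlety to lie in verifying that the concavity argument transitions cleanly between the different boundary-gap vanishing mechanisms (straddling across a finite root versus the tail bounds at $\pm\infty$) and in correctly restricting the concave range to exclude the degenerate at-infinity conventions.
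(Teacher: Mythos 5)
Your proof is correct and follows essentially the same path as the paper's. The paper observes that on each strictly positive region the gaps first decrease then increase (since $f'/f$ changes sign at the unique local maximum of $f$ there), which is the same unimodality you obtain by noting that $1/g_i$ is concave (since $f'/f$ is strictly decreasing by log-concavity); both arguments localize the maximal gap to the straddle gaps at the roots and the extremal gaps $g_1$, $g_{N-1}$, which are then handled via Lemmas \ref{lem:4.1}, \ref{lem:4.3}, and \ref{lem:tail-gap} exactly as you describe.
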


\begin{proof}
  Let $m$ be a local maximum of $f$.  Let $r$ be the closest zero of $f$ to the right of $m$, or if there is no such zero, let $r=\infty$.  Similarly, let $s$ be the closest zero to the left of $m$ or if there is no such zero, let $s=-\infty$.

  On the intervals $(m,r)$ and $(s,m)$, the function $\frac{f'(x)}{f(x)}$ is negative and positive respectively.
  Thus, given $x_k \in (m,r)$, by the MIW relation,
  $\frac{1}{x_{k+1} - x_k} - \frac{1}{x_k - x_{k-1}} = \frac{f'(x_k)}{f(x_k)}< 0$,
  or $x_k - x_{k-1} < x_{k+1} - x_k$.
  Likewise, given $x_k \in (s,m)$, we can bound
  $x_k - x_{k-1} > x_{k+1} - x_k$.

  Hence, the
  interval sizes of $(x_n)_{n=1}^N$ are bounded by $x_{n(r)+1} - x_{n(r)}$ at roots $r$ of $f$, or by $x_2-x_1$ and $x_N-x_{N-1}$ at the extremities.

  By Lemmas \ref{lem:4.1} and \ref{lem:4.3},
  $x_{n(r)+1} - x_{n(r)}\rightarrow 0$.  By Lemma \ref{lem:tail-gap}, applied with $u=r_\ell$, $v=r_1$ when $\ell\geq 1$ and with $u=v=0$ when $\ell=0$,
  $x_N-x_{N-1}, x_2-x_1\rightarrow 0$.  Hence, for all $\ell\geq 0$, the gaps $x_{n(t)+1} - x_{n(t)} \rightarrow 0$ for $t\in \R$.
\end{proof}

Finally, we give an upper order to the points $|x_1|$ and $x_N$.
\begin{lemma}
  \label{lem:upperorder}
  We have $x_N, |x_1| = O(\sqrt{\log N})$.
\end{lemma}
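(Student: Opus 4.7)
The plan is to deduce the bound on $x_N$ directly from Lemma~\ref{lem:xNbound}, whose conclusion already gives exactly $x_N^2 = \calO(\log N)$; the only work is to verify its hypothesis that the number of sequence points past some threshold $v$ diverges. The bound on $|x_1|$ will then follow from the same argument applied to the reflected MIW sequence $(-x_{N-n+1})_{n=1}^N$, which is itself an MIW sequence of $f$ by the symmetry of $f$ (Lemma~\ref{lem:higher-order}) and whose largest element is $-x_1$.

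First I would choose $v$ past the largest local maximum of $f$ with $p'_\ell(v)/p_\ell(v) \leq v/4$. When $\ell=0$ the ratio is identically zero, so any $v>0$ suffices; when $\ell\geq 1$ one has $p'_\ell(x)/p_\ell(x) = \sum_{j=1}^\ell (x-r_j)^{-1} \to 0$ as $x\uparrow\infty$, and since this quantity is decreasing on $(r_\ell,\infty)$ while $x/4$ is increasing, the inequality $p'_\ell(x)/p_\ell(x) \leq x/4$ automatically propagates to every $x\geq v$. In particular $f'(x)/f(x) = 2p'_\ell(x)/p_\ell(x) - x \leq -x/2 < 0$ throughout $(v,\infty)$.

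Next I would verify that $K := N - n(v)$, the number of sequence points in $(v,\infty)$, diverges. By the MIW recursion \eqref{eq:hdw-seq}, whenever $f'(x_n)/f(x_n) < 0$ one has $1/(x_{n+1}-x_n) < 1/(x_n-x_{n-1})$, so the gap sequence is strictly increasing in the tail; combined with Lemma~\ref{lem:tail-gap}, this gives $x_{j+1}-x_j \leq x_N - x_{N-1} \leq 2/\sqrt{\log N}$ for every $j \geq n(v)+1$. Summing these $K-1$ tail gaps,
\begin{align*}
x_N - x_{n(v)+1} \ \leq \ (K-1)\cdot \frac{2}{\sqrt{\log N}}.
\end{align*}
The left side tends to infinity because $x_N \geq \sqrt{\log N}$ by Lemma~\ref{prop:gaps-xN} while $x_{n(v)+1}\to v$ is bounded by Lemma~\ref{lem:gaps}. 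This forces $K\to\infty$. Then Lemma~\ref{lem:xNbound} applies and yields $x_N^2 \leq x_{n(v)+1}^2 + 4(1+\log K) = \calO(\log N)$, since $K \leq N$ and $x_{n(v)+1}$ is bounded.

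The main obstacle is the verification $K\to\infty$: without it Lemma~\ref{lem:xNbound} is inaccessible, and the divergence is subtle because a priori one only controls the single last gap. The idea to overcome this is to exploit the global monotonicity of tail gaps coming from the sign of $f'/f$, so that the quantitative smallness of the last gap from Lemma~\ref{lem:tail-gap} propagates backwards to bound every tail gap, forcing many points to accumulate between $v$ and $x_N$ in order to span the diverging distance $x_N - v \gtrsim \sqrt{\log N}$.
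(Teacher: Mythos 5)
Your proposal is correct and follows essentially the same skeleton as the paper's proof: deduce $x_N = \calO(\sqrt{\log N})$ by applying Lemma~\ref{lem:xNbound} once its hypothesis that $N - n(v)$ diverges is checked, bound $x_{n(v)+1}$ using Lemma~\ref{lem:gaps}, and handle $|x_1|$ by symmetry of $f$. The only place you diverge is in how $N-n(v)\to\infty$ is justified: the paper cites the qualitative result of Lemma~\ref{lem:gaps} alongside Lemma~\ref{prop:gaps-xN} rather tersely, whereas you make the step quantitative by observing that $f'/f<0$ on $(v,\infty)$ forces the tail gaps to be monotone increasing, so all of them are dominated by the last gap controlled in Lemma~\ref{lem:tail-gap}, and summing yields the contradiction; this rests on the same gap-monotonicity the paper relies on implicitly, but your version is more explicit.
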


\begin{proof}
  By Lemma \ref{lem:gaps}, the gaps $x_{n(t)+1} - x_{n(t)} \rightarrow 0$ for $t\in \R$.  Also, by Lemma \ref{prop:gaps-xN}, 
  we have $x_N\geq \sqrt{\log N}$.  Hence, as a consequence, the number of points in the sequence beyond any $v>r_\ell$ when $\ell\geq 1$ or $v>0$ when $\ell=0$ diverges as $N\uparrow\infty$.  We may now apply Lemma \ref{lem:xNbound} to see that $x^2_N \leq x^2_{n(v)+1} + \calO(\log N)$, and by Lemma \ref{lem:gaps} to conclude $x^2_N = \calO(1) + \calO(\log N)$, and so $x_N = O(\sqrt{\log N})$. 

  The same bound on $|x_1|$ holds by symmetry of $f$.
\end{proof}

\medskip
\noindent
{\bf Proof of Theorem \ref{thm:gaps}.}  The vanishing gap property is given in Lemma \ref{lem:gaps}.  Lower and upper bounds of the extremities $|x_1|, x_N$ are given in Lemmas \ref{prop:gaps-xN} and \ref{lem:upperorder}.
\qed

\section{Proof of Theorem \ref{thm:converge}: Convergence in Wasserstein-$1$ distance}
\label{converge-sect}

Let $f$ be a higher energy function of order $\ell\geq 0$.  Consider an interval $(a,b)$ where $f$ is strictly positive such that $a$ is a zero of $f$ or $a=-\infty$ and $b$ is the next zero of $f$ to the right of $a$ or $b=\infty$.
We will define $P=P_{a,b}$ as the distribution on $(a,b)$ with density proportional to $f$.  Consider the part of the MIW sequence constructed in Theorem \ref{thm:exist-unique} contained in $(a,b)$, with $N=N_{a,b} \geq 2$ points, and denote it as $(x_n)_{n=1}^N$.  Let $Q=Q_{a,b}$ be the empirical measure associated to this sequence.

Recall $n(t)$ defined in \eqref{n(t) eq}.  To bound $|\Exp_P[h]-\Exp_Q[h]|$,
as in \cite{mckeague_levin_2016, chen_thanh_2020} with respect to the Normal density, we will compare $Q$ with respect to another `intermediate' continuous distribution $R=R_{a,b}$ with density
\begin{align*}
  \rho(t) = \frac{1}{(N-1)(x_{n(t)+1} - x_{n(t)})},
\end{align*}
supported on $(x_1, x_N)\subset (a,b)$.

Then, to compare $R$ with $P$, 
we will use the `differential equation' method with respect to a Stein equation.
Define 
in terms of a differentiable $1$-Lipschitz function $h\in \mathcal{L}$, the function
\begin{align}
  \label{gh-eqn}
  g_h(x) = \frac{1}{f(x)} \int_a^x f(t)(h(t) - \Exp_P[h])\ dt.
\end{align}
Note that $g_h$ satisfies the `Stein equation',
\begin{align}\label{Steineqn}
  g_h'(x) + \frac{f'(x)}{f(x)}g_h(x) = h(x) - \Exp_P[h].
\end{align}
Such `Stein equations' and subsequent approximations have been considered in wide generality; see \cite{ernst_swan_2022}, \cite{Gaunt}.

We will bound $|\Exp_R[h] - \Exp_Q[h]|$ in Section \ref{coupling bound sect}
and 
$|\Exp_R[h] - \Exp_P[h]|$
in terms of $g_h$ in Section \ref{converge-sect2}.
Both will be shown, after development, of order
$\calO(\frac{x_N - x_1}{N})$.

On finite intervals $\frac{x_N - x_1}{N} = \calO(1/N)$, and on infinite intervals,
by Theorem \ref{thm:gaps}, as $x_N, x_1 = \calO(\sqrt{\log N})$, we have $\frac{x_N - x_1}{N} = \calO(\sqrt{\log N}/N)$.  
We will piece together now these estimates over the intervals $(r_k, r_{k+1})$ between any zeros of $f$ and conclude the proof of Theorem \ref{thm:converge} in Section \ref{proofconvergesect} that the Wasserstein-$1$ distance
$d(P, Q) = \sup_{h\in \mathcal{L}} |\Exp_P[h] - \Exp_Q[h]| \leq \calO(\frac{\sqrt{\log N}}{N})$, 
as desired.

It will be of later use to express the second derivative
\begin{align}
  g_h''(x)
    &= 2\left(\frac{f'(x)}{f(x)}\right)^2 g_h(x)
    - \frac{f''(x)}{f(x)}g_h(x) - \frac{f'(x)}{f(x)} (h(x) - \Exp_P[h]) + h'(x).
    \label{eq:gh-split}
\end{align}
We also state a general estimate which will be helpful in the sequel, especially in places to bound terms uniformly over $h\in \mathcal{L}$.

\begin{lemma}\label{lem:bd-h}
  Let $P$ be a probability distribution on $(a,b)$,
  where $-\infty \leq a < b \leq \infty$.
  Let $h$ be a $1$-Lipschitz function,
  and $y \in (a,b)$,
  Let also $X$ be a random variable with distribution $P$.

  For $y\in \R$, we have
  $|h(y) - \Exp_P[h]| \leq \Exp_P |y-X| \leq |y| + \Exp_P[|X|]$.

  If $a$ is finite, then
  $
  |h(y) - \Exp_P[h]| \leq \Exp_P[X] + y - 2a$.

  If $b$ is finite, then
  $
  |h(y) - \Exp_P[h]| \leq 2b - \Exp_P[X] - y$.

  If $a$, $b$ are both finite, then
  $
  |h(y) - \Exp_P[h]| \leq b - a$
\end{lemma}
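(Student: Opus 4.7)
The plan is to reduce all four bounds to elementary pointwise estimates on $|y - X|$, using the Lipschitz hypothesis only once at the outset. Since $\Exp_P[h(X)] = \Exp_P[h]$, Jensen's inequality and the $1$-Lipschitz property give
\begin{equation*}
  |h(y) - \Exp_P[h]|
  = \bigl|\Exp_P[h(y) - h(X)]\bigr|
  \leq \Exp_P|h(y) - h(X)|
  \leq \Exp_P|y - X|,
\end{equation*}
which is the first inequality of the first bullet. The triangle inequality $|y-X| \leq |y| + |X|$, followed by taking expectations, yields $\Exp_P|y-X| \leq |y| + \Exp_P|X|$, completing the first bullet.

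For the remaining three cases I would exploit the fact that both $y$ and $X$ lie in $(a,b)$ to obtain a one-sided pointwise bound on $|y-X|$ without absolute values. When $a$ is finite, $y-a$ and $X-a$ are both nonnegative, so $|y-X| \leq (y-a) + (X-a)$; taking expectations gives the stated $y + \Exp_P[X] - 2a$. When $b$ is finite, the symmetric bound $|y-X| \leq (b-y) + (b-X)$ gives $2b - y - \Exp_P[X]$. When both endpoints are finite, the trivial uniform bound $|y-X| \leq b-a$ gives the last claim directly.

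No step here is delicate; the lemma simply packages standard first-moment estimates in the form needed later. The only thing to be careful about is keeping all four variants stated separately, since each is the convenient one in a different part of the Stein-equation analysis in Section \ref{converge-sect2}: the $|y| + \Exp_P|X|$ form is needed on $\R$ or on half-lines (for the Normal and higher energy tails), while the one-sided bounds $y + \Exp_P[X] - 2a$ and $2b - y - \Exp_P[X]$ are what allow terms like $g_h(x)$ near a finite zero $r_k$ of $f$ to be controlled by the distance to that zero, which is what ultimately makes the $\calO(\sqrt{\log N}/N)$ rate survive the singularities of $f'/f$.
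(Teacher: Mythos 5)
Your proof is correct and follows essentially the same route as the paper's: Lipschitz plus Jensen for the first bound, and the one-sided pointwise estimates $|y-X|\leq (y-a)+(X-a)$ and $|y-X|\leq (b-y)+(b-X)$ for the finite-endpoint cases. The only cosmetic difference is in the final bound, where you invoke the trivial pointwise estimate $|y-X|\leq b-a$ directly, while the paper instead averages the two preceding one-sided bounds; both are immediate.
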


\begin{proof}
  For $y\in (a,b)$,
  we compute, as $h\in \mathcal{L}$, that
  $
  |h(y) - \Exp_P[h]|
  \leq \Exp_P |y - X|\leq |y| + \Exp_P[|X|]$.
  Moreover, if $a$ is finite, then as $|y-X| \leq y-a + a-X$, we have
  $|h(y) - \Exp_P[h]|  \leq
  = \Exp_P[X]+ y - 2a$.
  Likewise, when $b$ is finite, $|h(y) - \Exp_P[h]| \leq 2b - \Exp_P[X] - y$.
  Summing these, we obtain the last inequality, $2|h(y) - \Exp_P[h]| \leq 2(b-a)$.
\end{proof}

\subsection{Bound on $|\Exp_R[h] - \Exp_Q[h]|$}
\label{coupling bound sect}
Recall the distributions $Q$ and $R$ defined in terms of the MIW sequence $(x_n)_{n=1}^N$ of a higher energy function $f$.
The following, in the case of the Normal density $f$, was started in \cite{mckeague_levin_2016, chen_thanh_2020}.  The argument is the same, given here for the reader's convenience.

\begin{proposition}\label{prop:wass1}
  We have that
  the Wasserstein-$1$ distance
  $\sup_{h \in \calL} |\Exp_R[h] - \Exp_Q[h]| \leq \frac{x_N - x_1}{N-1}$.
\end{proposition}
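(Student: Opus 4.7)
The plan is to exploit the identity $d(R,Q) = \int_\R |F_R(x) - F_Q(x)|\,dx$, which reduces the supremum over $h\in\mathcal{L}$ to a concrete computation, since both $R$ and $Q$ are supported on $[x_1,x_N]$ and $F_R - F_Q$ has an explicit piecewise linear description.

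First I would write $F_Q$ and $F_R$ explicitly on each subinterval $[x_k,x_{k+1})$ for $1\leq k\leq N-1$: one has $F_Q(x) = k/N$, and since $R$ has density $\frac{1}{(N-1)(x_{k+1}-x_k)}$ on $[x_k,x_{k+1})$ (so puts mass $\frac{1}{N-1}$ on the interval),
\[
F_R(x) \;=\; \frac{k-1}{N-1} \;+\; \frac{x - x_k}{(N-1)(x_{k+1}-x_k)}.
\]
Subtracting, the difference $F_R(x) - F_Q(x)$ is affine in $x$ on the interval, equal to $\frac{k-N}{N(N-1)}$ at $x_k$ and to $\frac{k}{N(N-1)}$ at $x_{k+1}^-$, and it vanishes at the point $x_k + \frac{(N-k)(x_{k+1}-x_k)}{N}$.

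Thus $|F_R - F_Q|$ is piecewise linear with two triangular pieces on $[x_k,x_{k+1}]$, and an elementary area computation gives
\[
\int_{x_k}^{x_{k+1}} |F_R(x) - F_Q(x)|\,dx
\;=\; \frac{(x_{k+1}-x_k)\bigl((N-k)^2 + k^2\bigr)}{2N^2(N-1)}.
\]
Summing over $1\leq k\leq N-1$ and using the elementary inequality $(N-k)^2 + k^2 = N^2 - 2k(N-k) \leq N^2$, I would conclude
\[
d(R,Q)\;\leq\;\frac{1}{2(N-1)}\sum_{k=1}^{N-1}(x_{k+1}-x_k)\;=\;\frac{x_N - x_1}{2(N-1)}\;\leq\;\frac{x_N-x_1}{N-1},
\]
which is the stated bound. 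I do not foresee a serious obstacle here: the main subtlety is making sure the sign changes in $F_R - F_Q$ within each subinterval are handled correctly (so splitting into the two triangles at the zero of the linear piece is essential), rather than naively bounding by endpoint values, which would give the wrong order.
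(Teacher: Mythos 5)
Your proof is correct and takes a genuinely different route from the paper's. The paper constructs an explicit coupling: it chooses points $(z_n)_{n=1}^{N+1}$ with $z_1 = x_1$, $z_{N+1} = x_N$, $\int_{z_n}^{z_{n+1}}\rho\,dt = \tfrac{1}{N}$, draws $Y\sim R$, sets $X = x_{n^*(Y)}$ where $n^*(t) = \#\{z_k \le t\}$, verifies $X\sim Q$ and $|Y-X| \le x_{n^*(Y)+1} - x_{n^*(Y)}$, and then bounds $\Exp|Y-X| \le \sum_n \tfrac{x_{n+1}-x_n}{N-1}$. You instead use the identity $d(R,Q) = \int_\R |F_R - F_Q|\,dx$ and compute the integral exactly on each subinterval, splitting at the zero of the affine difference. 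Your computation checks out: on $[x_k,x_{k+1})$ the difference runs linearly from $\tfrac{k-N}{N(N-1)}$ to $\tfrac{k}{N(N-1)}$, the two-triangle area is $\tfrac{(x_{k+1}-x_k)\left((N-k)^2+k^2\right)}{2N^2(N-1)}$, and $(N-k)^2 + k^2 \le N^2$ closes the bound. What your approach buys is a sharper constant, $d(R,Q)\le\tfrac{x_N-x_1}{2(N-1)}$, and avoids having to verify the marginals of a coupling; what the paper's coupling buys is brevity and robustness (it does not need the CDFs to be computed explicitly). Both are complete and correct.
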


\begin{proof}
  Let $X$ be a random variable with distribution
  $Q = \frac 1N \sum_{n=1}^N \delta_{x_n}$,
  and $Y$ be a random variable with distribution $R$.  We couple them as follows.

  Choose a sequence $(z_n)_{n=1}^{N+1}$
  such that $z_1 = x_1$, $z_{N+1} = x_N$,
  and $\int_{z_n}^{z_{n+1}} \rho(t)\ dt = \frac 1N$.
  Since $\int_{x_k}^{x_{k+1}} \rho(t)\ dt = \frac{1}{N-1}$,
  we can bound $x_n \leq z_{n+1} \leq x_{n+1}$.
  Given $Y \sim R$,
  choose $X = x_{n^*(Y)}$,
  where $n^*(t) = \#\{z_k \leq t\}$.
  Then, $X \sim Q$,
  and $|Y - X| \leq x_{n^*(Y) + 1} - x_{n^*(Y)}$.

  Then, in terms of the coupled expectation $\Exp_{Q, R}$, 
  \[
    \sup_{h \in \calL} |\Exp_R[h] - \Exp_Q[h]|
    \leq \Exp_{Q,R} |Y - X|
    \leq \sum_{n=1}^{N-1} \frac{x_{n+1} - x_n}{N-1}
    = \frac{x_N - x_1}{N-1}. \qedhere
  \]
\end{proof}

\subsection{Bound on $|\Exp_R[h]-\Exp_P[h]|$ in terms of $g_h$}
\label{converge-sect2}

We show a bound between the intermediate distribution $R$
and the continuous distribution $P=P_{a,b}$ restricted to an interval $(a,b)$, in terms of $g_h$.

\begin{proposition}\label{prop:wass}
  Let $f$ be a higher energy function
  strictly positive on the interval $(a,b)$, where $a$ is a root of $f$ or $a=-\infty$, and $b$ is the next root after $a$ or if there is no such root $b=\infty$.
  Let $(x_n)_{n=1}^N$ be part of an MIW sequence of $f$ contained in $(a,b)$.

  Then, for $\beta \in [0,1]$,
  and differentiable 1-Lipschitz $h\in \mathcal{L}$, we have
  \begin{align*}
    &|\Exp_R[h] - \Exp_P[h]|\\
    &\quad \leq \frac{1}{N-1} \Big( \Big|
      \frac{g_h(x_N)}{x_{N} - x_{N-1}}
      - \frac{g_h(x_1)}{x_{2} - x_{1}} + \beta\Big(\frac{f'}{f} g_h\Big)(x_N)
    + (1-\beta)\Big(\frac{f'}{f} g_h\Big)(x_1) \Big| \\
    &\quad \quad \quad + (x_N - x_1)\big(1 + \sup_{x \in (x_1, x_N)} |g_h''(x)|\big) \Big).
    \stepcounter{equation}\tag{\theequation}\label{eq:stein-bound}
  \end{align*}
\end{proposition}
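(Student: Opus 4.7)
The plan is to use the Stein equation \eqref{Steineqn} to rewrite
\[
\Exp_R[h] - \Exp_P[h] = \Exp_R\bigl[g_h'(X) + (f'/f)(X)\, g_h(X)\bigr],
\]
and to evaluate each of the two summands on the right by integrating against the piecewise-constant density $\rho$ of $R$. The goal is to show that two interior contributions cancel exactly, leaving only the boundary terms displayed in \eqref{eq:stein-bound} together with a controlled quadrature error.

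For the first summand, the piecewise structure of $\rho$ collapses each subinterval integral to a discrete difference,
\[
(N-1)\Exp_R[g_h'(X)] = \sum_{n=1}^{N-1}\frac{g_h(x_{n+1}) - g_h(x_n)}{x_{n+1}-x_n}.
\]
An Abel reindexing (substituting $m = n+1$ in the $g_h(x_{n+1})$ terms and recombining) expresses this as
\[
\frac{g_h(x_N)}{x_N - x_{N-1}} - \frac{g_h(x_1)}{x_2 - x_1} + \sum_{n=2}^{N-1} g_h(x_n)\Bigl(\frac{1}{x_n-x_{n-1}} - \frac{1}{x_{n+1}-x_n}\Bigr),
\]
and the MIW recursion \eqref{eq:hdw-seq} converts the interior parenthetical to $-(f'/f)(x_n)$. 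For the second summand, set $F(x) := (f'/f)(x)\, g_h(x)$ and approximate the average of $F$ on $[x_n, x_{n+1}]$ by $\beta F(x_{n+1}) + (1-\beta) F(x_n)$. Summing over $n$ produces $\beta F(x_N) + (1-\beta) F(x_1) + \sum_{n=2}^{N-1} F(x_n)$, and the interior sum $\sum_{n=2}^{N-1} (f'/f)(x_n)\, g_h(x_n)$ cancels exactly against the corresponding contribution extracted from the Abel summation above.

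To control the quadrature error, observe that for any $t \in [x_n, x_{n+1}]$, two applications of the fundamental theorem of calculus yield
\[
\bigl|F(t) - \beta F(x_{n+1}) - (1-\beta) F(x_n)\bigr| \leq \bigl[\beta(x_{n+1}-t) + (1-\beta)(t-x_n)\bigr]\sup_{(x_n,x_{n+1})}|F'|,
\]
which is at most $(x_{n+1}-x_n)\sup|F'|$. Averaging in $t$ and summing over $n$ gives a total error bounded by $(x_N-x_1)\sup_{(x_1,x_N)}|F'|$. Differentiating \eqref{Steineqn} yields $F'(x) = h'(x) - g_h''(x)$, and since $h \in \calL$ has $|h'| \leq 1$ we obtain $|F'| \leq 1 + \sup|g_h''|$. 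Assembling the surviving boundary terms, passing to absolute values, and dividing by $N-1$ produces the claimed bound. The main delicacy is the bookkeeping that ensures the two interior $(f'/f)g_h$ contributions cancel exactly; once that is observed, the error analysis reduces to a one-line consequence of the differentiated Stein equation and the Lipschitz bound on $h$.
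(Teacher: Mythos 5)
Your proof is correct and follows essentially the same route as the paper: take the $R$-expectation of the Stein equation, apply summation by parts to the $g_h'$ term, use the MIW recursion to produce interior $-(f'/f)g_h(x_n)$ contributions that cancel against the quadrature approximation of the $(f'/f)g_h$ integral, and control the residual error via $((f'/f)g_h)' = h' - g_h''$. The only cosmetic difference is that you bound $|F(t) - \beta F(x_{n+1}) - (1-\beta)F(x_n)|$ in a single step, whereas the paper splits the left- and right-endpoint comparisons and takes the convex combination afterward; the resulting estimate is identical.
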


\begin{proof}
  Taking the expectation of each side of Stein's equation \eqref{Steineqn},
  with respect to $R$,
  \begin{align*}
    \Exp_R[h] - \Exp_P[h]
   & =
   \int_a^b \rho(t) \big(g_h' + \frac{f'}{f} g_h\big)\ dt \\
   &= \sum_{n=1}^{N-1} \rho(x_n) \left( \int_{x_n}^{x_{n+1}} \frac{f'}{f} g_h\ dt
   + \int_{x_n}^{x_{n+1}} g_h'\ dt \right) \\
   &= \frac{1}{N-1} \Big( \sum_{n=1}^{N-1} \int_{x_n}^{x_{n+1}}
     \frac{\big(\frac{f'}{f} g_h\big)(t)}{x_{n+1} - x_{n}}\ dt
   + \sum_{n=1}^{N-1} \frac{g_h(x_{n+1}) - g_h(x_{n})}{x_{n+1} - x_{n}} \Big).
  \end{align*}

  By summation by parts, as $\frac{1}{x_{n+1} - x_n} - \frac{1}{x_n - x_{n-1}} = \frac{f'(x_n)}{f(x_n)}$ by the MIW relation \eqref{eq:hdw-seq},
  \begin{align}
   & \sum_{n=1}^{N-1} \frac{g_h(x_{n+1}) - g_h(x_{n})}{x_{n+1} - x_{n}}
   = \frac{g_h(x_N)}{x_{N} - x_{N-1}}
   - \frac{g_h(x_1)}{x_{2} - x_{1}}
   - \sum_{n=2}^{N-1} \left(\frac{f'}{f} g_h\right)(x_n) \nonumber \\
   &\ \ \ = \frac{g_h(x_N)}{x_{N} - x_{N-1}}
   - \frac{g_h(x_1)}{x_{2} - x_{1}}
   + \left(\frac{f'}{f} g_h\right)(x_1)
   - \sum_{n=1}^{N-1} \int_{x_n}^{x_{n+1}}
   \frac{\left(\frac{f'}{f} g_h\right)(x_n)}{x_{n+1} - x_n}\ dt. \label{eq:convex}
  \end{align}
  
   Instead of focusing on the left endpoint in \eqref{eq:convex},
  we can work with the right endpoint:
  \begin{align*}
    \big(\frac{f'}{f} g_h\big)(x_1)
    - \sum_{n=1}^{N-1} \int_{x_n}^{x_{n+1}}
    \frac{\left(\frac{f'}{f} g_h\right)(x_n)}{x_{n+1} - x_n}\ dt
    = \left(\frac{f'}{f} g_h\right)(x_N)
    - \sum_{n=1}^{N-1} \int_{x_n}^{x_{n+1}}
    \frac{\left(\frac{f'}{f} g_h\right)(x_{n+1})}{x_{n+1} - x_n}\ dt.
  \end{align*}

  We may take a convex combination of the these expressions in terms of $\beta\in [0,1]$, and stick back into \eqref{eq:convex}.
 Thus, 
  \begin{align*}
&    (N-1)|\Exp_{R}[h] - \Exp_P[h]|\\
&\leq \big|\frac{g_h(x_N)}{x_{N} - x_{N-1}}
- \frac{g_h(x_1)}{x_{2} - x_{1}}
+\beta \big(\frac{f'}{f} g_h\big)(x_N)+ (1-\beta)\big(\frac{f'}{f} g_h\big)(x_1)\big| \\
&\ \ \ \ + \beta \sum_{n=1}^{N-1} \int_{x_n}^{x_{n+1}} \frac{1}{|{x_{n+1} - x_n}|}
\big|\big(\frac{f'}{f} g_h\big)(t)
- \big(\frac{f'}{f} g_h\big)(x_{n+1})
\big|\ dt \\
&\ \ \ \ + (1-\beta)\sum_{n=1}^{N-1} \int_{x_n}^{x_{n+1}} \frac{1}{|{x_{n+1} - x_n}|}
\big|\big(\frac{f'}{f} g_h\big)(t)
- \big(\frac{f'}{f} g_h\big)(x_n)
\big|\ dt .
\end{align*}

  Define
  \begin{align*}
  K_h:=  \sup_{x, y \in (x_1, x_N)}\frac{1}{|y-x|}
    \big|\big(\frac{f'}{f} g_h\big)(y)
    - \big(\frac{f'}{f} g_h\big)(x)\big|.
  \end{align*}
  We can compute, as $|x_{n+1}-x_n| = x_{n+1}-x_n$ as $(x_n)_{n=1}^N$ is increasing, that
  \begin{align*}
    \sum_{n=1}^{N-1} \sup_{s,t \in [x_n, x_{n+1}]}
    \big|\big(\frac{f'}{f} g_h\big)(t) - \big(\frac{f'}{f} g_h\big)(s)\big|
    \leq \sum_{n=1}^{N-1} K_h (x_{n+1} - x_{n})
    = K_h(x_N - x_1).
  \end{align*}

  Also, noting Stein's equation \eqref{Steineqn}, and $h$ is differentiable and $1$-Lipschitz,
  \begin{align*}
    K_h
    &\leq \sup_{x \in (x_1, x_N)} \big|\big(\frac{f'}{f} g_h\big)'\big|
    = \sup_{x \in (x_1, x_N)} |(h - \Exp_P[h] - g_h')'| \\
    &\leq \sup_{x \in (x_1, x_N)} |h'(x)| + |g_h''(x)|
    \leq 1 + \sup_{x \in (x_1, x_N)} |g_h''(x)|.
  \end{align*}

 Putting these estimates together
  yields \eqref{eq:stein-bound}.
\end{proof}

\subsection{Bounds of 
$\frac{f'(x_1)}{f(x_1)} g_h(x_1)$, $\frac{g_h(x_1)}{x_2-x_1}$ and $g''_h(x)$ near finite zeros of $f$}

Let $f$ be a higher energy function,
strictly positive on $(a,b)$, where $-\infty< a < b \leq \infty$.
Suppose $f(a)=0$, and $h$ is a $1$-Lipschitz differentiable function.

We begin with asymptotics near $x=a$, which will be helpful.
\begin{lemma}\label{lem:6.10}

  We have
  \begin{align*}
    g_h(x) &= \frac{1}{3} (x-a) (h(x) - \Exp_P[h]) + \calO((x-a)^2),\\
    \frac{f'(x)}{f(x)} g_h(x)
           &= \frac{2}{3} (h(x) - \Exp_P[h]) + \calO((x-a)).
  \end{align*}
  Here,
  $\calO((x-a))$ and $\calO((x-a)^2)$ do not depend on $h$.
\end{lemma}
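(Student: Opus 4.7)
The plan is to exploit that $a$ is a simple root of the Hermite polynomial $p_\ell$, so that $f(x) = c\, p_\ell(x)^2 e^{-x^2/2}$ has a double zero at $a$. Setting $C := c\, p_\ell'(a)^2 e^{-a^2/2} > 0$, the Taylor expansions
\begin{align*}
f(x) &= C(x-a)^2 + \calO((x-a)^3),\\
\frac{f'(x)}{f(x)} &= \frac{2}{x-a} + \calO(1), \\
\frac{1}{f(x)} &= \frac{1}{C(x-a)^2}\bigl(1 + \calO((x-a))\bigr)
\end{align*}
will hold as $x \downarrow a$, with implicit constants depending only on $f$.

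Writing $H(t) := h(t) - \Exp_P[h]$, I would then split $H(t) = H(x) + (H(t) - H(x))$ inside the integral $\int_a^x f(t) H(t)\, dt$ defining $g_h(x)$. The main contribution is
\[
H(x) \int_a^x f(t)\,dt = \frac{C(x-a)^3}{3}\,H(x) + |H(x)|\,\calO((x-a)^4),
\]
while the remainder satisfies
\[
\Big|\int_a^x f(t)(H(t) - H(x))\,dt\Big| \leq (x-a)\int_a^x f(t)\,dt = \calO((x-a)^4),
\]
by the $1$-Lipschitz bound $|H(t) - H(x)| \leq |t - x| \leq x-a$ on $[a,x]$, with no $h$-dependence in the implicit constant. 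Dividing by $f(x)$ using the expansion above yields
\[
g_h(x) = \frac{x-a}{3}\,H(x) + |H(x)|\cdot \calO((x-a)^2) + \calO((x-a)^2).
\]

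To remove the residual $h$-dependence, I will invoke Lemma \ref{lem:bd-h}, which provides $|H(x)| \leq |x| + \Exp_P[|X|]$, a bound uniform in $1$-Lipschitz $h$ and uniformly bounded on any fixed neighborhood of $a$. This absorbs the $|H(x)|\cdot \calO((x-a)^2)$ piece into a clean $\calO((x-a)^2)$ whose implicit constants depend only on $f$, giving the first identity. The second identity will follow by multiplying the expansion of $g_h$ by $f'(x)/f(x) = \frac{2}{x-a} + \calO(1)$: the product of leading terms gives $\frac{2}{3} H(x)$, and the cross terms reduce to $\calO((x-a))$ after again applying the $h$-uniform bound on $|H(x)|$.

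The main subtlety — and the reason Lemma \ref{lem:bd-h} is the workhorse here — is the claim that the error terms do not depend on $h$: several natural error pieces inherit factors of $H(x)$, which a priori depend on $h$. The $h$-free estimate $|h(x) - \Exp_P[h]| \leq |x| + \Exp_P[|X|]$ is exactly what allows these pieces to be absorbed into a $\calO$ whose implicit constants depend only on $f$.
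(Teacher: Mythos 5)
Your proof is correct. The one genuine difference from the paper's argument is how you exploit the Lipschitz structure of $h$: the paper integrates by parts, rewriting $\int_a^x f(t)(h(t)-\Exp_P[h])\,dt = F(x)(h(x)-\Exp_P[h]) - \int_a^x F(t)h'(t)\,dt$ with $F(x)=\int_a^x f$, so that the $h$-dependence of the error appears only through $h'$, which is uniformly bounded by $1$. You instead keep the integral as is and split $H(t) = H(x) + (H(t)-H(x))$, bounding the remainder by $|H(t)-H(x)|\le |t-x|\le x-a$ directly. These are two routes to the same endpoint: both isolate a main term $H(x)\int_a^x f$ (or its integrated-by-parts twin $F(x)H(x)$) of size $\frac{1}{3}C(x-a)^3 H(x)$ and an $h$-uniform $\calO((x-a)^4)$ error, then divide by $f(x)$. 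Your variant is marginally more elementary since it avoids introducing $F$; the paper's variant makes the boundary term $F(a)(h(a)-\Exp_P[h])=0$ vanish cleanly and packages the Lipschitz input purely as $h'=\calO(1)$. Both correctly identify Lemma \ref{lem:bd-h} as the crux for removing the residual $h$-dependence from the $|H(x)|\cdot\calO((x-a)^2)$ term, and both derive the second displayed estimate by multiplying the first by $f'(x)/f(x)=\frac{2}{x-a}+\calO(1)$.
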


\begin{proof}
  Denote $F(x) = \int_a^x f(t)\ dt$.  
  Integrating by parts,
  \begin{align*}
    \int_a^x f(t) (h(t) - \Exp_P[h])\ dt
    &= F(x)(h(x) - \Exp_P[h]) - F(a)(h(a) - \Exp_P[h])
    - \int_a^x F(t) h'(t) dt \\
    &= F(x)(h(x) - \Exp_P[h]) - \int_a^x F(t) h'(t) dt.
  \end{align*}
  Using $f(x) = \frac{1}{2}f''(a)(x-a)^2 + \calO((x-a)^3)$ for $x$ near $a$ (cf. Lemma~\ref{lem:dbl-zro}), which implies $F(x) = \frac{1}{6}f''(a)(x-a)^3 + \calO((x-a)^4)$, 
  $h'(x) = \calO(1)$, and $h(x) - \Exp_P[h] = \calO(x-2a + \Exp_P[X])= \calO(1)$ (Lemma \ref{lem:bd-h}), we have
  \begin{align*}
    &F(x)(h(x) - \Exp_P[h]) - \int_a^x F(t) h'(t) dt\\
    &\ \ = \frac{f''(0)}{6}(x-a)^{3}(h(x) - \Exp_P[h]) + \calO((x-a)^{4}) + \calO((x-a)^4),
  \end{align*}
  where $\calO(\cdot)$ does not depend on $h$.

  The first estimate now follows, noting again $h(x) - \Exp_P[X] = \calO(1)$ for $x$ near $a$, as
  \begin{align*}
    g_h(x)
    &= \frac{\int_a^x f(t) (h(t) - \Exp_P[h])\ dt}{f(x)}
    = \frac{\frac{f''(a)}{6}(x-a)^{3}(h(x) - \Exp_P[h]) + \calO((x-a)^{4})}
    {\frac{1}{2}f''(a)(x-a)^2 + \calO((x-a)^{3})} \\
    &= \frac{\frac{1}{3}(x-a)(h(x) - \Exp_P[h]) + \calO((x-a)^2)}
    {1 + \calO((x-a))}
    = \frac{1}{3} (x-a) (h(x) - \Exp_P[h]) + \calO((x-a)^2),
  \end{align*}
  where $\calO(\cdot)$ does not depend on $h$.

  The second estimate holds, noting the asymptotics of $f'(x)/f(x)$ near $a$ in Lemma \ref{lem:3.5}, the first estimate of $g_h(x)$, and again $h(x) - \Exp_P[X] = \calO(1)$:
  \begin{align*}
    \frac{f'(x)}{f(x)} g_h(x)
    &= \left(\frac{2}{x-a} + \calO(1)\right)
    \left(\frac{1}{3} (x-a) (h(x) - \Exp_P[h]) + \calO((x-a)^2)\right) \\
    &= \frac{2}{3} (h(x) - \Exp_P[h]) + \calO((x-a)),
  \end{align*}
  where $\calO(\cdot)$ does not depend on $h$.
\end{proof}

With respect to a zero $a$ of $f$, we know by Theorem \ref{thm:gaps} that the first MIW sequence element $x_1$ to the right of $a$ satisfies $x_1 \rightarrow a$ as $N\uparrow\infty$.

\begin{lemma}\label{lem:6.16}
  Recall the setting of Proposition \ref{prop:wass}.  Suppose $a$ is finite.
  Then,
  \begin{align*}
    &\limsup_{N\uparrow\infty}\sup_{h\in \mathcal{L}} |\frac{f'(x_1)}{f(x_1)}g_h(x_1)| \leq \frac{2}{3}(\Exp_P[X]-a),\\
    &\limsup_{N\uparrow\infty}\sup_{h\in \mathcal{L}} \frac{|g_h(x_1)|}{x_{2} - x_{1}} \leq \frac{5}{3}(\Exp_P[X]-a).
  \end{align*}
\end{lemma}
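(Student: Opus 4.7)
The proof plan is based on passing to the limit $N \to \infty$, where by Theorem \ref{thm:gaps} we have $x_1 \to a$ (since $a$ is a finite zero of $f$ and the number of points of the MIW subsequence in $(a,b)$ diverges). The main ingredients are the near-zero asymptotic expansions of $g_h$ and $(f'/f)g_h$ provided by Lemma \ref{lem:6.10} (with error terms uniform in $h\in\mathcal{L}$), and the Lipschitz bound $|h(x_1) - \Exp_P[h]| \leq \Exp_P[X] + x_1 - 2a$ from Lemma \ref{lem:bd-h}, which tends to $\Exp_P[X] - a$ as $x_1 \to a$.

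The first estimate is immediate from Lemma \ref{lem:6.10}:
\[
\left|\frac{f'(x_1)}{f(x_1)} g_h(x_1)\right|
\leq \tfrac{2}{3}\,|h(x_1) - \Exp_P[h]| + \calO(x_1 - a),
\]
uniformly in $h$. Applying Lemma \ref{lem:bd-h} and taking $\limsup_{N \to \infty}$ followed by $\sup_{h \in \mathcal{L}}$ yields the claimed $\tfrac{2}{3}(\Exp_P[X] - a)$.

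For the second estimate, the crucial difficulty is that the denominator $x_2 - x_1$ vanishes as $N\to\infty$. The plan is to exploit the MIW recursion at the boundary index $n=1$: since $(x_n)_{n=1}^N$ is the portion of the full MIW sequence lying in $(a,b)$ and $a$ is finite, the element $x_1$ has a left neighbor $x_0$ inherited from the previous strictly-positive region of $f$, satisfying $x_0 < a < x_1$. Then \eqref{eq:hdw-seq} gives
\[
\frac{1}{x_2 - x_1} = \frac{1}{x_1 - x_0} + \frac{f'(x_1)}{f(x_1)},
\]
so that
\[
\frac{g_h(x_1)}{x_2 - x_1}
= \frac{g_h(x_1)}{x_1 - x_0} + \frac{f'(x_1)}{f(x_1)}\, g_h(x_1).
\]
The second summand is bounded exactly as in the first estimate, contributing $\tfrac{2}{3}(\Exp_P[X]-a)$ in the limit. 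For the first summand, combine $|g_h(x_1)| \leq \tfrac{1}{3}(x_1-a)|h(x_1)-\Exp_P[h]| + \calO((x_1-a)^2)$ from Lemma \ref{lem:6.10} with the crude inequality $1/(x_1 - x_0) \leq 1/(x_1 - a)$ (which holds since $x_0 \leq a$), and control $|h(x_1)-\Exp_P[h]|$ via Lemma \ref{lem:bd-h}; in the limit this contributes at most $(\Exp_P[X]-a)$. Summing the two limiting contributions produces the stated bound of $\tfrac{5}{3}(\Exp_P[X]-a)$.

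The main obstacle is the singular factor $1/(x_2-x_1)$ as $x_1 \to a$. The MIW recursion at $n=1$, which requires correctly identifying $x_1$'s left neighbor in the full MIW sequence (a point in the adjacent region, which exists because $a$ is finite), is exactly what converts this singular ratio into two tractable pieces: one controlled by the clean asymptotic $(f'/f)g_h = \tfrac{2}{3}(h - \Exp_P[h]) + \calO(x-a)$, and another handled by balancing the linear vanishing of $g_h$ near $a$ against the trivial estimate $1/(x_1-x_0) \leq 1/(x_1-a)$.
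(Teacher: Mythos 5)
Your argument follows the same path as the paper's: the first bound comes directly from Lemma \ref{lem:6.10} together with Lemma \ref{lem:bd-h}, and for the second bound you split $g_h(x_1)/(x_2-x_1)$ via the left boundary MIW relation with $x_0\leq a$, handling the $(f'/f)g_h$ piece by the first bound. The only divergence is the last step, bounding $|g_h(x_1)|/(x_1-a)$: the paper estimates this directly from the integral definition of $g_h$, using that $\sup_{s\in(a,x_1)} f(s)/f(x_1)\to 1$ as $x_1 \to a$, whereas you reuse the expansion $g_h(x_1)=\tfrac{1}{3}(x_1-a)(h(x_1)-\Exp_P[h])+\calO((x_1-a)^2)$ from Lemma \ref{lem:6.10}, which is a slightly leaner route since it avoids a fresh integral estimate. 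One bookkeeping slip: if you carry the factor $\tfrac{1}{3}$ through, that summand contributes $\tfrac{1}{3}(\Exp_P[X]-a)$ in the limit, not $\Exp_P[X]-a$ as you wrote, so your reasoning in fact yields the sharper total $(\Exp_P[X]-a)$ for the second limsup; reporting $\tfrac{5}{3}(\Exp_P[X]-a)$ is thus correct but weaker than what your steps actually establish, and this is not a gap.
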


\begin{proof}
  By Lemma~\ref{lem:6.10} and Lemma \ref{lem:bd-h}, we bound
  $\limsup_{x \to a} \sup_{h\in \mathcal{L}}\left|\frac{f'(x)}{f(x)}g_h(x)\right|
  \leq \sup_{h\in \mathcal{L}}\frac{2}{3}|h(a) - \Exp_P h| \leq \frac{2}{3}(\Exp_P[X] - a).$
  The first limit now follows as a consequence as $x_1\rightarrow a$ as $N\uparrow\infty$.

  Next, recall the MIW boundary condition at $x_0\leq a$:
  $\frac{1}{x_2-x_1} - \frac{1}{x_1-a} \leq \frac{1}{x_2-x_1} - \frac{1}{x_1-x_0} = \frac{f'(x_1)}{f(x_1)}$.
  Then,
  \begin{align*}
    \frac{|g_h(x_1)|}{x_2 - x_1}
    \leq \left(\frac{f'(x_1)}{f(x_1)} + \frac{1}{x_1 - a}\right) |g_h(x_1)|
    \leq \left|\frac{f'(x_1)}{f(x_1)} g_h(x_1)\right|
    + \frac{|g_h(x_1)|}{x_1 - a}.
  \end{align*}

  By the definition of $g_h$ and bound in Lemma \ref{lem:bd-h}, we have
  \begin{align*}
    \frac{|g_h(x_1)|}{x_1 - a}
    &= \frac{1}{(x_1 - a) f(x_1)} \left|\int_a^{x_1} f(t) (h(t) - \Exp_P h)\ dt\right| \\
    &\leq \frac{1}{(x_1 - a) f(x_1)}
    \int_a^{x_1} \sup_{s \in (a,x_1)} f(s) |h(s) - \Exp_P h|\ dt \\
    &= \sup_{s \in (a, x_1)} \frac{f(s)}{f(x_1)} |h(s) - \Exp_P h| \leq \sup_{s \in (a, x_1)} \frac{f(s)}{f(x_1)} (s + \Exp_P[X] - 2a).
  \end{align*}
  Observe as $f$ has a root of order exactly $2$ at $a$ that $\sup_{s\in (a, x_1)}f(s)/f(x_1) \rightarrow 1$ as $N\uparrow\infty$.
  Consequently, uniformly over $h\in \mathcal{L}$, $\limsup_{N\uparrow\infty} \frac{|g_h(x_1)|}{x_1 - a} \leq \Exp_P[X] - a$.

  Therefore, adding to the estimate from the first proven limit, we have
  \[
    \lim_{x_1 \to a} \sup_{h\in \mathcal{L}}\frac{|g_h(x_1)|}{x_2 - x_1}
    \leq \frac{5}{3}(\Exp_P[X]-a).
  \qedhere \]
\end{proof}

We now bound $|g_h''(x)|$ near the zero $a$ of $f$.

\begin{lemma}\label{lem:6.14}
  Recall the setting of Proposition \ref{prop:wass}.  Suppose $a$ is finite.

  Then, $\sup_{h\in \mathcal{L}}|g_h''(x)| = \calO(1)$ for $x$ near $a$.
\end{lemma}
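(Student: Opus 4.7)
The plan is to start from the identity (\ref{eq:gh-split}) and show that, although each of its four summands individually blows up as $x \downarrow a$ (because $a$ is a double zero of $f$ by Lemma \ref{lem:dbl-zro}), their singular parts cancel. The ingredients are: (i) a Taylor expansion of $f$, $f'$, $f''$ at $a$ to the order needed to extract $1/(x-a)$ and $1/(x-a)^2$ singularities; and (ii) the expansion of $g_h$ from Lemma \ref{lem:6.10}, whose remainder is uniform in $h$.

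First, since $f(a) = f'(a) = 0$ and $f''(a) \neq 0$, writing $f(x) = \frac{f''(a)}{2}(x-a)^2 + \frac{f'''(a)}{6}(x-a)^3 + \calO((x-a)^4)$ and dividing gives
\[
\frac{f'(x)}{f(x)} = \frac{2}{x-a} + \frac{f'''(a)}{3 f''(a)} + \calO(x-a), \qquad \frac{f''(x)}{f(x)} = \frac{2}{(x-a)^2} + \frac{4 f'''(a)}{3 f''(a)\,(x-a)} + \calO(1).
\]
Squaring and combining yields
\[
2\left(\frac{f'(x)}{f(x)}\right)^2 - \frac{f''(x)}{f(x)} \;=\; \frac{6}{(x-a)^2} + \frac{4 f'''(a)}{3 f''(a)\,(x-a)} + \calO(1).
\]

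Next, invoking Lemma \ref{lem:6.10}, the product
\[
\Big[2(f'/f)^2 - f''/f\Big]\, g_h(x) \;=\; \Big[\tfrac{6}{(x-a)^2} + \tfrac{4 f'''(a)}{3 f''(a)(x-a)} + \calO(1)\Big]\Big[\tfrac{1}{3}(x-a)(h(x) - \Exp_P[h]) + \calO((x-a)^2)\Big]
\]
equals $\dfrac{2(h(x) - \Exp_P[h])}{x-a} + \dfrac{4 f'''(a)}{9 f''(a)}(h(x) - \Exp_P[h]) + \calO(1)$. The leading $\frac{2}{x-a}(h - \Exp_P[h])$ term cancels exactly the contribution
\[
-\,\frac{f'(x)}{f(x)}\,(h(x) - \Exp_P[h]) \;=\; -\,\frac{2(h(x) - \Exp_P[h])}{x-a} + \calO(1),
\]
obtained from the expansion of $f'/f$ after multiplying by $h(x) - \Exp_P[h] = \calO(1)$ (Lemma \ref{lem:bd-h}, using $a$ finite). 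Adding the remaining $h'(x) = \calO(1)$ from (\ref{eq:gh-split}) leaves $g_h''(x) = \calO(1)$ as $x \to a$.

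Uniformity in $h \in \calL$ is essentially automatic: $h'(x)$ is bounded by $1$ since $h$ is $1$-Lipschitz; $h(x) - \Exp_P[h]$ is uniformly bounded on a fixed neighborhood of $a$ by Lemma \ref{lem:bd-h}; the Taylor remainders in the expansions of $f'/f$, $f''/f$ depend only on $f$; and the $\calO((x-a)^2)$ remainder in $g_h$ is itself independent of $h$, as noted in Lemma \ref{lem:6.10}. The main obstacle I expect is simply the bookkeeping: one must retain the sub-leading $f'''(a)/f''(a)$ corrections long enough to confirm that not only the $1/(x-a)^2$ terms cancel but also the $1/(x-a)$ terms, so that what remains is genuinely bounded rather than just integrably singular. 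The actual numerical values of the $\calO(1)$ coefficients that survive are immaterial; what matters is that their dependence on $h$ enters only through $h(x) - \Exp_P[h]$ and $h'(x)$, both of which are uniformly bounded on $\calL$.
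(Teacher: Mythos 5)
Your proof is correct and follows essentially the same route as the paper's: Taylor-expand $f$, $f'$, $f''$ at the double zero $a$, use the expansion of $g_h$ from Lemma \ref{lem:6.10}, and observe that the $1/(x-a)$ singularities cancel (the paper obtains coefficients $\tfrac{8}{3} - \tfrac{2}{3} - 2 = 0$; your grouping $2(f'/f)^2 - f''/f$ gives coefficient $2$ which is offset by $-2$ from the $-\,f'/f\cdot(h - \Exp_P[h])$ term). One small remark: tracking the $f'''(a)/f''(a)$ sub-leading coefficients was not actually necessary, since after multiplication by $g_h = \calO(x-a)$ they contribute only at $\calO(1)$ and need not cancel — so your worry that the $1/(x-a)$ ``corrections'' must also vanish was unfounded; only the leading $1/(x-a)$ poles need to cancel, which you verified.
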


\begin{proof}
  Taking each part of the expression for $g''_h$ in \eqref{eq:gh-split} in turn,
  noting Lemmas~\ref{lem:3.5}, \ref{lem:bd-h}, and~\ref{lem:6.10},
  \begin{align*}
    2\left(\frac{f'(x)}{f(x)}\right)^2 g_h(x)
    &= 2 \left(\frac {2}{x-a} + \calO(1)\right)
    \left(\frac{2}{3}(h(x) - \Exp_P[h]) + \calO((x-a))\right) \\
    &= \frac{8}{3} (x-a)^{-1} (h(x) - \Exp_P[h]) + \calO(1),
  \end{align*}

  \begin{align*}
    -\frac{f''(x)}{f(x)}g_h(x)
    &= -\frac{f''(a) + \calO((x-a))}{\frac{1}{2}f''(a)(x-a)^2 + \calO((x-a)^{3})}
    \left(\frac{1}{3} (x-a) (h(x) - \Exp_P[h]) + \calO((x-a)^2)\right) \\
    &= -(2(x-a)^{-2} + \calO((x-a)^{-1}))
    \left(\frac{1}{3} (x-a) (h(x) - \Exp_P[h]) + \calO((x-a)^2)\right) \\
    &= -\frac{2}{3} (x-a)^{-1} (h(x) - \Exp_P[h]) + \calO(1),
  \end{align*}
  and

  \begin{align*}
    -\frac{f'(x)}{f(x)} (h(x) - \Exp_P[h])
    &= -(2(x-a)^{-1} + \calO(1))(h(x) - \Exp_P[h]) \\
    &= -2(x-a)^{-1}(h(x) - \Exp_P[h]) + \calO(1).
  \end{align*}

  Then, the sum of these terms
  \begin{align*}
    g''_h(x)
    &= \frac{8}{3} (x-a)^{-1} (h(x) - \Exp_P[h]) + \calO(1) \\
    &\qquad -\frac{2}{3} (x-a)^{-1} (h(x) - \Exp_P[h]) + \calO(1) \\
    &\qquad -2(x-a)^{-1}(h(x) - \Exp_P[h]) + \calO(1) + h'(x) \ = \ \calO(1).
  \end{align*}
  The terms $\calO(\cdot)$ do not depend on $h$ and $h'(x) = \calO(1)$ for $h\in \mathcal{L}$.  Hence, 
  we can bound $|g_h''(x)| = \calO(1)$, uniformly in $h\in \mathcal{L}$, for $x$ near $a$.
\end{proof}

\begin{remark}
  \label{rmk:gh}\rm
  We note that analogous results in Lemmas \ref{lem:6.16} and \ref{lem:6.14} hold on $(a,b)$ where $-\infty\leq a<b<\infty$ with respect to finite right endpoints $b$ which are zeros of $f$, by reordering the sequence right to left or say by considering $f(b-x)$.  Recall the setting of Proposition \ref{prop:wass}.  We have
  \begin{align*}
    \limsup_{N\uparrow\infty}\sup_{h\in \mathcal{L}}|\frac{f'(x_N)}{f(x_N)} g_h(x_N)| &\leq \frac{2}{3}(b - \Exp_P[X]),\\
    \limsup_{N\uparrow\infty}\sup_{h\in \mathcal{L}}\frac{|g_h(x_N)|}{x_N-x_{N-1}} & \leq \frac{5}{3}(b-\Exp_P[X]),\\
    \sup_{h\in \mathcal{L}}|g''_h(x)| &= \calO(1) \ \ {\rm for \ } x\sim b.
  \end{align*}
\end{remark}

\subsection{Bound of $g''_h$ at infinity}

In this section, let $f$ be a higher energy function strictly positive on $(a, \infty)$,
where $a$ is a root of $f$ or $a=-\infty$.

To bound $g''_h$, it will be helpful to bound $|g_h|$ and $|x^2 g_h(x) + x(h(x)- \Exp_P[h])|$ in the next lemmas.

\begin{lemma}\label{lem:bdd-gh}
  We have
  \begin{align*}
    \limsup_{x \to \infty} \sup_{h\in \mathcal{L}}|g_h(x)| < \infty.
  \end{align*}
\end{lemma}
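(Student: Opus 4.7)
The plan is to use the fact that $P$ is a probability distribution on $(a,\infty)$ with density proportional to $f$, so $\int_a^\infty f(t)(h(t) - \Exp_P[h])\,dt = 0$. Consequently,
\[
g_h(x)f(x) \ = \ -\int_x^\infty f(t)(h(t)-\Exp_P[h])\,dt,
\]
converting the integral defining $g_h(x)$ over the (possibly large) interval $[a,x]$ into a Gaussian-type tail integral over $[x,\infty)$. By Lemma~\ref{lem:bd-h}, uniformly over $h\in\mathcal{L}$, $|h(t)-\Exp_P[h]|\leq t + \Exp_P[|X|]$ for $t>0$, so for $x$ sufficiently large,
\[
|g_h(x)| \ \leq \ \frac{1}{f(x)}\int_x^\infty f(t)\bigl(t + \Exp_P[|X|]\bigr)\,dt,
\]
with the right-hand side independent of $h$.

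Next, I would invoke the asymptotics of $f$. Since $p_\ell$ is a polynomial of degree $\ell$, we have $f(t) \sim c\, t^{2\ell} e^{-t^2/2}$ as $t\to\infty$. By standard Gaussian tail estimates (e.g.\ via the substitution $u = t^2/2$ together with the incomplete-Gamma asymptotic $\Gamma(s,y)\sim y^{s-1}e^{-y}$ as $y\to\infty$), for each $k\geq 0$,
\[
\int_x^\infty t^{k} e^{-t^2/2}\,dt \ \sim \ x^{k-1} e^{-x^2/2}, \qquad x\to\infty.
\]
Applying this with $k=2\ell+1$ gives $\int_x^\infty f(t)\,t\,dt \sim f(x)$, and with $k=2\ell$ gives $\int_x^\infty f(t)\,dt = \calO(f(x)/x)$. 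Combining these estimates, the upper bound on $|g_h(x)|$ tends to a finite constant (roughly $1$) as $x\to\infty$, which yields the claim.

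The argument has no real obstacle; the only care needed is ensuring the two tail ratios $\int_x^\infty f(t)t\,dt / f(x)$ and $\int_x^\infty f(t)\,dt / f(x)$ are controlled uniformly for large $x$, which is handled by the classical Gaussian asymptotic above once one notes that $f$ is strictly positive throughout $(a,\infty)$ (so $p_\ell(t)^2$ does not vanish for $t>a$ and the polynomial-times-Gaussian asymptotic governs $f$ cleanly near $+\infty$). The uniformity in $h$ is built in via Lemma~\ref{lem:bd-h}, which replaces $|h(t)-\Exp_P[h]|$ by a linear function of $t$ independent of $h\in\mathcal{L}$.
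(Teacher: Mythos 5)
Your proposal is correct and follows essentially the same route as the paper: convert $g_h(x)f(x)$ to a tail integral $-\int_x^\infty f(t)(h(t)-\Exp_P[h])\,dt$ using $\int_a^\infty f(t)(h(t)-\Exp_P[h])\,dt = 0$, bound $|h(t)-\Exp_P[h]|$ linearly in $t$ and uniformly in $h$ via Lemma~\ref{lem:bd-h}, and control the resulting polynomial-times-Gaussian tail integrals via the standard tail asymptotic (which the paper derives by integration by parts rather than citing the incomplete-Gamma form).
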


\begin{proof}
  Recall that $P=P_{a,\infty}$ is the distribution with density proportional to $f$ on $(a,\infty)$.
  Since $\int_a^\infty (h(x) - \Exp_P[h])f(x)dx =0$, and $f(t)/f(x)=p^2_\ell(t)e^{-\frac{1}{2}t^2}/p^2_\ell(x)e^{-\frac{1}{2}x^2}$,
  we may write
  \begin{align}
    \label{ghhelp}
    p_\ell(x)^2 e^{-\frac 12 x^2} g_h(x)
    = -\int_x^\infty (h(t) - \Exp_P[h]) p_\ell(t)^2 e^{-\frac 12 t^2}\ dt.
  \end{align}

  By Lemma \ref{lem:bd-h}, $|h(t) - \Exp_P[h]|\leq |t| + \Exp_P[|X|] \leq ct$ for large $t$.
  Since $p_\ell$ is an $\ell$ degree polynomial, we have also
  $c^{-1}t^\ell \leq p_\ell(t)\leq ct^\ell$ for large $t$.
  Moreover, by integration-by-parts, we have
  $\int_x^\infty t^{2\ell} te^{-\frac{1}{2}t^2}dt = x^{2\ell}e^{-\frac{1}{2}x^2} + \calO(x^{2\ell -1})$.
  Hence,
  $$g_h(x)
  \leq c^5 + \calO(x^{-1}),$$
  where $\calO(\cdot)$ does not depend on $h$.  The result follows. 
\end{proof}

\begin{lemma}\label{lem:bdd-snd}
  We have
  \begin{align*}
    \limsup_{x\rightarrow\infty}\sup_{h\in \mathcal{L}} |x^2 g_h(x) + x(h(x)-\Exp_P[h])| <\infty.
  \end{align*}
\end{lemma}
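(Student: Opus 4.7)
The plan is to extract the leading-order asymptotic of $g_h(x)$ as $x\to\infty$ via one integration by parts, and then observe that $x^2 g_h(x) + x(h(x)-\Exp_P[h])$ is exactly the residual after subtracting the dominant term. Starting from the identity
\[
p_\ell(x)^2 e^{-x^2/2}\, g_h(x) = -\int_x^\infty (h(t) - \Exp_P[h])\, p_\ell(t)^2 e^{-t^2/2}\,dt
\]
established in the proof of Lemma \ref{lem:bdd-gh}, I would use $t e^{-t^2/2} = -\frac{d}{dt} e^{-t^2/2}$ to rewrite the integrand as $-(h(t)-\Exp_P[h])(p_\ell(t)^2/t)\frac{d}{dt}e^{-t^2/2}$ and integrate by parts in $t$.

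The boundary term at $t = \infty$ vanishes because the super-exponential factor $e^{-t^2/2}$ dominates the polynomial factor $(h(t)-\Exp_P[h])p_\ell(t)^2/t$, using $|h-\Exp_P[h]| \leq t + \Exp_P[|X|]$ from Lemma \ref{lem:bd-h} and the asymptotics $p_\ell(t)^2 = O(t^{2\ell})$. The boundary term at $t = x$ gives $-(h(x)-\Exp_P[h]) p_\ell(x)^2 e^{-x^2/2}/x$, so after dividing through by $p_\ell(x)^2 e^{-x^2/2}$ the decomposition
\[
g_h(x) = -\frac{h(x) - \Exp_P[h]}{x} - \frac{J(x)}{p_\ell(x)^2 e^{-x^2/2}}
\]
emerges, where $J(x) := \int_x^\infty e^{-t^2/2}\,\frac{d}{dt}\bigl[(h(t)-\Exp_P[h])\, p_\ell(t)^2/t\bigr]\,dt$. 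Multiplying by $x^2$ then produces the identity $x^2 g_h(x) + x(h(x)-\Exp_P[h]) = -x^2 J(x)/[p_\ell(x)^2 e^{-x^2/2}]$, reducing the claim to showing the right-hand side is bounded as $x\to\infty$, uniformly in $h\in\mathcal{L}$.

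For this, I would estimate the integrand of $J$ crudely by
\[
\Bigl|\frac{d}{dt}\bigl[(h(t)-\Exp_P[h])\,p_\ell(t)^2/t\bigr]\Bigr|
\leq |h'(t)|\frac{p_\ell(t)^2}{t} + |h(t)-\Exp_P[h]|\,\Bigl|\frac{d}{dt}\frac{p_\ell(t)^2}{t}\Bigr|
\leq C t^{2\ell - 1}
\]
for $t$ sufficiently large, uniformly in $h\in\mathcal{L}$, using $|h'|\leq 1$, the linear bound on $|h-\Exp_P[h]|$ from Lemma \ref{lem:bd-h}, and the asymptotics $p_\ell(t)^2 \sim c t^{2\ell}$, $\tfrac{d}{dt}[p_\ell(t)^2/t] = O(t^{2\ell - 2})$. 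A single integration by parts against $t e^{-t^2/2}\,dt = d(-e^{-t^2/2})$ shows $\int_x^\infty t^{2\ell - 1} e^{-t^2/2}\,dt = O(x^{2\ell - 2} e^{-x^2/2})$, valid for all $\ell \geq 0$. Combined with $p_\ell(x)^2 \geq c' x^{2\ell}$ for $x$ large, this delivers $x^2 |J(x)|/[p_\ell(x)^2 e^{-x^2/2}] = O(1)$, as required.

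The only anticipated subtlety is the bookkeeping with the low-$\ell$ case: at $\ell = 0$ the factor $p_\ell(t)^2/t$ becomes $1/t$, so one must verify the tail integral $\int_x^\infty t^{-1} e^{-t^2/2}\,dt = O(x^{-2} e^{-x^2/2})$ still carries enough decay — which it does, by the same integration-by-parts step. Everything else is routine, and uniformity in $h$ is automatic throughout since every bound depends on $h$ only through $|h'|\leq 1$ and the universal estimate $|h-\Exp_P[h]|\leq t + \Exp_P[|X|]$ from Lemma \ref{lem:bd-h}.
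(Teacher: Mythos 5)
Your proof is correct, and it takes a genuinely different route from the paper's. The paper splits into two cases: for $\ell = 0$ it invokes Mill's ratio directly, and for $\ell \geq 1$ it relies on the auxiliary Lemma \ref{lem:poly-exp} to produce a degree-$(2\ell-2)$ polynomial $q$ with $p_\ell^2(x) - p_\ell^2(0) = (1-x^2)q(x) + xq'(x) - q(0)$; after integrating by parts against that decomposition, the bound hinges on the algebraic cancellation $1 + x^2 q(x)/p_\ell(x)^2 = \calO(x^{-2})$ falling out of \eqref{q-exp}. You instead do a single integration by parts against $d(-e^{-t^2/2})$ with the factor $p_\ell(t)^2/t$ absorbed into $u$, which extracts the dominant asymptotic $g_h(x) \sim -(h(x)-\Exp_P[h])/x$ in one stroke and makes $x^2 g_h(x) + x(h(x)-\Exp_P[h])$ \emph{identically equal} to the remainder $-x^2 J(x)/[p_\ell(x)^2 e^{-x^2/2}]$, rather than approximately so after a separate cancellation argument. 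This buys two things over the paper's route: it treats $\ell = 0$ and $\ell \geq 1$ uniformly (Mill's ratio and Lemma \ref{lem:poly-exp} are both dispensed with), and it makes the source of the cancellation structural rather than computational. The cost is the modest bookkeeping of estimating $J(x)$, but your crude $\calO(t^{2\ell-1})$ bound on the integrand plus the elementary tail estimate $\int_x^\infty t^{2\ell-1}e^{-t^2/2}\,dt = \calO(x^{2\ell-2}e^{-x^2/2})$ closes it cleanly, with uniformity in $h$ manifest since only $|h'|\leq 1$ and the Lemma \ref{lem:bd-h} linear bound enter. One small point worth spelling out if you write this up: for $\ell \geq 1$ you are working on the ray $(r_\ell,\infty)$ with $r_\ell \geq 0$, and in any case $x\to\infty$, so $p_\ell(t)^2/t$ and its derivative are smooth on the integration range and the $\calO(t^{2\ell-2})$ estimate on $\frac{d}{dt}[p_\ell(t)^2/t]$ holds with constants independent of $h$.
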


\begin{proof}
  We will need to evaluate $g_h$ more carefully than in Lemma \ref{lem:bdd-gh}.
  Suppose $\ell=0$ to begin.  
  Recall Mill's ratio:  $G(x)/e^{-\frac{1}{2} x^2} = \frac{1}{x} -\frac{1}{x^3} + o(\frac{1}{x^3})$ where
  $G(x) = \int_x^\infty e^{-\frac{1}{2}t^2}dt$.  Then, by \eqref{ghhelp}, and integration-by-parts,
  $$x^2 g_h(x) + x(h(x) - \Exp_P[h]) = \frac{x^2}{e^{-\frac{1}{2}x^2}} \Big[ -(h(x) - \Exp_P[h]) G(x) - \int_x^\infty h'(t)G(t)dt\Big] + x(h(x)-\Exp_P[h]).$$
  By Mill's ratio, and noting $|h(x)-\Exp_P[h]| = \calO(|x|)$ (Lemma \ref{lem:bd-h}) and $h'(x) = \calO(1)$, the desired limit holds.

  Suppose now $\ell\geq 1$.  Applying Lemma \ref{lem:poly-exp} with respect to the $2\ell$ degree polynomial $p_\ell$, we have
  \begin{align*}
    \int_{-\infty}^x p^2_\ell(t) e^{-\frac 12 t^2}\ dx
    = xq(x) e^{-\frac 12 x^2} - (q(0) - p(0)) \int_{-\infty}^x e^{-\frac 12 t^2}\ dt,
  \end{align*}
  where $q$ is degree $2\ell-2$ polynomial
  satisfying
  \begin{align}
    \label{q-exp}
    p^2_\ell(x) - p^2_\ell(0) = (1 - x^2) q(x) + xq'(x) - q(0).
  \end{align}

  Then, from \eqref{ghhelp}, as $p_\ell^2(t)e^{-\frac{1}{2}t^2} = \big[\int_{-\infty}^t p_\ell^2(s)e^{-\frac{1}{2}s^2}ds\big]'$,
  we have
  $$p_\ell^2(x) e^{-\frac{1}{2}x^2}g_h(x) = -\int_x^\infty (h(t)-\Exp_P[h]) \big[tq(t)e^{-\frac{1}{2}t^2}\big]' dt
  - (q(0) - p_\ell^2(0))\int_x^\infty (h(t) - \Exp_P[h]) e^{-\frac{1}{2}t^2}dt.$$
  Integrating the first term by parts,
  \begin{align*}
    -\int_x^\infty (h(t) - \Exp_P[h]) \left[
    tq(t) e^{-\frac 12 t^2} \right]'\ dt
  &= (h(x) - \Exp_P[h]) xq(x) e^{-\frac 12 x^2}
  + \int_x^\infty h'(t) tq(t) e^{-\frac 12 t^2}\ dt.
  \end{align*}
  This gives 
  \begin{align*}
   & p_\ell(x)^2 e^{-\frac 12 x^2} g_h(x)
   = (h(x) - \Exp_P[h]) xq(x) e^{-\frac 12 x^2}\\
   &\ \ \   + \int_x^\infty h'(t)
   tq(t) e^{-\frac 12 t^2}\ dt
   -(q(0) - p_\ell(0)^2)\int_x^\infty (h(t) - \Exp[h]) e^{-\frac 12 t^2}\ dt,
  \end{align*}
  and thus
  \begin{align*}
    g_h(x)
  &= x (h(x)-\Exp_P[h]) p_\ell(x)^{-2} q(x) \nonumber\\
  &\qquad - (q(0) - p_\ell(0)^2) p_\ell(x)^{-2}e^{\frac 12 x^2}
  \int_x^\infty (h(t) - \Exp_P[h]) e^{-\frac 12 t^2}\ dt \nonumber\\
  &\qquad + p_\ell(x)^{-2} e^{\frac 12 x^2}\int_x^\infty h'(t)
  tq(t) e^{-\frac 12 t^2}\ dt.
  \end{align*}
  Since, by Lemma \ref{lem:bd-h}, $|h(t) - \Exp_P[h]) = \calO(|t|)$, $h'(t) = \calO(1)$, and $q$ is an $2\ell -2$ degree polynomial, via integration-by-parts, we have
  \begin{align*}
    g_h(x)
    &= x(h(x)-\Exp_P[h])p_\ell(x)^{-2}q(x) + \calO(x^{-2}).
  \end{align*}
  Thus,
  \begin{align*}
    x^2 g_h(x) + x(h(x) - \Exp_P[h])
    &= \left(1 + \frac{x^2 q(x)}{p_\ell(x)^2}\right) x(h(x) - \Exp_P[h]) + \calO(1).
  \end{align*}
  Noting \eqref{q-exp},
  \begin{align*}
    1 + \frac{x^2 q(x)}{p_\ell(x)^2}
    = \frac{p_\ell(x)^2 + x^2 q(x)}{p_\ell(x)^2}
    = \frac{p_\ell(0)^2 + q(x) + xq'(x) - q(0)}{p_\ell(x)^2}
    = \calO(x^{-2}).
  \end{align*}
  We conclude, as by Lemma \ref{lem:bd-h}  $h(x)- \Exp_P[h] = \calO(x)$, that
  $|x^2 g_h(x) + x(h(x)-\Exp_P[h])| = \calO(1)$,
  where $\calO(\cdot)$ does not depend on $h$, giving the result.
\end{proof}

Finally, we come to the bound of $g''_h$.
\begin{lemma}
  \label{lem:gh2}
  We have
  \begin{align}
    \label{g''help}
    |g_h''(x)|
    \leq c_1 + c_2|g_h(x)| + |x^2 g_h(x) + x(h(x)- \Exp_P[h])|
  \end{align}
  for some constants $c_1$ and $c_2$ not depending on $h$.

  As a direct corollary of Lemmas \ref{lem:bdd-gh} and \ref{lem:bdd-snd}, we have
  $$
  \limsup_{x\rightarrow\infty} \sup_{h\in \mathcal{L}} |g''_h(x)| <\infty.
  $$
\end{lemma}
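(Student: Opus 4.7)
The plan is to start from the explicit expansion of $g_h''$ in \eqref{eq:gh-split} and then to isolate the combination $x^2 g_h(x) + x(h(x)-\Exp_P[h])$ by exploiting the precise large-$x$ asymptotics of $f'/f$ and $f''/f$ for higher energy functions $f(x)=c\,p_\ell(x)^2 e^{-\frac{1}{2}x^2}$. Since $p_\ell$ is a polynomial of degree $\ell$, we have $p_\ell'(x)/p_\ell(x) = \calO(1/x)$ and $(p_\ell'(x)/p_\ell(x))' = \calO(1)$ as $x\to\infty$. Because $f'/f = 2 p_\ell'/p_\ell - x$, this gives $f'(x)/f(x) = -x + \calO(1/x)$. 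Writing $f''/f = (f'/f)' + (f'/f)^2$, we obtain $(f'/f)^2 = x^2 + \calO(1)$ and $(f'/f)' = \calO(1)$, so $f''(x)/f(x) = x^2 + \calO(1)$.

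Substituting these asymptotics into \eqref{eq:gh-split} produces
\begin{align*}
g_h''(x) &= \bigl(2(f'/f)^2 - f''/f\bigr)(x)\, g_h(x) - \tfrac{f'(x)}{f(x)}(h(x)-\Exp_P[h]) + h'(x) \\
 &= \bigl(x^2 + \calO(1)\bigr) g_h(x) + \bigl(x + \calO(1/x)\bigr)(h(x)-\Exp_P[h]) + h'(x).
\end{align*}
The first key observation is that the leading pieces combine into exactly the quantity $x^2 g_h(x) + x(h(x)-\Exp_P[h])$ that appears in the bound, so after rearrangement
\[
 g_h''(x) = \bigl[x^2 g_h(x) + x(h(x)-\Exp_P[h])\bigr] + \calO(1)\,g_h(x) + \calO(1/x)(h(x)-\Exp_P[h]) + h'(x).
\]
The $\calO$-constants here come only from the polynomial $p_\ell$, and not from $h$.

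To convert this into the stated inequality, I then use $h\in\mathcal{L}$ to bound the residual terms uniformly in $h$: Lemma \ref{lem:bd-h} gives $|h(x)-\Exp_P[h]|\leq |x|+\Exp_P[|X|]$, so $\calO(1/x)\cdot(h(x)-\Exp_P[h])$ is $\calO(1)$ for $x$ bounded away from $0$, and $|h'(x)|\leq 1$. Absorbing both into a constant $c_1$, and letting $c_2$ be the coefficient from $\calO(1)g_h(x)$, gives precisely $|g_h''(x)|\leq c_1 + c_2|g_h(x)| + |x^2 g_h(x) + x(h(x)-\Exp_P[h])|$ with $c_1,c_2$ independent of $h$, as required. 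The corollary then follows immediately by inserting the uniform bounds from Lemmas \ref{lem:bdd-gh} and \ref{lem:bdd-snd}.

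The only place that requires care is the derivation of the two-term asymptotics $f'/f = -x + \calO(1/x)$ and $f''/f = x^2 + \calO(1)$ with uniform constants (the $\calO(1)$ in $(f'/f)^2$ must actually cancel against the $\calO(1)$ in $(f'/f)'$ only up to a bounded remainder, not exactly). I expect this bookkeeping, which uses only that $p_\ell$ is a fixed polynomial and thus $p_\ell'/p_\ell$ has a convergent asymptotic expansion in powers of $1/x$, to be the main, though entirely routine, obstacle.
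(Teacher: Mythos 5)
Your argument is correct and follows essentially the same route as the paper's: both start from the expansion \eqref{eq:gh-split}, write $f'/f = -x + 2p_\ell'/p_\ell$, isolate the combination $x^2 g_h(x) + x(h(x)-\Exp_P[h])$, and absorb the remaining terms using $p_\ell'/p_\ell = \calO(1/x)$ together with Lemma \ref{lem:bd-h}. The only (immaterial) difference is that the paper uses the Hermite recurrence $p_\ell'' = xp_\ell' - \ell p_\ell$ to compute $2(f'/f)^2 - f''/f$ exactly as $6(p_\ell'/p_\ell)^2 - 6x\,p_\ell'/p_\ell + x^2 + 2\ell + 1$, whereas you derive the same $x^2 + \calO(1)$ estimate purely from the polynomiality of $p_\ell$ without the recurrence, which is sound and if anything slightly more elementary.
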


\begin{proof}
  Writing $\frac{f'(x)}{f(x)} = -x + \frac{2p'_\ell(x)}{p_\ell(x)}$, and noting that Hermite polynomials satisfy $p_\ell''(x) = xp_\ell'(x) - \ell p_\ell(x)$,
  we have that
  $$2\Big(\frac{f'(x)}{f(x)}\Big)^2 - \frac{f''(x)}{f(x)} = 6 \Big(\frac{p_\ell'(x)}{p_\ell(x)}\Big)^2
  - 6x\frac{p_\ell'(x)}{p_\ell(x)}
  + x^2 + 2\ell + 1.$$
  Hence, by direct computation with respect to \eqref{eq:gh-split},
  \begin{align*}
    &g_h''(x)
    &= \Big(6 \Big(\frac{p_\ell'(x)}{p_\ell(x)}\Big)^2
      - 6x\frac{p_\ell'(x)}{p_\ell(x)}
    + x^2 + 2\ell + 1 \Big) g_h(x)
    + \Big(x - 2\frac{p_\ell'(x)}{p_\ell(x)}\Big) (h(x)-\Exp_P[h])
    + h'(x).
  \end{align*}
  Then,
  \begin{align*}
    |g_h''(x)|
  &\leq |h'(x)|
  + \left|2\frac{p_\ell'(x)}{p_\ell(x)} (h(x)-\Exp_P[h])\right|\\
  &\ \ 
  + \Big|2\ell + 1 - 6x\frac{p_\ell'(x)}{p_\ell(x)}
  + 6\big(\frac{p_\ell'(x)}{p_\ell(x)}\big)^2\Big| |g_h(x)|
  + \Big|x^2 g_h(x) + x(h(x)-\Exp_P[h])\Big|.
  \end{align*}

  Note, by Lemma \ref{lem:bd-h} that $|h(x)-\Exp_P[h]| \leq |x| + \Exp_P[|X|]= \calO(x)$ for large $x$,
  and $|h'(x)| \leq 1$.
  Because $p_\ell$ is a polynomial,
  $\frac{p_\ell'(x)}{p_\ell(x)} = \calO\!\left(\frac 1x\right)$ for large $x$; it vanishes when $\ell=0$ as $p_0(x)\equiv 1$.
  Hence,  
  $\left|2\frac{p_\ell'(x)}{p_\ell(x)} (h(x)-\Exp_P[h])\right|
  = \calO(1)$ and
  $\left|2\ell + 1 - 6x\frac{p_\ell'(x)}{p_\ell(x)}\right|
  = \calO(1)$, and \eqref{g''help} follows.
\end{proof}

\subsection{Uniform bound of $g''_h$ in $(x_1, x_N)$}
Recall $f$ is a higher energy function, strictly positive on $(a, b)$ where $a$ is a zero of $f$ or $a=-\infty$, and $b$ is the next zero of $f$ or $b=\infty$.  Let $(x_n)_{n=1}^N$ be part of an MIW sequence contained in $(a,b)$.  

\begin{lemma}
  \label{lem:5.11}
  We have
  $$\sup_{x\in (x_1, x_N)}\sup_{h\in \mathcal{L}} |g''_h(x)| <\infty.$$
\end{lemma}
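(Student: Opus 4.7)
The plan is to partition $(a,b)$ into three pieces—a neighborhood of $a$, a neighborhood of $b$, and a compact middle piece $K$—and bound $\sup_{h\in\mathcal{L}}|g_h''(x)|$ by a constant on each. Since $(x_1, x_N) \subset (a,b)$ for every $N$, this yields a bound on $(x_1, x_N)$ uniform in both $h$ and $N$, and in particular the supremum asserted in the lemma.

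Near the endpoints, the work has already been done. If $a$ is a finite zero of $f$, Lemma~\ref{lem:6.14} supplies some $\delta_a > 0$ and $C_a < \infty$ with $\sup_{h\in\mathcal{L}}|g_h''(x)| \leq C_a$ on $(a, a+\delta_a)$; if $a = -\infty$, the symmetric analog of Lemma~\ref{lem:gh2} (via $x\mapsto -x$) gives a uniform bound on some half-line $(-\infty, -M_a)$. The right endpoint is treated symmetrically by Remark~\ref{rmk:gh} (if $b$ is a finite zero) or by Lemma~\ref{lem:gh2} (if $b=\infty$). By choosing $\delta_a, \delta_b$ (or $M_a, M_b$) small enough, the complement $K \subset (a,b)$ is a compact subinterval on which $f$ is continuous and strictly positive.

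On $K$, I would apply the representation \eqref{eq:gh-split}. Since $f$ is bounded below by a positive constant and $f, f', f''$ are continuous, the ratios $f'/f$ and $f''/f$ are bounded uniformly on $K$. For $h\in\mathcal{L}$ we have $|h'(x)|\leq 1$ and, by Lemma~\ref{lem:bd-h}, $|h(x)-\Exp_P[h]|\leq |x|+\Exp_P[|X|]$, which is bounded on $K$ uniformly in $h$. Finally, the crude estimate
\begin{equation*}
|g_h(x)| \;\leq\; \frac{1}{f(x)}\int_a^b f(t)\bigl(|t| + \Exp_P[|X|]\bigr)\,dt
\end{equation*}
is uniform in $h\in\mathcal{L}$; since $f$ has Gaussian tails at $\pm\infty$ and vanishes only to order two at its finite zeros (by Lemma~\ref{lem:dbl-zro}), the integral is finite, and $1/f$ is bounded on $K$. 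Inserting these bounds into \eqref{eq:gh-split} produces a uniform bound on $|g_h''|$ over $K\times\mathcal{L}$.

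I do not anticipate any serious obstacle: the proof is essentially an assembly of the previously established endpoint asymptotics with a routine interior estimate. The only point that warrants care is confirming that the $|g_h|$ bound on $K$ is genuinely $h$-uniform, which is immediate from Lemma~\ref{lem:bd-h} since its right-hand side does not involve $h$.
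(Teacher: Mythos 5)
Your proposal is correct and follows essentially the same route as the paper: decompose $(a,b)$ into endpoint neighborhoods (handled by Lemma~\ref{lem:6.14}, Remark~\ref{rmk:gh}, and Lemma~\ref{lem:gh2} together with the symmetry of $f$) and a compact middle subinterval, on which $g_h''$ is bounded uniformly in $h$ via \eqref{eq:gh-split}, the $h$-uniform integral bound on $|g_h|$ from Lemma~\ref{lem:bd-h}, and the boundedness of $f'/f$ and $f''/f$ away from zeros and infinity.
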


\begin{proof}
  Let $a<u<v<b$.  Recall the formula for $g''_h$ in \eqref{eq:gh-split}.  We have $h'(x) = \calO(1)$ and 
  $$g''_h(x) - h'(x) = 2\left(\frac{f'(x)}{f(x)}\right)^2 g_h(x)
  - \frac{f''(x)}{f(x)}g_h(x) - \frac{f'(x)}{f(x)} (h(x) - \Exp_P[h]).$$  As $|h(t) - \Exp_P[h]| \leq |t| + \Exp_P[|X|]$ by Lemma \ref{lem:bd-h}, we have
  $$\sup_{x\in (u,v)}\sup_{h\in \mathcal{L}}|g_h(x)| \leq \sup_{x\in (u,v)} f^{-1}(x)\int_a^x (|t| + \Exp_P[|X|])f(t)dt<\infty.$$ Also, $f'/f$ and $f''/f$ are bounded on $(u,v)$.  Hence, $\sup_{x\in (u,v)}\sup_{h\in\mathcal{L}}|g''_h(x)| <\infty$.

  Since $g''_h$ is bounded uniformly in $h$ near a root by Lemma \ref{lem:6.14} and Remark \ref{rmk:gh}, and at $\infty$ or $-\infty$ by symmetry of $f$ via Lemma \ref{lem:gh2}, the desired bound follows.
\end{proof}

\subsection{Wasserstein-$1$ distance bound on finite intervals}
Let $f$ be a higher energy function, strictly positive on $(a,b)$ where
where $-\infty < a < b < \infty$ and $a,b$ are zeros of $f$.
Let $P$ be the distribution for which $f$ is the density, and $Q$ be the empirical distribution of the part of the MIW sequence $(x_n)_{n=1}^N$ in $(a,b)$.

Recall that  $x_1 \to a$ and $x_N \to b$
as $N \to \infty$ by Theorem~\ref{thm:gaps}.

\begin{proposition}\label{prop:wass-finite}
  For all large $N$, the distance
  \begin{align*}
    d(Q, P) = \sup_{h\in \mathcal{L}}   |\Exp_Q[h] - \Exp_P[h]| =\calO\Big(\frac{1}{N}\Big).
  \end{align*}
\end{proposition}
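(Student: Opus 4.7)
The plan is to apply the triangle inequality $|\Exp_Q[h] - \Exp_P[h]| \leq |\Exp_Q[h] - \Exp_R[h]| + |\Exp_R[h] - \Exp_P[h]|$ with respect to the intermediate distribution $R$, and then bound each of the two resulting terms uniformly over $h \in \mathcal{L}$. All the technical machinery has been assembled in the preceding lemmas; this proposition is the moment where we package it together for the finite-interval case.

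For the first term, Proposition \ref{prop:wass1} directly gives
\[
\sup_{h\in\mathcal{L}}|\Exp_Q[h] - \Exp_R[h]| \leq \frac{x_N - x_1}{N-1} \leq \frac{b-a}{N-1} = \calO\!\left(\frac{1}{N}\right),
\]
since $a$ and $b$ are both finite. For the second term, I apply Proposition \ref{prop:wass} with any convenient choice of $\beta \in [0,1]$ (say $\beta = 1/2$). Each boundary contribution is controlled using the endpoint estimates developed in Section 5.3: since $a$ is a finite zero of $f$, Lemma \ref{lem:6.16} yields $\limsup_{N\to\infty}\sup_{h\in\mathcal{L}} |g_h(x_1)/(x_2 - x_1)| < \infty$ and $\limsup_{N\to\infty}\sup_{h\in\mathcal{L}} |(f'/f)(x_1)\,g_h(x_1)| < \infty$, while Remark \ref{rmk:gh} gives the analogous uniform bounds at $x_N$ since $b$ is a finite zero of $f$. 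The second-derivative term $\sup_{x\in(x_1,x_N)}|g_h''(x)|$ is uniformly bounded in $h \in \mathcal{L}$ by Lemma \ref{lem:5.11}, and $x_N - x_1 \leq b - a = \calO(1)$.

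Substituting these bounds into \eqref{eq:stein-bound} shows that every summand inside the outer parenthesis is $\calO(1)$ uniformly over $h \in \mathcal{L}$ and over $N$, so the overall factor of $1/(N-1)$ yields $\sup_{h\in\mathcal{L}}|\Exp_R[h] - \Exp_P[h]| = \calO(1/N)$. Combining the two estimates completes the proof. There is no real obstacle remaining; the only point requiring minor care is that all of the cited endpoint bounds must hold \emph{uniformly in $h$}, which is why each of the lemmas in Section 5.3 was formulated as a $\sup_{h\in\mathcal{L}}$ statement rather than a pointwise-in-$h$ estimate, and Theorem \ref{thm:gaps} is invoked to guarantee $x_1 \to a$ and $x_N \to b$ so that the asymptotic bounds of Lemma \ref{lem:6.16} and Remark \ref{rmk:gh} can be applied.
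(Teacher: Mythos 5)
Your proposal is correct and follows essentially the same route as the paper: triangle inequality through the intermediate distribution $R$, Proposition \ref{prop:wass1} for $|\Exp_Q[h]-\Exp_R[h]|$, and Proposition \ref{prop:wass} combined with Lemma \ref{lem:6.16}, Remark \ref{rmk:gh}, and Lemma \ref{lem:5.11} for $|\Exp_R[h]-\Exp_P[h]|$. The only cosmetic difference is your choice $\beta=1/2$ versus the paper's $\beta=0$, which the paper itself notes is immaterial.
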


\begin{proof}
  Note that $x_N-x_1\leq b-a$.
  Recall the intermediate continuous distribution $R$ in Proposition \ref{prop:wass1}, which bounds
  as a consequence $\sup_{h\in \mathcal{L}}|\Exp_R[h] - \Exp_Q[h]| =\calO(1/N)$.

  On the other hand, we bound $\sup_{h\in \mathcal{L}}|\Exp_R[h] - \Exp_P[h]|$ via the right-hand side of the estimate in Proposition \ref{prop:wass}.     Here, it does not matter what the value $\beta\in [0,1]$ is, but to be definite, we chose it as $\beta=0$.  By applying the bounds 
  $|\frac{g_h(x_1)}{x_2 - x_1}| \leq 2(b-a)$,
  $|\frac{g_h(x_N)}{x_N - x_{N-1}}| \leq 2(b-a)$,
  and $|\frac{f'(x_1)}{f(x_1)}g_h(x_1)| \leq b-a$
  in Lemma \ref{lem:6.16} and Remark \ref{rmk:gh}, and the bound of $\sup_{x\in (x_1, x_N)} |g''_h(x)|$ in Lemma \ref{lem:5.11}, we obtain $\sup_{h\in \mathcal{L}}|\Exp_R[h] - \Exp_P[h]| = \calO(1/N)$.

  Adding these two bounds, we obtain the desired estimate.
\end{proof}

\subsection{Wasserstein-$1$ distance bound on rays}
Let $f$ be a higher energy function, strictly positive on $(a,\infty)$ where
where $a$ is a zero of $f$.
Let $P$ be the distribution for which $f$ is the density, and $Q$ be the empirical distribution of the part of the MIW sequence $(x_n)_{n=1}^N$ in $(a,\infty)$.  

Recall that  $x_1 \to a$ and $x_N = \calO(\sqrt{\log N})$
as $N \to \infty$ by Theorem~\ref{thm:gaps}.

We comment, although the following is stated for rays $(a,\infty)$, by considering $f(a-x)$ we may also deduce the result for rays $(-\infty, a)$.

\begin{proposition}\label{prop:wass-half}
  For all large $N$, the distance
  \begin{align*}
    d(Q,P)= \sup_{h\in \mathcal{L}}    |\Exp_Q[h] - \Exp_P[h]|
    = \calO\Big(\frac{\sqrt{\log N}}{N}\Big).
  \end{align*}
\end{proposition}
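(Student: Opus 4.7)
The plan is to split $d(Q,P) \leq d(Q,R) + d(R,P)$ via the intermediate continuous distribution $R$ with density $\rho$, as set up earlier. Proposition \ref{prop:wass1} gives the first piece immediately: $d(Q,R) \leq (x_N - x_1)/(N-1)$. Since $a$ is a finite root of $f$ while the ray extends to $+\infty$, Theorem \ref{thm:gaps} says $x_1 \to a$ and $x_N = \calO(\sqrt{\log N})$, so $d(Q,R) = \calO(\sqrt{\log N}/N)$.

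For $d(R,P)$ I would apply Proposition \ref{prop:wass} with the free parameter chosen as $\beta = 1$. This is the decisive choice: because $(x_n)_{n=1}^N$ is the rightmost segment of the MIW sequence of Theorem \ref{thm:exist-unique}, it inherits the right boundary condition at $+\infty$, namely $\frac{1}{x_N - x_{N-1}} + \frac{f'(x_N)}{f(x_N)} = 0$. With $\beta = 1$, the two $g_h(x_N)$ contributions in the boundary term of \eqref{eq:stein-bound} cancel exactly, and what remains of that boundary term is just $|g_h(x_1)/(x_2 - x_1)|$, which by Lemma \ref{lem:6.16} is $\calO(1)$ uniformly in $h \in \mathcal{L}$ for large $N$, since $a$ is a finite zero of $f$. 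The remaining piece of \eqref{eq:stein-bound} is $(x_N - x_1)\bigl(1 + \sup_{(x_1,x_N)} |g_h''|\bigr)$; here Lemma \ref{lem:5.11} supplies a uniform bound on $g_h''$ (combining the finite-root estimate of Lemma \ref{lem:6.14} with the large-$x$ estimate of Lemma \ref{lem:gh2}), and Theorem \ref{thm:gaps} supplies $x_N - x_1 = \calO(\sqrt{\log N})$. Dividing by $N-1$ yields $d(R,P) = \calO(\sqrt{\log N}/N)$, and summing gives the stated rate.

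The main obstacle is handling the unbounded endpoint at $+\infty$: unlike the finite--finite case in Proposition \ref{prop:wass-finite}, one cannot use the crude bound $|h - \Exp_P[h]| \leq b - a$, and the raw quantity $g_h(x_N)/(x_N - x_{N-1})$ need not even tend to zero as $N \to \infty$. It is the MIW right boundary condition at infinity that fortuitously makes $1/(x_N - x_{N-1})$ and $-f'(x_N)/f(x_N)$ coincide, and the parameter $\beta$ was introduced in Proposition \ref{prop:wass} precisely to exploit this cancellation on infinite rays while leaving Lemma \ref{lem:6.16} to absorb the finite-endpoint side.
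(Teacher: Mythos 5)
Your proposal is correct and follows essentially the same approach as the paper: triangle inequality through the intermediate distribution $R$, Proposition \ref{prop:wass1} for $d(Q,R)$, and Proposition \ref{prop:wass} with $\beta=1$ so that the right boundary condition at $+\infty$ cancels the $g_h(x_N)$ terms, with Lemmas \ref{lem:6.16} and \ref{lem:5.11} handling what remains. Your closing paragraph correctly identifies why the $\beta$ parameter was built in and why $\beta=1$ is the right choice on a right ray.
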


\begin{proof}
  Note that $x_N-x_1= \calO(\sqrt{\log N})$.
  Recall again the intermediate continuous distribution $R$ in Proposition \ref{prop:wass1}, which bounds
  $\sup_{h\in \mathcal{L}}|\Exp_Q[h] - \Exp_R[h]| =\calO(\sqrt{\log N}/N)$.

  We now take $\beta = 1$ in the inequality in Proposition~\ref{prop:wass} of $|\Exp_R[h] - \Exp_P[h]|$.
  Since the right boundary condition holds,
  $\frac{1}{x_{N} - x_{N-1}} = -\frac{f'(x_N)}{f(x_N)}$,
  we have that
  $\frac{g_h(x_N)}{x_N-x_{N-1}} + \big(\frac{f'}{f}g_h\big)(x_N) = 0$.

  Hence, to bound the right-hand side in Proposition \ref{prop:wass}, we invoke Lemmas \ref{lem:6.16}, \ref{lem:5.11} to bound
  $\frac{g_h(x_1)}{x_2-x_1}$ and $\sup_{x\in (x_1, x_N)} |g''_h(x)|$.
  As a consequence, we obtain $\sup_{h\in \mathcal{L}}|\Exp_R[h]- \Exp_P[h]| = \calO(\sqrt{\log N}/N)$.

  Adding the bounds gives the result.
\end{proof}

\subsection{Wasserstein-$1$ distance bound for the Normal $\ell=0$ case)}

We rederive the bound found in \cite{mckeague_levin_2016, chen_thanh_2020, mckeague_swan_2021}, using the different  `density' approach.  
Let $f$ be the Normal density on $\R$,
$f(x) = \frac{1}{\sqrt{2\pi}} e^{-\frac 12 x^2}$.
Let $P$ be the associated distribution.
Let
$(x_n)_{n=1}^N$
be the associated MIW sequence satisfying the left boundary condition at $-\infty$,
and the right boundary condition at $\infty$, and
$Q$ its the empirical distribution.

Note that $x_N, |x_1| = \calO(\sqrt{\log N})$ by Theorem \ref{thm:gaps}.

\begin{proposition}[Normal]\label{lem:norm-bdd}
  We have
  \begin{align*}
    d(Q, P) = \sup_{h\in \mathcal{L}}|\Exp_Q[h] - \Exp_{P_N}[h]|
    = \calO\left(\frac{\sqrt{\log N}}{N}\right).
  \end{align*}
\end{proposition}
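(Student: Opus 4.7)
The plan is to follow the template of Proposition~\ref{prop:wass-half}, exploiting that for the Normal density both endpoints lie at $\pm\infty$ and the density is symmetric, so \emph{both} the left and right boundary conditions can be used simultaneously. First, I would introduce the intermediate continuous distribution $R$ on $(x_1,x_N)$ with density $\rho(t) = [(N-1)(x_{n(t)+1}-x_{n(t)})]^{-1}$ and invoke Proposition~\ref{prop:wass1} to obtain
\[
\sup_{h\in\mathcal{L}} |\Exp_Q[h]-\Exp_R[h]| \leq \frac{x_N-x_1}{N-1}.
\]
Since $|x_1|,x_N = \calO(\sqrt{\log N})$ by Theorem~\ref{thm:gaps}, this contribution is $\calO(\sqrt{\log N}/N)$.

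To bound $|\Exp_R[h]-\Exp_P[h]|$, I would apply Proposition~\ref{prop:wass} with $\beta = 1$. Because $f'(x)/f(x) = -x$ for the Normal density, the right boundary condition $(x_N-x_{N-1})^{-1} = x_N$ forces
\[
\frac{g_h(x_N)}{x_N-x_{N-1}} + \bigl(\tfrac{f'}{f} g_h\bigr)(x_N) = x_N g_h(x_N) + (-x_N)g_h(x_N) = 0,
\]
while the left boundary condition $(x_2-x_1)^{-1}= -x_1$ gives $-g_h(x_1)/(x_2-x_1) = x_1 g_h(x_1)$. Hence the boundary expression in \eqref{eq:stein-bound} collapses to $|x_1 g_h(x_1)|$.

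To bound $|x_1 g_h(x_1)|$ and the derivative term $(x_N-x_1)(1+\sup_{(x_1,x_N)}|g_h''|)$, I would use that for the Normal density, both $\sup_{x\in\R}\sup_{h\in\mathcal{L}} |g_h(x)|$ and $\sup_{x\in\R}\sup_{h\in\mathcal{L}} |g_h''(x)|$ are finite. These follow from Lemmas~\ref{lem:bdd-gh}, \ref{lem:bdd-snd}, and~\ref{lem:gh2} as $x\to+\infty$, by symmetry of $f$ as $x\to-\infty$, and by elementary continuity estimates on compact sets; note that the absence of zeros of $f$ bypasses Lemmas~\ref{lem:6.16} and~\ref{lem:6.14} entirely. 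Together with $|x_1|,x_N = \calO(\sqrt{\log N})$, both the boundary term and the derivative term are $\calO(\sqrt{\log N})$. Dividing by $N-1$ yields $\sup_h |\Exp_R[h]-\Exp_P[h]| = \calO(\sqrt{\log N}/N)$, and a triangle inequality with the first bound yields the claim.

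There is no genuine obstacle here, since all the ingredients have been developed in Sections~\ref{gap_sect} and~\ref{converge-sect2}; the point worth emphasizing is the \emph{exact} cancellation afforded by $\beta=1$ in the symmetric, root-free setting, which reduces the boundary contribution from a potentially larger combination of $x_N|g_h(x_N)|$ and $|x_1||g_h(x_1)|$ terms to just a single $|x_1 g_h(x_1)|$ of optimal order.
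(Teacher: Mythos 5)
Your proposal is correct and follows essentially the same route as the paper's proof: Proposition~\ref{prop:wass1} for the coupling estimate, Proposition~\ref{prop:wass} with $\beta=1$ so the right boundary condition cancels the $x_N$ contribution, Theorem~\ref{thm:gaps} for $|x_1|, x_N = \calO(\sqrt{\log N})$, and Lemma~\ref{lem:5.11} for the uniform bound on $g_h''$. The only minor variation is in bounding $|x_1 g_h(x_1)|$: you invoke a global $\sup_{x,h}|g_h(x)|<\infty$ obtained from Lemma~\ref{lem:bdd-gh}, symmetry, and compactness (a correct but slightly stronger intermediate step than the paper states), whereas the paper reaches the same $\calO(\sqrt{\log N})$ by the direct one-line evaluation $|x_1 g_h(x_1)| \leq 2|x_1|e^{\frac12 x_1^2}\int_{-\infty}^{x_1}(-t)e^{-\frac12 t^2}\,dt = -2x_1$ using $|h(t)-\Exp_P[h]|\leq |t|+\Exp_P[|X|]$; both give the same order and the difference is purely cosmetic.
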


\begin{proof}
  Note that $x_N-x_1= \calO(\sqrt{\log N})$.
  Recall again the intermediate continuous distribution $R$ in Proposition \ref{prop:wass1}, which bounds
  $\sup_{h\in \mathcal{L}}|\Exp_Q[h] - \Exp_R[h]| =\calO(\sqrt{\log N}/N)$.

  Consider the bound in \eqref{eq:stein-bound} with the choice $\beta = 1$.
  Again, because the right boundary condition at infinity is
  $\frac{g_h(x_N)}{x_{N} - x_{N-1}} = -\left(\frac{f'}{f} g_h\right)(x_N)$,
  we need only bound
  (1) $\sup_{h\in \mathcal{L}}\frac{|g_h(x_1)|}{x_2 - x_1}=\calO(\sqrt{\log N})$, which requires a different argument than when the left endpoint of the interval is a zero of $f$, and (2)
  $\sup_{h\in \mathcal{L}}\sup_{x \in (x_1, x_N)} |g_h''(x)| =\calO(1)$ which has already been shown in Lemma \ref{lem:5.11}.

  To this end, from the MIW relation, as $f'(x)/f(x) = -x$, we have $x_2 - x_1 = -x_1^{-1}$.  Hence, we need to bound $|\frac{g_h(x_1)}{x_2-x_1}| = |x_1g_h(x_1)|$.
  Since $|h(t) - \Exp_P[h]| \leq |t| + \Exp_P[|X|]$, we have
  \begin{align*}
    |{x_1} g_h({x_1})|
    = \left| {x_1} e^{\frac 12 {x_1}^2} \int_{-\infty}^{x_1}
    e^{-\frac 12 t^2} (h(t) - \Exp_P[h])\ dt \right|
    \leq 
    2|x_1| e^{\frac 12 x_1^2} \int_{-\infty}^{x_1} e^{-\frac 12 t^2} |t|\ dt,
  \end{align*}
  for large $N$.
  As the term $x_1 \leq 0$, we have
  \begin{align*}
    |x_1g_h(x_1)|
                 & \leq 2{x_1} e^{\frac 12 {x_1}^2} \int_{-\infty}^{x_1} t e^{-\frac 12 t^2}\ dt 
                 = -2{x_1} = \calO(\sqrt{\log N}). \qedhere
  \end{align*}
\end{proof}

\subsection{Wasserstein-$1$ distance bound for $\ell\geq 0$:  Proof of Theorem \ref{thm:converge}}
\label{proofconvergesect}

We will break up $f$ into strictly positive regions.
We will then combine rates of convergence on each region,
via the following lemma.

\begin{proposition}\label{prop:wass-mix}
  Let $P_{n,k}$ and $P_k$ be probability distributions
  for $n \in \N$ and $0 \leq k \leq K$.
  Let $c_{n,k}>0$ be such that
  $\sum_{k=0}^K c_{n,k} = 1$.
  Assume $\max_{0\leq k \leq K} \int  |x|\ dP_k <\mu<\infty$.

  Suppose, uniformly in $k$, $P_{n,k} \to P_k$
  in the Wasserstein-$1$ metric with rate $r(n)$ and also
  $c_k$ is such that
  $|c_{n,k} - c_k| 
  \leq r(n)$.
  Define mixture probability distributions $M_n = \sum_{k=0}^K c_{n,k} P_{n,k}$,
  and 
  $M = \sum_{k=0}^K c_k P_k$.

  Then, $d(M_n, M) \leq (K + \mu)r(n)$.
\end{proposition}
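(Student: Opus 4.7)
The plan is to apply the Kantorovich--Rubinstein dual representation
$$d(M_n, M) = \sup_{h \in \mathcal{L}} |\Exp_{M_n}[h] - \Exp_M[h]|,$$
recalled just before the statement of Theorem~\ref{thm:converge}, and to estimate $\Exp_{M_n}[h] - \Exp_M[h]$ for an arbitrary differentiable $1$-Lipschitz $h$ by the two-term decomposition
$$\Exp_{M_n}[h] - \Exp_M[h] = \sum_{k=0}^K c_{n,k}\bigl(\Exp_{P_{n,k}}[h] - \Exp_{P_k}[h]\bigr) + \sum_{k=0}^K (c_{n,k} - c_k)\,\Exp_{P_k}[h].$$
The first sum captures the component-wise Wasserstein convergence, while the second captures the deviation of the mixing weights.

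For the first sum, the hypothesis $d(P_{n,k}, P_k) \leq r(n)$ uniformly in $k$, together with the dual characterization, gives $|\Exp_{P_{n,k}}[h] - \Exp_{P_k}[h]| \leq r(n)$ for every $h \in \mathcal{L}$. Since $\sum_k c_{n,k} = 1$ with $c_{n,k} \geq 0$, the convex combination is bounded in absolute value by $r(n)$.

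The second sum is the more delicate part, because $\Exp_{P_k}[h]$ is not uniformly bounded over $h \in \mathcal{L}$ when $h$ is unbounded. I would circumvent this using the identity $\sum_{k=0}^K (c_{n,k} - c_k) = 0$, which follows from both collections of weights summing to $1$. This identity permits me to replace $\Exp_{P_k}[h]$ by $\Exp_{P_k}[h] - h(0) = \Exp_{P_k}[h(X) - h(0)]$ without altering the sum. The $1$-Lipschitz condition then yields $|h(X) - h(0)| \leq |X|$, so $|\Exp_{P_k}[h] - h(0)| \leq \Exp_{P_k}[|X|] \leq \mu$. Combined with $|c_{n,k} - c_k| \leq r(n)$ over the $K+1$ indices, the second sum is bounded by a constant multiple of $K\mu\cdot r(n)$.

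Adding the two contributions and taking the supremum over $h\in \mathcal{L}$ produces a Wasserstein-$1$ bound of order $r(n)$ with the claimed dependence on $K$ and $\mu$. No step presents a serious obstacle; the only subtlety is accommodating the potential unboundedness of general $1$-Lipschitz test functions, which is precisely the role of the baseline-subtraction trick enabled by the sum-to-zero identity of the weight differences. Otherwise the argument is a direct mixture inequality.
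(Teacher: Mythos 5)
Your proof is correct and follows essentially the same route as the paper's: both use the dual Wasserstein representation, the decomposition $\Exp_{M_n}[h] - \Exp_M[h] = \sum_k c_{n,k}(\Exp_{P_{n,k}}[h] - \Exp_{P_k}[h]) + \sum_k (c_{n,k}-c_k)\Exp_{P_k}[h]$, and the recentering $h \mapsto h - h(0)$ to control the unbounded test functions via the first-moment bound $\mu$. The only cosmetic difference is that the paper subtracts $h(0)$ globally before decomposing while you subtract it only in the second sum via the $\sum_k(c_{n,k}-c_k)=0$ identity; the constant one actually obtains from either route is of the form $\bigl(1+(K+1)\mu\bigr)r(n)$, which matches the order $\calO(r(n))$ stated though not the literal coefficient $(K+\mu)$ printed in the paper.
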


\begin{proof}
  Write, noting $|h(x)-h(0)|\leq |x|$, 
  \begin{align*}
    d(M_n, M)
    &= \sup_{h\in \mathcal{L}} \left|\int h\ dM_n - \int h\ dM\right| \\
    &= \sup_{h\in \mathcal{L}} \left|\int (h - h(0))\ dM_n - \int (h-h(0))\ dM\right| \\
    &\leq \sum_{k=0}^K c_{n,k}
    \sup_h \left|\int (h-h(0)\ dP_{n,k} - \int (h-h(0)\ dP_k\right|
    + |c_{n,k} - c_k| \int |x|\ dP_k \\
    &\leq \sum_{k=0}^K c_{n,k} d(P_{n,k}, P_k) + r(n)\mu \ \leq \ 
    (K + \mu)r(n). \qedhere
  \end{align*}
\end{proof}

\begin{proof}[\textup{\textbf{Proof of Theorem~\ref{thm:converge}}}]
  When $\ell=0$, the result has been shown in Proposition \ref{lem:norm-bdd}.

  Suppose now $\ell\geq 1$ when the function $f$ has $\ell$ zeros, $r_1 < \cdots < r_\ell$.  Let $r_0=-\infty$ and $r_{\ell+1}=\infty$.
  Let $(x_n)_{n=1}^N$ be
  an $N$-element MIW sequence of $f$
  satisfying the left boundary condition at $-\infty$,
  and the right boundary condition at $\infty$.
  Suppose that in regions 
  $(r_k, r_{k+1})$
  there are $N_k=\lfloor N \int_{r_k}^{r_{k+1} f_\ell(x)\ dx} \rfloor$ many points for $0\leq k<\ell$ and $N_\ell=N-\sum_{k=0}^{\ell-1} N_k$ points in $(r_\ell, \infty)$.
  These sequences exist and are uniquely determined when $N_k\geq 1$ for $0\leq k\leq \ell$ when $\ell\geq 1$ and $N=N_0\geq 2$ when $\ell=0$
  by Theorem~\ref{thm:exist-unique}.

  On each region $(r_k, r_{k+1})$, consider the subsequence contained, reordered,
  $
  (x_1,\ x_2,\ \cdots,\ x_{N_k})$.  In the following, we will drop the subscript and call $N=N_k$.
  Denote as before by $Q$ the empirical distribution of this subsequence and $P$ the distribution with density proportional to $f$ on  
  $(r_k, r_{k+1})$.  Recall the `intermediate' continuous distribution $R$ in Proposition \ref{prop:wass1}.

  On finite intervals $(r_k, r_{k+1})$, 
  by Proposition \ref{prop:wass-finite} we have
  $d(Q,P)=    \sup_{h\in \mathcal{L}}|\Exp_{Q}[h] - \Exp_{P}[h]|
  \leq \frac{C}{N}$.

  On the rays $(r_\ell, \infty)$ and $(-\infty, r_1)$, 
  by Proposition~\ref{prop:wass-half} (and the comment before it), we have
  $d(Q, P)=  \sup_{h\in \mathcal{L}} |\Exp_{Q}[h] - \Exp_{P}[h]| \leq C\frac{\sqrt{\log N}}{N}$.

  Note that the proportion of points in each region, $c_{N,k}=N_k/N$,
  differs from the probability $c_k=\int_{r_k}^{r_{k+1}} f dx$ of that region by $\calO\left(\frac 1N\right)$.
  Thus, we recover the statement in the theorem, by applying Proposition~\ref{prop:wass-mix}.
\end{proof}

\section{Proof of Theorem \ref{thm:grad}: Stability of the MIW sequences}
\label{grad-sect}
We first give the proof of Theorem \ref{thm:grad}, and then make remarks on the behavior of the MIW sequence with respect to Maxwellian density, $\ell=1$, at the zero $t=0$, in Section \ref{remarks-stab-sect}.

 Let $\eta(x) =  \frac{f'(x)}{f(x)}$.
The following notation will also be useful:
Let
$\nabla_n k = k_n - k_{n-1}$ with respect to a sequence $k=(k_n)_{n=1}^N$, and 
$\zeta_n = 1/\nabla_n x= 1/(x_n - x_{n-1})$ with respect to a sequence $x=(x_n)_{n=1}^N$.
Then, $\eta(x_n)=\nabla_{n+1}\zeta = (x_{n+1}-x_n)^{-1} - (x_n - x_{n-1})^{-1}$ is a compact restatement of the MIW relation \eqref{eq:hdw-seq}.

The gradient of $H$ in \eqref{time-indep-H} is computed as follows. 

\begin{lemma}
\label{grad H calc}
We have that
 \begin{align}
    \label{partial-H}
    \partial_{x_n} H
    &= 2 x_n
    - 2 \frac{\eta(x_{n+1}) - \eta(x_n)}{(x_{n+1} - x_n)^2}
    + 2 \frac{\eta(x_n) - \eta(x_{n-1})}{(x_n - x_{n-1})^2}
    \end{align}
    on an MIW sequence $(x_n)_{n=1}^N$ for $f$.
    \end{lemma}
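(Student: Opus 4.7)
The plan is straightforward differentiation, organized to exploit the structure $H = V + W$ with potential term $V = \sum_{n=1}^N x_n^2$ and quantum term $W = \sum_{m=1}^N w_m^2$, where $w_m = \zeta_{m+1} - \zeta_m = (x_{m+1}-x_m)^{-1} - (x_m - x_{m-1})^{-1}$ under the conventions $1/(x_1 - x_0) = 0$ and $1/(x_{N+1}-x_N) = 0$ coming from $x_0=-\infty$ and $x_{N+1} = \infty$. The contribution $\partial_{x_n} V = 2x_n$ is immediate, so the work lies entirely in computing $\partial_{x_n} W$.

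First I would identify exactly which summands of $W$ depend on $x_n$: for a generic interior index, only $w_{n-1}$, $w_n$, $w_{n+1}$ involve $x_n$, since $w_m$ depends on the triple $(x_{m-1},x_m,x_{m+1})$. A direct computation using $\partial_{x_n}(x_k - x_j)^{-1} = -(x_k-x_j)^{-2}\partial_{x_n}(x_k - x_j)$ gives
\begin{align*}
\partial_{x_n} w_{n-1} &= -(x_n - x_{n-1})^{-2}, \\
\partial_{x_n} w_n &= (x_{n+1}-x_n)^{-2} + (x_n - x_{n-1})^{-2}, \\
\partial_{x_n} w_{n+1} &= -(x_{n+1}-x_n)^{-2}.
\end{align*}
Substituting into $\partial_{x_n} W = 2\sum_{m} w_m \partial_{x_n} w_m$ and collecting by denominator yields the telescoped form
\[
\partial_{x_n} W = \frac{2(w_n - w_{n-1})}{(x_n - x_{n-1})^2} - \frac{2(w_{n+1} - w_n)}{(x_{n+1}-x_n)^2}.
\]
Finally, substituting the MIW relation $w_m = \eta(x_m)$ (which holds for $1 \leq m \leq N$ thanks to the left and right boundary conditions at $\mp\infty$, as these encode $w_1 = \eta(x_1)$ and $w_N = \eta(x_N)$) reproduces exactly the claimed expression for $\partial_{x_n} H$.

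The only mild obstacle is verifying the boundary cases $n \in \{1, 2, N-1, N\}$: for $n = 1$ the term $w_{n-1} = w_0$ is not present, but under the convention $\zeta_1 = 1/(x_1 - x_0) = 0$ one may formally set $w_0 := \zeta_1 - \zeta_0 = -\zeta_0$ so that $w_0$ makes no contribution to $W$ and $\partial_{x_1} w_0 = 0$, keeping the telescoped identity intact; an analogous remark applies at the right end with $\zeta_{N+1} = 0$. Alternatively, one simply checks these endpoint cases by hand, noting that the left boundary condition gives $\eta(x_1) = \zeta_2$ (consistent with $\zeta_1 = 0$) and the right boundary condition gives $\eta(x_N) = -\zeta_N$ (consistent with $\zeta_{N+1} = 0$), so the displayed formula specializes correctly. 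No deeper issue arises, and the identity is established.
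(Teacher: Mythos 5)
Your proposal is correct and follows essentially the same approach as the paper: identify the three summands of the quantum term that depend on $x_n$, differentiate, collect by denominator, and substitute the MIW relation $w_m = \eta(x_m)$ (with the boundary conditions supplying the $m=1$ and $m=N$ cases). The paper writes this in $\nabla$/$\zeta$ notation and defers the MIW substitution to the final line, but the computation is the same; your treatment of the boundary indices is a correct and slightly more explicit version of what the paper leaves implicit.
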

    
    \begin{proof}
 By the form of $H$, noting that only three terms in second sum of \eqref{time-indep-H} depend on $x_n$,
 \begin{align*}
 \frac{1}{2}\partial_{x_n}H &= x_n + (\nabla_n \zeta) \partial_{x_n} (\nabla_n \zeta) 
 + (\nabla_{n+1} \zeta) \partial_{x_n}(\nabla_{n+1}\zeta) + (\nabla_{n+2}\zeta) \partial_{x_n}(\nabla_{n+2}\zeta)\\
 &= x_n + (\nabla_n \zeta) \partial_{x_n} \zeta_n + (\nabla_{n+1}\zeta) \partial_{x_n}\zeta_{n+1} - (\nabla_{n+1}\zeta) \partial_{x_n} \zeta_n - (\nabla_{n+2}\zeta) \partial_{x_n} \zeta_{n+1}\\
 &=x_n - \zeta^2_{n}\nabla_n \zeta + \zeta^2_{n+1}\nabla_{n+1}\zeta + \zeta^2_n\nabla_{n+1}\zeta - \zeta^2_{n+1}\nabla_{n+2}\zeta\\
 &= x_n -\zeta^2_{n+1}\big(\nabla_{n+2}\zeta - \nabla_{n+1}\zeta\big) + \zeta^2_n\big(\nabla_{n+1}\zeta - \nabla_n\zeta\big).
 \end{align*}
 
 So far, we have not used the MIW relation $\nabla_{n+1}\zeta = \eta(x_n)$.  Now, inputting this, we obtain directly \eqref{partial-H}.
 \end{proof}

\begin{proof}[\textup{\textbf{Proof of Theorem~\ref{thm:grad}}}]
  Taylor expansion of $\eta$, at a point $x_n$ differing from a zero of $f$,
  gives
  \begin{align*}
    \frac{\eta(s) - \eta(x_{n})}{(s - x_{n})^2}
    &= \frac{\eta'(x_n)}{s - x_{n}}
    + \frac 12 \eta''(x_n) + r(s; x_n) 
  \end{align*}
  where the remainder
  $|r(s;x_n)| \leq |\eta'''(c)||s- x_n|$
  with $c$ between $s$ and $x_n$.

  Substituting into \eqref{partial-H}, with $s=x_{n+1}$ and also $s=x_{n-1}$, we obtain
  \begin{align}
    \partial_{x_n} H
    &= 2 x_n
    - 2\eta'(x_n)\left(\frac{1}{x_{n+1} - x_n} - \frac{1}{x_n - x_{n-1}}\right)
    - 2\eta''(x_n) - 2r(x_{n+1}; x_n) - 2r(x_{n-1}; x_n) \nonumber\\
    &= 2 x_n - 2\eta'(x_n)\eta(x_n) - 2\eta''(x_n) -2r(x_{n+1}; x_n) -2r(x_{n-1}; x_{n}).
        \label{partial-H1}
  \end{align}

  Recall that $\lim_{N \to \infty} x_{n(t)} = t$, and the `no gaps' property $\lim_{N\to \infty} x_{n+1}-x_n =0$ holds by Theorem \ref{thm:gaps}.  Therefore $\lim_{N\to \infty}r(x_{n(t)+1}; x_{n(t)})=0$, for $t \in \R$ not equal to a zero of $f$.
  Plugging in $n(t)$ for $n$ and taking the limit of both sides of \eqref{partial-H1} we have
  \begin{align*}
    \lim_{N \to \infty} \partial_{x_{n(t)}} H
    &= 2t - 2\eta'(t)\eta(t) - 2\eta''(t).
  \end{align*}

  Hence, for $t$ away from zeros of $f$,  $\lim_{N \to \infty} \partial_{n(t)} H = 0$
  exactly when $(x^2 + \eta^2(x) - 2\eta'(x))' = 0$ or when there is a constant $E$ such that
  \begin{equation}
    \label{H-ODE}
    x^2 = 2\eta'(x) + \eta^2(x) + E.
  \end{equation}
  However, the higher energy functions $f = cp^2_\ell(x)e^{-\frac{1}{2}x^2}$ satisfy \eqref{H-ODE}
  with $E = 4\ell + 2$, noting $p_\ell''(x) = xp_\ell' (x) -\ell p_\ell(x)$ and $\eta(x) = -x + \frac{2p_\ell'(x)}{p_\ell(x)}$, as
  \begin{align*}
    2\eta'(x) + \eta^2(x) + 4\ell + 2
    = 4 \frac{p_\ell''}{p_\ell} - 4x \frac{p_\ell'}{p_\ell} + x^2 + 4\ell
    = 4 \frac{x p_\ell' - \ell p_\ell}{p_\ell} - 4x \frac{p_\ell'}{p_\ell} + x^2 + 4\ell
    = x^2. 
  \end{align*}
Therefore, we conclude the proof of Theorem \ref{thm:grad}. \end{proof}

\subsection{Remarks on the form of stability}\label{remarks-stab-sect}

We demonstrate, with respect to the Maxwellian density $f$, $\ell=1$,
that the convergence in Theorem~\ref{thm:grad} does not hold with $t=0$,
the zero of $f$.
Note that $f'(x)/f(x) = -x + 2/x$.
Consider the MIW sequence $(x_n)_{n=1}^N$
with equal numbers of points to the left and right of zero.
By Corollary \ref{cor:Normalsymmetry}, such a sequence is symmetric. 
We will show in Lemma \ref{prop:graddiverge} that $\partial_{x_{n(0)+1}} H = -\partial_{x_{n(0)}} H$
diverges as $N \to \infty$.

First, although by Theorem \ref{thm:gaps},
$x_{n(0)+1}\rightarrow 0$ as $N\uparrow\infty$,
we give a rate in the following statement.

\begin{lemma}
  \label{lem:Max0}
  We have $x_{n(0)+1} = \calO(1/N^{1/3})$ as $N\uparrow\infty$.  
\end{lemma}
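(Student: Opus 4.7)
The plan is to uncover a telescoping identity arising from the Maxwellian MIW recursion that pins down $\sum y_k^2$ exactly, bootstrap from it a matching lower bound $y_n \gtrsim y_1 n^{1/3}$, and then re-insert this bound to force $y_1 = \calO(N^{-1/3})$. Invoking the symmetry from Corollary \ref{cor:Normalsymmetry}, I will write $y_n := x_{n(0)+n}$ for $1 \leq n \leq M := N/2$, set $y_0 := -y_1$ (the reflection) and $y_{M+1} := \infty$, and put $\phi_n := 1/(y_{n+1} - y_n)$, so the Maxwellian MIW relation reads $\phi_n - \phi_{n-1} = 2/y_n - y_n$ for $1 \leq n \leq M$ with $\phi_M = 0$.

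Combining this recursion with the defining identity $y_{n+1} = y_n + 1/\phi_n$ will produce the clean telescoping law $y_{n+1}\phi_{n+1} - y_n\phi_n = 3 - y_{n+1}^2$, which on summing from $n=0$ (with $y_0 \phi_0 = -1/2$) yields the master identity
\[
y_n\phi_n = 3n - \tfrac{1}{2} - \sum_{k=1}^n y_k^2, \qquad 1 \leq n \leq M.
\]
At $n = M$, $\phi_M = 0$ produces the exact relation $\sum_{k=1}^M y_k^2 = 3M - 1/2$. For $n < M$, the positivity $y_n \phi_n > 0$ together with the monotone bound $y_k \geq y_1$ gives $y_n\phi_n \leq n(3 - y_1^2) \leq 3n$ (using $y_1 \to 0$ from Theorem \ref{thm:gaps}), and hence $y_{n+1} - y_n \geq y_n/(3n)$. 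Iterating this multiplicative inequality together with the standard estimate $\prod_{k=1}^{n-1}(1+1/(3k)) = \Gamma(n+1/3)/(\Gamma(4/3)\Gamma(n)) \sim n^{1/3}/\Gamma(4/3)$ will deliver the uniform bound $y_n \geq c\, y_1\, n^{1/3}$ for $1 \leq n \leq M$ with an absolute constant $c > 0$.

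Substituting this lower bound into the master identity at $n = M$ and bounding $\sum_{k=1}^M k^{2/3} \geq (3/5) M^{5/3}$ yields $c^2 y_1^2 M^{5/3} = \calO(M)$, whence $y_1 = \calO(M^{-1/3}) = \calO(N^{-1/3})$. The main obstacle I expect is spotting the particular combination that telescopes to $3 - y_{n+1}^2$; once the master identity is in hand, the bootstrap and final substitution are routine. I note that Wasserstein-$1$ convergence from Theorem \ref{thm:converge} applied directly through test functions such as $(\epsilon - |x|)_+/\epsilon$ only yields the strictly weaker bound $y_1 = \calO((\log N)^{1/8}\, N^{-1/4})$, so genuine use of the MIW recursion near the symmetric zero appears unavoidable.
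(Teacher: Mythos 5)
Your proof is correct, and it takes a genuinely different route from the paper's. The paper multiplies the Maxwellian MIW relation by $1/x_n^2$, sums, and performs a summation by parts to reach
\[
\frac{1}{x_{k+1}^2}\frac{1}{x_{k+1}-x_k}
= \sum_{n=k+1}^N \Big(\frac{1}{x_n^2}-\frac{1}{x_{n+1}^2}\Big)\frac{1}{x_{n+1}-x_n}
 - \sum_{n=k+1}^N \frac{1}{x_n^2}\Big(\frac{2}{x_n}-x_n\Big),
\]
then bounds the error sums by absolute constants and finally invokes the Wasserstein-$1$ convergence of Theorem~\ref{thm:converge} together with Fatou's lemma to evaluate $\frac{1}{N}\sum_{n>k}1/x_n$ in the limit, yielding $Nx_{k+1}^3 = \calO(1)$. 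You instead multiply by $y_n$ rather than $1/y_n^2$; the resulting combination telescopes exactly (a quick check confirms $y_{n+1}\phi_{n+1}-y_n\phi_n = y_{n+1}(\phi_{n+1}-\phi_n)+(y_{n+1}-y_n)\phi_n = (2-y_{n+1}^2)+1$), and the boundary values $y_0\phi_0 = -1/2$ and $\phi_M = 0$ close the sum, giving the exact identity $\sum_{k=1}^M y_k^2 = 3M - \tfrac12$, a pleasing analogue of the ground-state identity $\sum x_n^2 = N-1$. Your bootstrap $y_n\phi_n \leq 3n \Rightarrow y_{n+1}\geq y_n(1+\tfrac{1}{3n}) \Rightarrow y_n \gtrsim y_1 n^{1/3}$, fed back into the exact identity, is entirely elementary and self-contained; it does not rely on the weak-convergence machinery of Theorem~\ref{thm:converge}, which is an advantage both in terms of logical economy and in avoiding the slightly delicate Fatou step for the unbounded test function $1/t$. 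The parenthetical appeal to $y_1 \to 0$ is unnecessary (the bound $y_n\phi_n \leq 3n$ follows from $y_n\phi_n > 0$ and $\sum y_k^2 > 0$ alone), but this is cosmetic. Both proofs need $N\geq 3$ so that the MIW relation and boundary condition apply at the indices used; worth a passing remark, but not a gap.
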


An intuitive understanding of the rate $x_{n(0)+1} = \calO(N^{-1/3})$ is as follows.
By the MIW ansatz, $x_{n(0)+1}$ should approximate a $1/(N+1)$ quantile,
that is one expects $\int_{x_{n(0)}}^{x_{n(0)+1}} f(t)\ dt = \calO(1/N)$
as indeed follows, since $x_{n(0)+1} = -x_{n(0)}$,
and $f(t) = \calO(t^2)$ for $t\downarrow 0$.

\begin{lemma}
  \label{prop:graddiverge}
  The gradient of $H$, with respect to the Maxwellian density $f$,
  at $x_{n(0)+1}$ is of lower and upper order $x_{n(0)+1}^{-3}$,
  and therefore is bounded below of order $N$, by 
  Lemma \ref{lem:Max0}, as $N\uparrow\infty$.
\end{lemma}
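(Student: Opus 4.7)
The plan is to apply Lemma \ref{grad H calc} directly at $n = n(0)+1$ and extract the leading singular behavior by exploiting symmetry and the MIW recursion. Set $a = x_{n(0)+1}$, so by Corollary \ref{cor:Normalsymmetry}, $x_{n(0)} = -a$. For the Maxwellian density, $\eta(x) = -x + 2/x$, so the backward-difference term evaluates exactly:
\[
  2\,\frac{\eta(a)-\eta(-a)}{(2a)^2} \;=\; 2\cdot\frac{-2a + 4/a}{4a^2} \;=\; -\frac{1}{a} + \frac{2}{a^{3}}.
\]
This already contributes at order $a^{-3}$; the question is whether the forward-difference term cancels it.

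Next, I would solve for $x_{n(0)+2}$ using the MIW relation \eqref{eq:hdw-seq} at $x_{n(0)+1}$:
\[
  \frac{1}{x_{n(0)+2}-a} \;=\; \frac{1}{2a} + \eta(a) \;=\; \frac{5}{2a} - a,
\]
so $x_{n(0)+2}-a = 2a/5 + \calO(a^{3})$ and $x_{n(0)+2} = 7a/5 + \calO(a^{3})$ as $a\downarrow 0$. Plugging into $\eta$ gives $\eta(x_{n(0)+2}) = -7a/5 + 10/(7a) + \calO(a)$, and hence $\eta(x_{n(0)+2})-\eta(a) = -4/(7a) + \calO(a)$ at the singular scale. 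Dividing by $(x_{n(0)+2}-a)^{2} = 4a^{2}/25 + \calO(a^{4})$ yields a forward-difference contribution
\[
  -2\,\frac{\eta(x_{n(0)+2})-\eta(a)}{(x_{n(0)+2}-a)^{2}}
  \;=\; \frac{50}{7\,a^{3}} + \calO(a^{-1}).
\]

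Summing the two difference terms together with the $2x_n = 2a$ piece of \eqref{partial-H}, the residual $1/a$ singularities combine into a finite contribution at scale $a^{-1}$, while the $a^{-3}$ coefficients add (they do not cancel):
\[
  \partial_{x_{n(0)+1}} H \;=\; \Big(\tfrac{50}{7} + 2\Big)\frac{1}{a^{3}} + \calO(a^{-1}) \;=\; \frac{64}{7}\,x_{n(0)+1}^{-3} + \calO\bigl(x_{n(0)+1}^{-1}\bigr).
\]
This gives both the upper and lower orders $x_{n(0)+1}^{-3}$. The order-$N$ lower bound then follows from $x_{n(0)+1}^{-3}\gtrsim N$ by Lemma \ref{lem:Max0}.

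The main obstacle is purely computational and consists of verifying that the coefficients of $a^{-3}$ in the forward and backward difference quotients do not cancel; this requires carrying the expansion of $x_{n(0)+2}-a$ to two terms (not just the leading $2a/5$) but the $\calO(a^{3})$ correction only perturbs the subleading $a^{-1}$ terms, so the verification reduces to the arithmetic above. No further analysis of higher-index terms $x_{n(0)+3}, x_{n(0)-1}$, etc., is needed since \eqref{partial-H} only involves nearest neighbors.
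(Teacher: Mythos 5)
Your computation is correct and arrives at the same leading coefficient ($64/7 \approx 9.14$) that the paper's exact algebraic identity produces, but the two arguments travel different roads. The paper manipulates the expression from Lemma~\ref{grad H calc} using only the MIW relation to reach the \emph{exact} identity
\[
  \partial_{x_{n(0)+1}} H \;=\; \frac{4}{x_{n(0)+1}^{3}}\left(3 - \frac{x_{n(0)+1}}{x_{n(0)+2}}\right),
\]
from which the bracket lies strictly in $(2,3)$ because $0 < x_{n(0)+1} < x_{n(0)+2}$, giving the non-asymptotic two-sided bound $8\,x_{n(0)+1}^{-3} < \partial_{x_{n(0)+1}}H < 12\,x_{n(0)+1}^{-3}$ for \emph{every} $N$ for which the sequence is defined. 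Your route instead Taylor-expands $\eta$ at $a = x_{n(0)+1}$ and $x_{n(0)+2}$, using the MIW recursion to pin down $x_{n(0)+2} - a = \tfrac{2a}{5} + \calO(a^{3})$, and extracts the singular coefficients; the outcome $\frac{64}{7}a^{-3} + \calO(a^{-1})$ is only an asymptotic statement as $a \downarrow 0$, which is enough here since $a \to 0$ by Theorem~\ref{thm:gaps}. The paper's calculation is cleaner and gives a uniform bound; yours has the small advantage of exposing the precise leading constant $64/7$ (confirmed to lie between the paper's $8$ and $12$). One stylistic caution: your claim that ``the residual $1/a$ singularities combine into a finite contribution'' is loose wording, since $a^{-1}$ still diverges; what you mean (and what your $\calO(a^{-1})$ notation correctly conveys) is that these are lower order than $a^{-3}$. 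Also, as you note, the $\calO(a^{3})$ correction to $x_{n(0)+2}-a$ only perturbs the subleading scale; your check of this point is right, because that correction feeds into both the numerator and the squared denominator at relative order $a^{2}$, leaving the $a^{-3}$ coefficient untouched.
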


We first argue Lemma \ref{prop:graddiverge} and then Lemma \ref{lem:Max0}.

\begin{proof}[\textup{\textbf{Proof of Lemma~\ref{prop:graddiverge}}}]
  By explicit computation, applying the form of $f$ in \eqref{partial-H}, we have
  \begin{align*}
    \frac 12 \partial_{x_{n}} H
  &= x_{n}
  + \frac{x_{n+1} - \frac{2}{x_{n+1}} - x_{n} + \frac{2}{x_{n}}} {(x_{n+1} - x_{n})^2}
  - \frac{x_{n} - \frac{2}{x_{n}} - x_{n-1} + \frac{2}{x_{n-1}}} {(x_{n} - x_{n-1})^2} \\
  &= x_{n} + \frac{1}{x_{n+1} - x_{n}} - \frac{1}{x_{n} - x_{n-1}}
  - 2\frac{x_{n+1}^{-1} - x_{n}^{-1}}{(x_{n+1} - x_{n})^2}
  + 2\frac{x_{n}^{-1} - x_{n-1}^{-1}}{(x_{n} - x_{n-1})^2}.
  \end{align*}
  By the MIW relation, the right-hand side reduces to
  \begin{align}
  \label{grad1} 
    \frac{2}{x_n}
    - 2\frac{x_{n+1}^{-1} - x_{n}^{-1}}{(x_{n+1} - x_{n})^2}
    + 2\frac{x_{n}^{-1} - x_{n-1}^{-1}}{(x_{n} - x_{n-1})^2}.
  \end{align}

    After algebra,
    \begin{align*}
      -\frac{x_{n+1}^{-1} - x_{n}^{-1}}{(x_{n+1} - x_{n})^2}
    &= \frac{x_{n+1} - x_n}{x_n x_{n+1}(x_{n+1} - x_{n})^2} \ = \ \frac{1}{x_n^2} \frac{x_n}{x_{n+1}(x_{n+1} - x_n)} \\
    &= \frac{1}{x_n^2} \frac{x_{n+1} - (x_{n+1} - x_n)}{x_{n+1}(x_{n+1} - x_n)} \ = \
     \frac{1}{x_n^2} \left( \frac{1}{x_{n+1} - x_n} - \frac{1}{x_{n+1}} \right),
    \end{align*}
    and likewise
    \begin{align*}
      \frac{x_{n}^{-1} - x_{n-1}^{-1}}{(x_{n} - x_{n-1})^2}
    &= \frac{1}{x_n^2} \left( -\frac{1}{x_n - x_{n-1}} - \frac{1}{x_{n-1}} \right).
    \end{align*}
    By these relations, and the MIW relation again, \eqref{grad1} becomes
    \begin{align*} 
  & \frac{2}{x_n}
    + \frac{2}{x_n^2} \left( \frac{1}{x_{n+1} - x_n} - \frac{1}{x_{n+1}}
    -\frac{1}{x_n - x_{n-1}} - \frac{1}{x_{n-1}} \right) 
    \ = \ -\frac{2}{x_n^2} \left( \frac{1}{x_{n+1}} - \frac{2}{x_n} + \frac{1}{x_{n-1}} \right) \\
    &\ \ \ \ = \frac{2}{x_n^3} \left( 1 - \frac{x_n}{x_{n+1}} - \Big(\frac{x_n}{x_{n-1}} - 1\Big) \right) 
    \ = \ \frac{2}{x_n^3} \left( \frac{x_{n+1} - x_n}{x_{n+1}} - \frac{x_n - x_{n-1}}{x_{n-1}} \right).
    \end{align*}
  
    Choose $n = n(0) + 1$.
    By symmetry, $x_{n(0)} = -x_{n(0)+1}$.  So,
    \begin{align*}
      \frac{2}{x_{n(0)+1}^3} \left( \frac{x_{n(0)+2} - x_{n(0)+1}}{x_{n(0)+2}} - \frac{x_{n(0)+1} - x_{n(0)}}{x_{n(0)}} \right)
    &= \frac{2}{x_{n(0)+1}^3} \left( \frac{x_{n(0)+2} - x_{n(0)+1}}{x_{n(0)+2}} + \frac{x_{n(0)+1} + x_{n(0)+1}}{x_{n(0)+1}} \right) \\
    &= \frac{2}{x_{n(0)+1}^3} \left( 3 - \frac{x_{n(0)+1}}{x_{n(0)+2}} \right).
    \end{align*}
   Then,  
      $\partial_{n(0)+1} H
      = \frac{4}{x_{n(0)+1}^3} \left( 3 - \frac{x_{n(0)+1}}{x_{n(0)+2}} \right)$.
    Since $0 < x_{n(0)+1} < x_{n(0)+2}$, we conclude as desired
    \[
      8 x_{n(0)+1}^3
      < \partial_{x_{n(0)+1}} H
      < 12 x_{n(0)+1}^3. \qedhere \]
\end{proof}

\begin{proof}[\textup{\textbf{Proof of Lemma~\ref{lem:Max0}}}]
  Multiply the MIW relation by $x_n^2$ and sum over $n$ to obtain
  \begin{align*}
    \sum_{n=k+1}^N \frac{1}{x_n^2}\Big(\frac{1}{x_{n+1}-x_n} - \frac{1}{x_n-x_{n-1}}\Big)
    = \sum_{n=k+1}^N \frac{1}{x_n^2} \Big(\frac{2}{x_n} - x_n\Big).
  \end{align*}
  Summing-by-parts on the left-hand side, and then rearranging, we obtain
  \begin{align}
    \label{maxhelp}
    \frac{1}{x_{k+1}^2}\frac{1}{x_{k+1} - x_k} = \sum_{n=k+1}^N \Big(\frac{1}{x_n^2} - \frac{1}{x_{n+1}^2}\Big) \frac{1}{x_{n+1} - x_n} - \sum_{n=k+1}^N \frac{1}{x_n^2} \Big(\frac{2}{x_n} - x_n\Big).
  \end{align}
  Note 
  $$\Big(\frac{1}{x_n^2} - \frac{1}{x_{n+1}^2}\Big) \frac{1}{x_{n+1} - x_n} = \frac{x_n-x_{n+1}}{x_n^2x_{n+1}^2} + \frac{2}{x_n^2x_{n+1}},$$
  and $\frac{-1}{x_n^2} \big(\frac{2}{x_n} - x_n\big) = \frac{1}{x_n}-\frac{2}{x_n^3}$.  We also note by symmetry that $x_{n(0)+1} -x_{n(0)} = 2x_{n(0)+1}$.

  Putting these things together, with $k= n(0)$, \eqref{maxhelp} multiplied by $2x_{k+1}^3$ becomes
  \begin{align*}
    1 = 2x_{k+1}^3 \sum_{n=k+1}^N \frac{1}{x_n} + 2x_{k+1}^3\sum_{n=k+1}^N \frac{1}{x_n^2} \Big( \frac{2}{x_{n+1}} - \frac{2}{x_n} + \frac{x_n - x_{n+1}}{x_{n+1}^2}\Big).
  \end{align*}

  Since the HDW sequence $(x_n)_{n=1}^N$ is increasing, and $x_n>0$ for $n\geq k+1=n(0)+1$, we have
  $$x^3_{k+1}\sum_{n=k+1}^N \frac{1}{x_{n+1}^2} \Big|\frac{2}{x_{n+1}} - \frac{2}{x_n}\Big|
  \leq x_{k+1}\sum_{n=k+1}^N \Big(\frac{2}{x_n} - \frac{2}{x_{n+1}}\Big) = \frac{2x_{k+1}}{x_{k+1}} = 2$$
  and
  $$x^3_{k+1}\sum_{n=k+1}^N \frac{1}{x_{n+1}^2} \frac{|x_n-x_{n+1}|}{x_{n+1}^2} \leq x_{k+1}\sum_{n=k+1}^N \frac{x_{n+1}-x_n}{x_nx_{n+1}} = x_{k+1}\sum_{n=k+1}^N \Big(\frac{1}{x_n}-\frac{1}{x_{n+1}}\Big) = 1.$$

  Therefore, we have
  $$1 + 6 \geq 2Nx_{k+1}^3 \cdot \frac{1}{N}\sum_{n=k+1}^N \frac{1}{x_n}.$$
 Moreover, we conclude
  $7\geq 2Nx_{k+1}^3\cdot \frac{1}{2}\int_0^\infty \frac{1}{t} \sqrt{\frac{8}{\pi}}t^2 e^{-\frac{1}{2}t^2}dt$ by Fatou's lemma, with respect to the weak convergence in Theorem \ref{thm:converge},
  for all large $N$.  The result now follows. 
\end{proof}

\appendix

\section{Properties of higher energy functions}

We consider several properties of higher energy functions that will be needed.

\begin{lemma}\label{lem:dbl-zro}
  Suppose $f$ is an $n$th order higher energy function.
  Then $f$ has exactly $n$ real roots.
  These roots are distinct, real, and have order two.

  Moreover, near a root $a$, $f(x) = \frac 12 f''(a) (x-a)^2 + \calO((x-a)^3)$
  and $f'(x) = f''(a) (x-a) + \calO((x-a)^2)$,
  where $f''(a) \neq 0$.
\end{lemma}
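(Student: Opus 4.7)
The plan is to reduce everything to a classical fact about Hermite polynomials: the $\ell$th Hermite polynomial $p_\ell$ has exactly $\ell$ real roots, all simple. Since $f(x) = c p_\ell(x)^2 e^{-x^2/2}$ with $c>0$ and the Gaussian factor strictly positive, the roots of $f$ coincide with those of $p_\ell$, each promoted to a root of order two by the squaring.

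First I would establish that $p_\ell$ has $\ell$ simple real roots via Rolle's theorem applied inductively to the Rodrigues-type formula $p_\ell(x) e^{-\frac 12 x^2} = (-1)^\ell \frac{d^\ell}{dx^\ell} e^{-\frac 12 x^2}$. By induction on $k$, one checks that $\frac{d^k}{dx^k} e^{-\frac 12 x^2}$ takes the form $q_k(x) e^{-\frac 12 x^2}$ for a degree-$k$ polynomial $q_k$, so in particular it tends to $0$ as $x \to \pm\infty$. If $\frac{d^{k-1}}{dx^{k-1}} e^{-\frac 12 x^2}$ has (at least) $k-1$ real zeros $z_1 < \cdots < z_{k-1}$, Rolle's theorem produces a zero of its derivative in each of the $k-2$ bounded gaps, while the decay at $\pm\infty$ combined with the mean value theorem produces one more zero in each of $(-\infty, z_1)$ and $(z_{k-1}, \infty)$, yielding $k$ zeros total. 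Since $p_\ell$ is a polynomial of degree $\ell$, these $\ell$ zeros are all of its zeros and they must be simple.

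Next, for the local expansion, let $a$ be a root of $p_\ell$, so $p_\ell(a)=0$ and $p_\ell'(a) \neq 0$ by simplicity. Taylor expanding,
\[
p_\ell(x) = p_\ell'(a)(x-a) + \tfrac 12 p_\ell''(a)(x-a)^2 + \calO((x-a)^3),
\]
whence $p_\ell(x)^2 = (p_\ell'(a))^2 (x-a)^2 + \calO((x-a)^3)$. Multiplying by $c e^{-\frac 12 x^2}$, whose Taylor expansion at $a$ is $c e^{-\frac 12 a^2}(1 + \calO(x-a))$, yields
\[
f(x) = c(p_\ell'(a))^2 e^{-\frac 12 a^2}(x-a)^2 + \calO((x-a)^3).
\]
Differentiating twice and evaluating at $a$ gives $f''(a) = 2c(p_\ell'(a))^2 e^{-\frac 12 a^2} \neq 0$, and the displayed expansion is exactly $f(x) = \tfrac 12 f''(a)(x-a)^2 + \calO((x-a)^3)$. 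Term-by-term differentiation of this Taylor series then produces the claimed expansion for $f'$.

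The main obstacle is the zero-count argument for $p_\ell$: one must handle both the finite bounded gaps (routine Rolle) and the two unbounded tails, which requires knowing that each $\frac{d^k}{dx^k} e^{-\frac 12 x^2}$ vanishes at $\pm\infty$ so that a zero is produced in each unbounded interval. Once that is in place, both the root count for $f$ and the local expansions are immediate from the fact that each root of $p_\ell$ is simple.
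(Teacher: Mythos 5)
Your proposal is correct and follows essentially the same route as the paper: both reduce to the classical fact that $p_\ell$ has $\ell$ simple real roots, proved by induction on derivatives of the Gaussian via Rolle's theorem, after which the order-two roots of $f$ and the local Taylor expansion are immediate. The paper states the induction and Taylor steps without detail; your write-up simply fills them in (including the correct handling of the two unbounded intervals, where vanishing of $\frac{d^k}{dx^k} e^{-x^2/2}$ at $\pm\infty$ forces an extra critical point in each tail).
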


\begin{proof}
  The Hermite polynomial $p_n(x)= e^{\frac 12 x^2} g_n(x)$, where $g_n(x) = \frac{d^n}{dx^n} e^{-\frac 12 x^2}$, has exactly $n$ simple real roots, as $g_n$ has exactly $n$ simple real roots, which can be shown by induction.
  These are exactly the roots of $f(x) = cp^2_n(x) g_0(x)$ which are of order $2$.  The remaining statements follows by standard Taylor expansion.
\end{proof}

\begin{lemma}\label{lem:higher-order}
  The higher energy functions are log-concave,
  in the sense that $(\log f)''(x) < 0$ when $(\log f)''(x)$ is defined.
\end{lemma}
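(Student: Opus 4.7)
The plan is to compute $(\log f)''$ explicitly using the factorization of the Hermite polynomial $p_\ell$ into its simple real roots.

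First, I would write
\[
  \log f(x) = \log c + 2 \log |p_\ell(x)| - \tfrac{1}{2} x^2
\]
at points $x$ where $p_\ell(x) \neq 0$, so that
\[
  (\log f)'(x) = 2 \frac{p_\ell'(x)}{p_\ell(x)} - x.
\]
By Lemma \ref{lem:dbl-zro}, $p_\ell$ has exactly $\ell$ distinct simple real roots $r_1 < \cdots < r_\ell$, so up to a nonzero leading constant $p_\ell(x) = \prod_{i=1}^\ell (x - r_i)$. Logarithmic differentiation then gives the partial-fraction expansion
\[
  \frac{p_\ell'(x)}{p_\ell(x)} = \sum_{i=1}^\ell \frac{1}{x - r_i},
\]
valid away from the roots.

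Differentiating once more,
\[
  (\log f)''(x) = -1 + 2 \frac{d}{dx}\!\left[\frac{p_\ell'(x)}{p_\ell(x)}\right]
  = -1 - 2 \sum_{i=1}^\ell \frac{1}{(x - r_i)^2}.
\]
Each term on the right is strictly negative wherever it is defined, so $(\log f)''(x) < -1 < 0$ at every $x$ not equal to a root of $p_\ell$, which is precisely where $(\log f)''$ is defined.

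There is no real obstacle here; the only step worth flagging is to make sure that "defined" is interpreted as "away from the zeros of $f$," which is consistent with the fact that $\log f = -\infty$ at the roots and so $(\log f)''$ has no meaning there. Everything else follows from Lemma \ref{lem:dbl-zro} and a direct computation.
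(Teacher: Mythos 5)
Your proof is correct and follows essentially the same route as the paper's: both factor $p_\ell$ over its distinct simple real roots to obtain the partial-fraction identity $p_\ell'/p_\ell = \sum_i (x-r_i)^{-1}$. The only cosmetic difference is that you differentiate once more to get the explicit formula $(\log f)''(x) = -1 - 2\sum_i (x-r_i)^{-2}$, whereas the paper stops at observing that $(\log f)'(x) = 2p_\ell'(x)/p_\ell(x) - x$ is a sum of strictly decreasing functions, which says the same thing.
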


\begin{proof}
  Let $f(x)$ be a higher energy function.
  The log-derivative $(\log f)'(x) = 2\frac{p'_n(x)}{p_n(x)} - x$.
  The term $-x$ is decreasing on $\R$.
  We will show that $\frac{p'_n(x)}{p_n(x)}$ is decreasing as well,
  for $x \in \R$ such that $p_n(x) \neq 0$.
  In the case $n = 0$, the normal case,
  $p_n$ is constant, so it is decreasing.

  Suppose the order of $p_n$ is $n\geq 1$.
  We factor $p_n(x)$,
  $ p_n(x) = \prod_{k=1}^n (x - r_i)$,
  where $r_i$ are the roots of $p_n$.
  Then,
  $p_n'(x) = \sum_{j=1}^n \prod_{\substack{k=1 \\ j \neq k}}^n (x - r_i)$.
  Therefore, the quotient
  \begin{align*}
    \frac{p_n'(x)}{p_n(x)}
    &= \frac{\sum_{j=1}^n \prod_{\substack{k=1 \\ j \neq k}}^n (x - r_i)}
    {\prod_{k=1}^n (x - r_i)}
    = \sum_{j=1}^n \frac{\prod_{\substack{k=1 \\ j \neq k}}^n (x - r_i)}
    {\prod_{k=1}^n (x - r_i)}
    = \sum_{j=1}^n \frac{1}{x-r_j}.
  \end{align*}
  This is a sum of decreasing functions, and
  so $\frac{p_n'(x)}{p_n(x)}$ is decreasing.
\end{proof}

\begin{lemma}
  \label{lem:3.5}
  Let $f$ be a higher energy function defined on $(-\infty, \infty)$.
  The limits $\lim_{x \to -\infty} \frac{f'(x)}{f(x)} = \infty$,
  and $\lim_{x \to \infty} \frac{f'(x)}{f(x)} = -\infty$.
  As well, if $f(r) = 0$ for some $r \in \R$,
  then $\frac{f'(x)}{f(x)} = \frac{2}{x-r} + O(1)$ as $x\rightarrow r$.  In particular, $\lim_{x \to r^-} \frac{f'(x)}{f(x)} = -\infty$,
  and $\lim_{x \to r^+} \frac{f'(x)}{f(x)} = \infty$.
\end{lemma}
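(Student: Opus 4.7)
The plan is to reduce everything to an explicit partial-fraction expansion of $f'/f$, so that the three limits can simply be read off. Since $f(x)=cp_\ell(x)^2 e^{-x^2/2}$, taking logarithmic derivatives gives
\[
  \frac{f'(x)}{f(x)} = 2\,\frac{p_\ell'(x)}{p_\ell(x)} - x,
\]
which is the key identity; everything else is analysis of the rational term.

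By Lemma~\ref{lem:dbl-zro}, $p_\ell$ has exactly $\ell$ simple real roots; call them $r_1<\cdots<r_\ell$ (identified with the roots of $f$, each of order two). Writing $p_\ell(x)=\alpha\prod_{j=1}^\ell(x-r_j)$, the computation already performed in the proof of Lemma~\ref{lem:higher-order} yields the partial-fraction form
\[
  \frac{p_\ell'(x)}{p_\ell(x)} = \sum_{j=1}^\ell \frac{1}{x-r_j},
\]
so
\[
  \frac{f'(x)}{f(x)} = \sum_{j=1}^\ell \frac{2}{x-r_j} - x.
\]
(When $\ell=0$ the sum is empty and $f'/f=-x$, for which all three assertions are immediate and the root statement is vacuous.)

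From this formula the three claims are essentially inspection. First, as $x\to\pm\infty$ every term $\frac{2}{x-r_j}$ tends to $0$, while $-x\to\mp\infty$; this gives the first two limits. Second, fix a root $r=r_k$. For $x$ in a punctured neighbourhood of $r_k$, the sum $\sum_{j\neq k}\frac{2}{x-r_j}-x$ is continuous and hence bounded, while $\frac{2}{x-r_k}$ diverges. Therefore
\[
  \frac{f'(x)}{f(x)} = \frac{2}{x-r_k} + O(1) \quad\text{as } x\to r_k,
\]
from which the one-sided limits $\mp\infty$ at $r_k^\mp$ follow. There is no real obstacle here; the only care needed is the standard observation that the Hermite roots are simple and real, which has been recorded in Lemma~\ref{lem:dbl-zro}, so the partial-fraction expansion above is legitimate.
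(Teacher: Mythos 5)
Your proof is correct and follows essentially the same route as the paper's: both reduce to the identity $\frac{f'}{f}=2\sum_{j}\frac{1}{x-r_j}-x$ (borrowed from the computation in Lemma~\ref{lem:higher-order}) and read off all limits from it. You additionally spell out the $O(1)$ claim near a root and the vacuous $\ell=0$ case, which the paper leaves implicit, but there is no substantive difference.
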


\begin{proof}
  As shown in the proof of Lemma~\ref{lem:higher-order},
  we can write
  \begin{align*}
    \frac{f'(x)}{f(x)}
    = (\log f)'(x)
    = 2\frac{p'_n(x)}{p_n(x)} - x
    = 2\sum_{j=1}^n \frac{1}{x-r_j} - x.
  \end{align*}
  As $x \to \pm \infty$, $f'(x)/f(x) \rightarrow \mp \infty$.

  For a root $r$ of $f$,
  the term $\frac{1}{x - r}$ is the only unbounded term near $r$.
  Hence, as $x\rightarrow r^-$, $f'(x)/f(x) \rightarrow -\infty$.  Likewise, as $x\rightarrow r^+$, $f'(x)/f(x) \rightarrow \infty$.
\end{proof}

Although we will apply the following lemma for $p=p_\ell$, the integration-by-parts formula is stated more generally.
\begin{lemma}\label{lem:poly-exp}
  Given a polynomial $p$ of degree $n \geq 2$, there is a polynomial $q$ of degree $n-2$ 
	satisfying
  \begin{align}
	\label{qhelp}
    p(x) - p(0) = (1 - x^2) q(x) + xq'(x) - q(0),
  \end{align}
	so that
  \begin{align*}
    \int_{-\infty}^x p(t) e^{-\frac 12 t^2}\ dx
    = xq(x) e^{-\frac 12 x^2} - (q(0) - p(0)) \int_{-\infty}^x e^{-\frac 12 t^2}\ dt.
  \end{align*}
\end{lemma}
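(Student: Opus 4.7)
The proof factors cleanly into two tasks: constructing the polynomial $q$ (an algebraic identity) and verifying the integral identity (essentially antidifferentiation against a Gaussian). I plan to treat the integral identity first, taking the polynomial identity as given, since it is the easier half.

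For the integral identity, I would define $\Phi(x) = xq(x)e^{-x^2/2}$ and compute
\[
\Phi'(x) = \bigl[(1-x^2)q(x) + xq'(x)\bigr]\,e^{-x^2/2} = \bigl[p(x) - (p(0)-q(0))\bigr]\,e^{-x^2/2},
\]
where the second equality uses the claimed polynomial identity. Since $\Phi(t)\to 0$ as $t\to-\infty$ by polynomial-against-Gaussian decay, integrating from $-\infty$ to $x$ gives
\[
xq(x)e^{-x^2/2} = \int_{-\infty}^x p(t)e^{-t^2/2}\,dt \;-\; (p(0)-q(0))\int_{-\infty}^x e^{-t^2/2}\,dt,
\]
which is the claim after trivial rearrangement.

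For the polynomial identity, I would expand $p(x)=\sum_{k=0}^n c_k x^k$ and seek $q(x)=\sum_{k=0}^{n-2} a_k x^k$. A direct computation yields
\[
(1-x^2)q(x) + xq'(x) - q(0) = 2a_1 x + \sum_{k=2}^{n-2}\bigl[(k+1)a_k - a_{k-2}\bigr]x^k - a_{n-3}\,x^{n-1} - a_{n-2}\,x^n,
\]
and matching this against $p(x)-p(0)=\sum_{k=1}^n c_k x^k$ gives the initial data $a_{n-2}=-c_n$ and $a_{n-3}=-c_{n-1}$ from the top two degrees, together with the downward two-step recursion $a_{k-2}=(k+1)a_k - c_k$ for $2\le k\le n-2$. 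This recursion uniquely determines all remaining $a_k$, so existence and uniqueness of $q$ reduce to verifying one compatibility condition.

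The main subtlety will be the consistency of the lowest-degree equation $2a_1=c_1$ with the value of $a_1$ produced by the odd-index subchain of the recursion seeded by $a_{n-3}$. In the intended application (Lemma~\ref{lem:bdd-snd}) the polynomial is $p=p_\ell^2$, which has degree $n=2\ell$ and is even, so every odd coefficient $c_{2k+1}$ vanishes; then $a_{n-3}=-c_{n-1}=0$ and the odd-index chain gives $a_1=0 = c_1/2$ automatically, while the even-index chain seeded by $a_{n-2}=-c_n$ produces the nonzero even coefficients. A parity argument of this form is essentially the only place where the construction could fail, and it works cleanly for even $p$, which covers every use of the lemma in the paper.
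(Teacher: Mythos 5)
Your approach matches the paper's on both halves: differentiate $xq(x)e^{-x^2/2}$ for the integral identity, and construct $q$ by coefficient-matching. But you carried the coefficient matching through carefully enough to notice something the paper's proof (which only says ``a polynomial solution $\ldots$ may be found by finding iteratively the coefficients'') does not address: the linear system is over-determined, with $n$ equations (coefficients of $x^1,\dots,x^n$; the constant terms vanish identically on both sides) in only $n-1$ unknowns $a_0,\dots,a_{n-2}$, so the low-degree constraint $2a_1=c_1$ is not a free consequence of the downward recursion.

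That compatibility condition is genuinely restrictive, and the lemma as stated is in fact \emph{false} for general $p$ of degree $n\ge 2$. For $p(x)=x^3$ one has $q(x)=a_0+a_1x$; matching the $x^3$ and $x^2$ coefficients forces $a_1=-1$ and $a_0=0$, but then $2a_1=-2\neq 0=c_1$, so no such $q$ exists. Your parity argument is the correct repair: when $p$ is even, every $c_{2k+1}=0$, so the odd-index chain (seeded by $a_{n-3}=-c_{n-1}=0$) vanishes identically and $2a_1=c_1$ holds trivially. Since the lemma's only use in the paper is with $p=p_\ell^2$, and $p_\ell(-x)=(-1)^\ell p_\ell(x)$ makes $p_\ell^2$ even, your restricted version is exactly what the paper needs. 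In short, your proof is correct, and sharper than the paper's: the lemma really should carry the hypothesis that $p$ is even (or, equivalently, that the compatibility condition $2a_1=c_1$ holds), and you identified the precise reason why.
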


\begin{proof}
  If $q$ satisfies \eqref{qhelp},
	then
  \begin{align*}
    p(x) e^{-\frac 12 x^2}
    &= q(x) e^{-\frac 12 x^2}
    + x q'(x) e^{-\frac 12 x^2}
    - x^2 q(x) e^{-\frac 12 x^2}
    - (q(0) - p(0)) e^{-\frac 12 x^2}
  \end{align*}
from which the integral relation in the lemma holds.

  However, a polynomial solution of degree $n-2$
  of \eqref{qhelp},
	$q(x) = \sum_{k=0}^{n-2} b_k x^k$,
  may be found by finding iteratively the coefficients $(b_k)_{k=0}^{n-2}$,
  given that $p$ is in form $p(x) = \sum_{k=0}^n a_k x^k$.
\end{proof}

\section{Non-existence and non-uniqueness of MIW sequences}
\label{non-exist-sect}
We discuss that it is not given that there is a unique MIW sequence matched to boundary conditions for say any smooth, strictly positive $f$.

\begin{lemma}
  There is a smooth, strictly positive function $f$ on $\R$
  such that there is no MIW sequence of $f$ with $N \geq 2$ points
  which satisfies the left boundary condition at $-\infty$.
\end{lemma}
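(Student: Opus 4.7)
My plan is to exhibit a very simple explicit counterexample, namely $f(x) = e^{-x}$. This is smooth and strictly positive on $\R$, and has logarithmic derivative $f'(x)/f(x) = -1$ for every $x \in \R$.

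The key observation is that the left boundary condition at $-\infty$, as stated in the paper (using the convention $1/(x_1 - a) = 0$ when $a = -\infty$), reduces to
\[
  \frac{1}{x_2 - x_1} = \frac{f'(x_1)}{f(x_1)}.
\]
Since an MIW sequence is by definition strictly increasing, $x_2 - x_1 > 0$, which forces $f'(x_1)/f(x_1) > 0$. For the choice $f(x) = e^{-x}$, the right-hand side equals $-1$ at every real $x_1$, so no value of $x_1$ is admissible. Hence no MIW sequence with $N \geq 2$ points satisfying the left boundary condition at $-\infty$ can exist. The case $N = 2$ already rules itself out at the level of the boundary condition alone, and for $N > 2$ the same obstruction to choosing $x_1$ applies before the recursion \eqref{eq:hdw-seq} for interior indices is even considered.

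There is essentially no obstacle in this argument: the observation is just that the left $-\infty$ boundary condition imposes a positivity constraint on $f'/f$ at the first element, and this is violated globally by any strictly decreasing positive function. One could equally well take any smooth $f$ with $f'/f$ strictly negative on $\R$, for instance $f(x) = e^{-x} + 1$ (if one prefers a bounded example) or $f(x) = (1 + x^2)^{-1}$ restricted to positivity conditions, but $e^{-x}$ is the cleanest.
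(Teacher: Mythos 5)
Your proof is correct and takes essentially the same approach as the paper: both exhibit a smooth, strictly positive $f$ whose logarithmic derivative $f'/f$ is nonpositive everywhere, so the left boundary condition $\frac{1}{x_2 - x_1} = \frac{f'(x_1)}{f(x_1)}$ has no solution with $x_2 > x_1$. The paper takes $f \equiv 1$ (so $f'/f \equiv 0$) while you take $f(x) = e^{-x}$ (so $f'/f \equiv -1$), but the obstruction is identical; one minor caveat is that your parenthetical alternative $f(x) = (1+x^2)^{-1}$ does not actually work, since its logarithmic derivative $-2x/(1+x^2)$ is positive on $x<0$.
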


\begin{proof}
  The left boundary condition at $-\infty$ is
  $\frac{1}{x_2 - x_1} = \frac{f'(x_1)}{f(x_1)}$,
  where $x_2 > x_1$.
  This has no solution if $\frac{f'(x_1)}{f(x_1)} = 0$, for instance if $f(x) \equiv 1$.
\end{proof}

\begin{lemma}
  There is a smooth, strictly positive integrable function $f$ on $\R$
  there are two distinct two-element MIW sequences of $f$
  which satisfy the left boundary condition at $-\infty$,
  and the right boundary condition at $\infty$.
\end{lemma}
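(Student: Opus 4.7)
The plan is to reduce the boundary-condition system to a single scalar equation via symmetry, then design a smooth, symmetric, strictly positive, integrable density $f$ whose logarithmic derivative makes that equation have at least two positive roots. For a two-element sequence $(x_1,x_2)$ with $x_0=-\infty$ and $x_3=\infty$, the left and right boundary conditions collapse to
\begin{align*}
  \frac{1}{x_2-x_1} \ = \ \frac{f'(x_1)}{f(x_1)}, \qquad -\frac{1}{x_2-x_1} \ = \ \frac{f'(x_2)}{f(x_2)}.
\end{align*}
Writing $\eta(x):=f'(x)/f(x)$, this is simply $\eta(x_1)=-\eta(x_2)=1/(x_2-x_1)>0$. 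If we insist $f$ is symmetric about $0$, then $\eta$ is odd, and we look for solutions of the form $(x_1,x_2)=(-a,a)$ with $a>0$. The system reduces to the single equation $-\eta(a)=1/(2a)$.

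Writing $f(x)=Z^{-1}e^{-V(x)}$ for a smooth even potential $V$ with $V(x)\to\infty$ as $|x|\to\infty$, this becomes $V'(a)=1/(2a)$. The right-hand side is strictly decreasing on $(0,\infty)$ from $+\infty$ to $0$, so to obtain multiple intersections it suffices to make $V'$ suitably non-monotone on $(0,\infty)$ while keeping $V$ eventually large enough for $e^{-V}$ to be integrable. A concrete candidate is the Gaussian perturbation
\begin{align*}
  V(x)=\tfrac{1}{2} x^2 + \alpha\cos(\beta x), \qquad V'(a)=a-\alpha\beta\sin(\beta a),
\end{align*}
for which $e^{-V}\leq e^{\alpha}e^{-x^2/2}$ is integrable, so $f=e^{-V}/Z$ is automatically smooth and strictly positive on $\R$. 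For $\beta$ moderate and $\alpha\beta$ of order a few units---e.g.\ $\alpha=1$, $\beta=4$---the function $\phi(a):=V'(a)-1/(2a)$ oscillates with amplitude $\alpha\beta$ around the slowly varying profile $a-1/(2a)$, and one checks by sampling $\phi$ near the extrema $a\in\{\pi/8,\,3\pi/8,\,5\pi/8,\,7\pi/8\}$ of $\sin(\beta a)$ that $\phi$ changes sign at least twice on a compact subinterval of $(0,\infty)$.

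The intermediate value theorem then yields at least two distinct positive roots $a_1<a_2$ of $V'(a)=1/(2a)$, giving two distinct MIW pairs $(-a_1,a_1)$ and $(-a_2,a_2)$, each satisfying both boundary conditions. The only real obstacle is the explicit verification of the required sign changes of $\phi$ for a specific $(\alpha,\beta)$, which is a short routine computation; this can be made robust by choosing parameters so that the oscillatory contribution dwarfs the monotone part on the test interval.
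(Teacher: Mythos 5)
Your approach is correct but genuinely different from the paper's. The paper constructs $f$ \emph{locally}: it prescribes $f(x)=x$ near $1$, $f(x)=3-x$ near $2$, $f(x)=-x$ near $-1$, $f(x)=3+x$ near $-2$ so that both $(-2,-1)$ and $(1,2)$ satisfy the boundary conditions directly, then extends to a smooth, strictly positive, integrable function on all of $\R$ by a patching argument. You instead pick a single explicit global density $f=Z^{-1}e^{-V}$ with $V(x)=\frac12 x^2+\alpha\cos(\beta x)$, use evenness to reduce the two boundary equations to the single scalar equation $V'(a)=1/(2a)$ on $(0,\infty)$, and produce multiple roots by making $V'$ oscillate. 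Your reduction is correct: for a two-element sequence the recursion \eqref{eq:hdw-seq} is vacuous and the two boundary conditions collapse exactly to $\eta(x_1)=-\eta(x_2)=1/(x_2-x_1)$, which for $(x_1,x_2)=(-a,a)$ and odd $\eta=-V'$ becomes $V'(a)=1/(2a)$. Your construction buys an explicit closed-form example and in fact yields \emph{more} than two solutions (the paper's buys the freedom not to verify any numerics). The only thing you leave implicit is the final sign check, which does go through: with $\alpha=1$, $\beta=4$, $\phi(a)=a-4\sin(4a)-1/(2a)$ takes the approximate values $-4.9$, $+4.7$, $-2.3$, $+6.6$ at $a=\pi/8,3\pi/8,5\pi/8,7\pi/8$, so by the intermediate value theorem there are at least three positive roots, hence at least two distinct symmetric MIW pairs. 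Including this one-line evaluation would close the proposal.
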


\begin{proof}
  We will create an $f$ such that
  both $(-2, -1)$ and $(1, 2)$ are MIW sequences of $f$
  satisfying the left boundary condition at $-\infty$,
  and the right boundary condition at $\infty$.
  Taking care of the positive side first,
  we require that $\frac{1}{2 - 1} = \frac{f'(1)}{f(1)}$
  and that $-\frac{1}{2 - 1} = \frac{f'(2)}{f(2)}$.
  We can thus choose $f(x) = x$ in a neighborhood of one,
  and $f(x) = 3 - x$ is a neighborhood of two.

  Likewise to satisfy the requirements at $-2$ and $-1$,
  we can choose $f(x) = -x$ in a neighborhood of $-1$,
  and $f(x) = 3 + x$ is a neighborhood of $-2$.
  Because $f$ is specified only in the neighborhoods of finitely many points,
  and is smooth and strictly positive at those points,
  we can extend $f$ to a smooth, strictly positive integrable function on $\R$.
\end{proof}

\medskip
\noindent
{\bf Funding.}  This work was partially supported by ARO-W911NF-18-1-0311.

\end{document}